\newcommand*{\QED}{\null\nobreak\hfill\ensuremath{\blacksquare}}%
\newcommand{\diff}{\mathrm{d}}
\newcommand{\reall}{\mathbb{R}}
\newcommand{\tang}{\mathrm{Tan}}
\newcommand{\expm}{\mathrm{Exp}}
\newcommand{\logm}{\mathrm{Log}}
\newcommand{\dsp}{\mathcal{W}_{2}^{ac}}
\newcommand{\wsp}{\mathcal{W}_{2}}
\newcommand{\lnorm}{\left|\left|\left|}
\newcommand{\rnorm}{\right|\right|\right|}
\newcommand{\tdomain}{\mathcal{T}}
\newcommand{\wdomain}{\mathcal{S}}
\newcommand{\dwsp}{\wsp(\wdomain)}
\newcommand{\Log}{\mathrm{Log}}
\newcommand{\Tan}{\mathrm{Tan}}
\newcommand{\blind}{1}
\theoremstyle{plain}
\newtheorem{theorem}{\indent\large Theorem}
\newtheorem{prop}[theorem]{\indent\large Proposition}
\newtheorem{lemma}[theorem]{\indent\large Lemma}
\newtheorem{remark}{\indent\large Remark}
\newtheoremstyle{proof}%
{\item[\hskip\labelsep\theorem@headerfont\MakeUppercase ##1\theorem@separator]}%
{\item[\hskip \labelsep\theorem@headerfont\MakeUppercase ##1\ ##3\theorem@separator]}
\theoremstyle{proof}
\newtheorem{proof}{Proof}
\newtheorem{assum}{\indent}
\begin{document}

	\def\spacingset#1{\renewcommand{\baselinestretch}%
		{#1}\small\normalsize} \spacingset{1}
	
	\def\lin#1{\textcolor{red}{#1}}
	\def\zhou#1{\textcolor{red}{#1}}
	
	
	\if1\blind
	{
		\title{\bf Intrinsic Wasserstein Correlation Analysis}
		\author[1]{Hang Zhou}
		\author[2]{Zhenhua Lin}
		\author[1]{Fang Yao}
		\affil[1]{Department of Probability and Statistics, School of Mathematical Sciences and Center for Statistical Science, Peking University}
		\affil[2]{Department of Statistics and Applied Probability, National University of Singapore}
		\maketitle
	} \fi
	
	\if0\blind
	{
		\bigskip
		\bigskip
		\bigskip
		\begin{center}
			{\LARGE\bf Intrinsic Wasserstein Correlation Analysis}
		\end{center}
		\medskip
	} \fi
	
	\bigskip
	\begin{abstract}
	We develop a framework of canonical correlation analysis for distribution-valued functional data within the geometry of Wasserstein spaces. Specifically, we formulate an intrinsic concept of correlation between random distributions,  propose estimation methods based on functional principal component analysis (FPCA) and Tikhonov regularization, respectively, for the correlation and its corresponding weight functions, and  establish the minimax convergence rates of the estimators. The key idea is to extend the framework of tensor Hilbert spaces to distribution-valued functional data  to overcome the challenging issue raised by nonlinearity of Wasserstein spaces. The finite-sample performance of the proposed estimators is illustrated via  simulation studies, and the practical merit is demonstrated via a study on the association of distributions of brain activities between two brain regions.
		
	\end{abstract}
	
	\noindent%
	{\it Keywords:} Functional data analysis; tensor Hilbert space; parallel transport; minimax rate; random distribution.
	\vfill
	
	\newpage
	\spacingset{1} 
	\section{Introduction}\label{sec:intro}
	Thanks to rapid evolution of modern data collection  technologies, functional data emerge ubiquitously and the challenges of analyzing such data lead to a major line of research. For instance, various methodologies for multivariate data have been successfully extended to functional data, including functional principal components analysis (FPCA) \citep{yao2005jasa,hall2006}, linear regression  \citep{yao2005aos,hall2007,yuan2010,dou2012}, classification \citep{delaigle2012} and clustering \citep{james2003}. For a comprehensive treatment on functional data, we recommend the monographs  \cite{ramsay2005}, \cite{Ferraty2006}, \cite{Horvath2012}, \cite{hsing2015} and \cite{Kokoszka2017}. In addition, statistical analysis of functional data taking values in a nonlinear Riemannian manifold has gained increasing attention and been investigated by \cite{dai2018,lin2019,Dai2020} and \cite{lin2020}.
	
	In addition to manifold-valued data, probability distributions are nowadays commonly seen in practice, for example, arising from studies on mortality rates \citep{lin-metric-tv}, economics/housing \citep{chen2020}, healthcare \citep{lin-causal-inference}. The space of probability distributions defined in a common domain, referred to as Wasserstein space, is clearly not a linear space as a linear combination of two probability measures may not be a probability measure. In order to tackle the nonlinear structure of the Wasserstein space,  \cite{petersen2016} proposed a log quantile density (LQD) transformation to turn probability density functions to unconstrained functions. \cite{dai2021} adopted a square root transformation to map density functions into the positive orthant of a unit Hilbert sphere $\mathcal{S}^{\infty} $.  {However, none of these consider the more natural geometry that is compatible with optimal transport on the Wasserstein space. }    Since the Wasserstein space comes with a formal Riemannian structure \citep{ambrosio2008} that is compatible with optimal transport, it is natural to transform probability distributions via Riemannian logarithmic maps that have been well utilized \citep{lin2019,lin2020}. For instance, based on this idea, \cite{bigot2017} proposed a geodesic principal component analysis for data sampled from a Wasserstein space,  \cite{petersen2019bmka} studied Wasserstein covariance for multiple random densities, and \cite{chen2020} developed a class of regression models on Wasserstein space. 
	
	In this paper, we push further the frontier of statistical analysis on Wasserstein data into Wasserstein functional data that refer to functions taking values in a Wasserstein space. Such data, for example, could naturally arise from functional magnetic resonance imaging (fMRI) studies, where the distribution of brain signals in a region is longitudinally available for a period; see Figure \ref{fig:exa-WF} for an illustration. For statistical analysis of such data, in addition to the challenging issue of infinite dimensionality shared by the ordinary functional data analysis, a major challenge comes from the nonlinear nature of the Wasserstein space that creates  difficulties especially in modeling the covariance structure. Such nonlinearity is also presented in the Riemannian functional data analysis and is addressed by the intrinsic device of tensor Hilbert spaces proposed in \cite{lin2019}. However, although Wasserstein spaces have a geometric construction that is similar to the Riemannian structure, they are not Riemannian manifolds.  On one hand, when considering the measures defined on the real line, the Wasserstein space can be seen as the convex closed subset of $\mathcal{L}^{2}(0,1)$ formed by equivalence classes of quantile functions. Therefore, the  Fr\'echet mean can be expressed by the quantile functions and some regularity conditions on the Fr\'echet functional are no longer needed to ensure the existence and uniqueness of the empirical and population Fr\'echet mean. In addition, the  McCann’s interpolation defines a constant-speed geodesic on the Wasserstein space and the flatness property facilitates the theoretical analysis and asymptotic behavior of our proposed estimators.	On the other hand, since the tangent space at each point of Wasserstein space is an infinite-dimensional linear space, it is non-trivial to extend the framework in \cite{lin2019} proposed for (finite-dimensional) Riemannian manifolds to Wasserstein spaces. Given the aforementioned formal Riemannian structure of the Wasserstein space, we propose to circumvent the challenge by extending the device to Wasserstein functional data. 
		\begin{figure}[htbp]
		\centering
		\begin{minipage}[t]{0.48\textwidth }
			\centering
			\includegraphics[width=8.5cm]{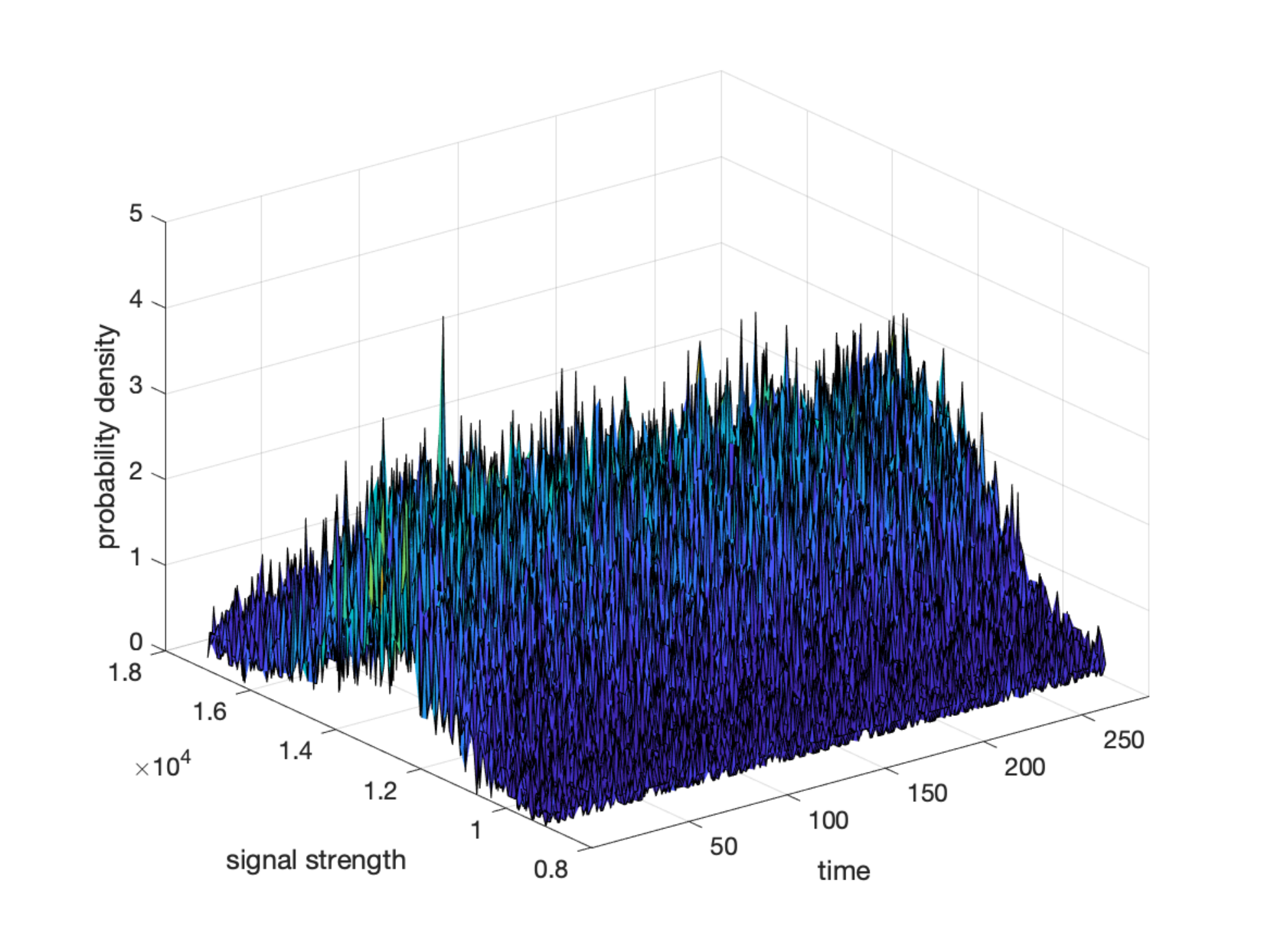}
		\end{minipage}
		\begin{minipage}[t]{0.48\textwidth }
			\centering
			\includegraphics[width=8.5cm]{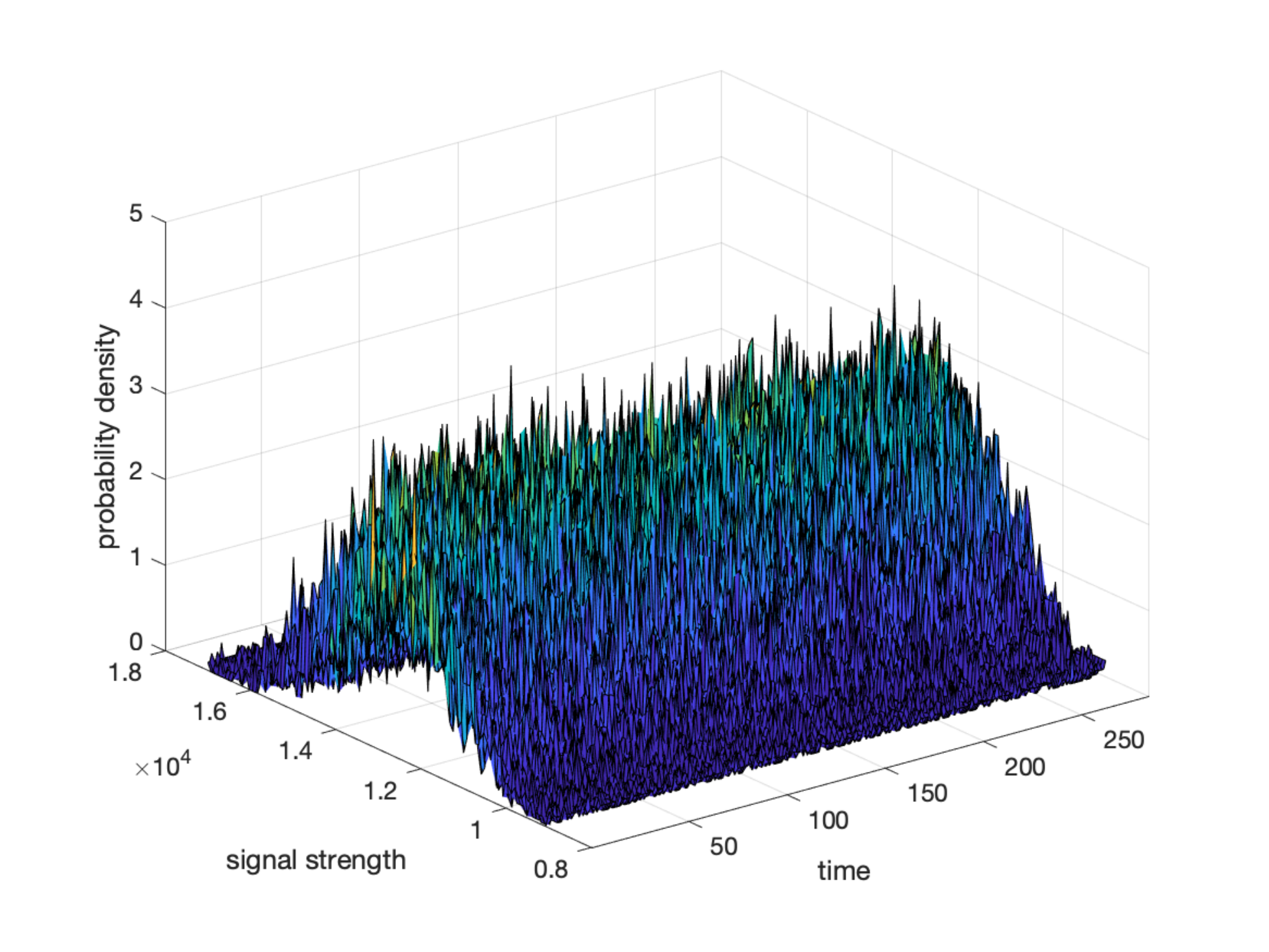}
		\end{minipage}
		\caption{The observed longitudinal densities of signal strength in Caudate nucleus (left) and Putamen area (right) from resting-state fMRI examinations.}
		\label{fig:exa-WF}
	\end{figure}

	Specifically, we investigate correlation analysis for Wasserstein functional data. Canonical correlation is one of the key tools for statistical analysis, and has been extensively studied for multivariate data and Euclidean functional data \citep{he2003,eubank2008,yang2011,lian2014, zhou2020}, but is yet to be explored for Wasserstein functional data. Our main  contribution is to formulate an intrinsic concept of correlation between random distributions, to propose an FPCA-based estimator and a Tikhonov regularized estimator, respectively, for the correlation and its corresponding weight functions, to establish the minimax convergence rates of the estimators, and to extend the framework of tensor Hilbert spaces to Wasserstein functional data in order to overcome the challenging issue raised by nonlinearity of Wasserstein spaces. In addition, our arguments for the minimax rate can  be straightforwardly extended to the  setting of Euclidean functional data, while existing works seem to lack of rigorous arguments; see Remark \ref{rem:minimax} in Appendix \ref{sec:prof-GCA} for details.
	

	The rest of the paper is organized as follows. We develop the foundational framework for Wasserstein space valued functional data in Section \ref{sec:WFDA}. Intrinsic Wasserstein correlation analysis is presented in Section \ref{sec:GCA}. The simulation studies are offered in Section \ref{sec:sim} and an application to fMRI dataset can be found in Section \ref{sec:data}. The proofs for the Theorems are collected together in the Appendix.

	\section{Wasserstein Functional Data}\label{sec:WFDA}
	
	In this section, we shall give a synopsis of the functional data valued in Wasserstein space. We first introduce the Wasserstein metric and geometry on a density class, which has similar structures to a Riemannian manifold. Based on the concept of tensor Hilbert space, we propose a new framework for functional data on Wasserstein space and discuss its properties. Then the  random elements perspective in tensor Hilbert space are investigated, including an estimation procedure for the mean surface and covariance operator. Finally, the asymptotic properties of the proposed estimators are studied. The proofs for the theorems in this section can be found in section \ref{sec:prof-WFDA}.
	
	\subsection{Wasserstein Metric and Geometry}\label{sec:intro-wsp}
	Let $\mathcal{W}_{2}(\reall)$ be the collection of probability measures on the real line $\reall$ with  finite second-order moments, that is,
	\begin{equation}\label{def:w2}
		\mathcal{W}_2(\reall)=\bigg\{\mu\in\mathcal{P}(\mathbb{R}): \int_\mathbb{R}|x|^{2} d\mu(x)<\infty\bigg\},
	\end{equation}
	where $\mathcal{P}(\mathbb{R}) $ is the set of probability measures on $\reall$. For $\mu\in\mathcal{W}_2(\reall)$ and a $\mu$-measurable map $T:\reall\rightarrow\reall$, the push-forward measure of $\mu$ through $T$ is defined by $(T\#\mu)(A)= \mu\{x \in \reall \mid T(x) \in A\}$, for $A\in \mathcal{B}(\reall)$, where $\mathcal B(\reall)$ denotes the Borel space of $\reall$. In the sequel, we use $F_{\mu}$ and $F^{-1}_{\mu}$ to denote the distribution function and the right-continuous quantile function of $\mu$, respectively. If $\mu$ is absolutely continuous to the Lebesgue measure, its density function is denoted by $f_{\mu}$. The Wasserstein distance between two measures ${\mu},{\nu}\in \mathcal{W}_{2}(\reall)$ is defined by 
	\begin{equation}\label{def:wass-dis}
		d({\mu},{\nu}):=\inf\bigg\{\int_{\reall^2}|x_{1}-x_{2}|^2\diff\gamma(x_1,x_2): \gamma\in\Gamma(\mu,\nu)\bigg\},
	\end{equation} 
	where $\Gamma(\mu,\nu)$ is the class of joint probability measures with marginal measures $\mu$ and $\nu$. The $\mathcal{W}_{2}(\reall)$ space endowed with the Wasserstein distance, denoted by the Wasserstein space $(\mathcal{W}_{2}(\reall),d) $, is a separable and complete metric space \citep{ambrosio2008}. 
	
	The minimization problem \eqref{def:wass-dis} is known as the Kantorovich’s formulation \citep{kantorovich2006} that is a relaxation of the Monge problem 
	\begin{equation}\label{def:monge}
		\inf\bigg\{\int_{\reall}\{T^{\ast}(u)-u\}^2\diff \mu(u),\text{ such that } T\#\mu=\nu\bigg\}.
	\end{equation}
	Unlike the  Kantorovich’s formulation, the Monge problem \eqref{def:monge} can be ill-posed when $\mu$ is a Dirac mass and $\nu$ has no atom, under which there is no transport map $T$ such that $T\#\mu=\nu$. If $\mu$ is absolutely continuous to Lebesgue measure, which implies $\mu$ has no atom and $F_{\mu}$ is continuous, the Monge problem \eqref{def:monge} is equivalent to \eqref{def:wass-dis} and has a unique solution  $T=F_{\nu}^{-1}\circ F_{\mu}:=T_{\mu}^{\nu}$ \citep{gangbo1996,petersen2019bmka}. This solution, called  the optimal transport map from $\mu$ to $\nu$, also induces a geodesic between $\mu$ and $\nu$;  here, a curve $\eta(t):I\rightarrow \mathcal W_2(\reall)$ parameterized by an interval $I\subset\reall$ is a geodesic if $d(\eta(t),\eta(t+\epsilon))=a\epsilon$ for a constant $a>0$ and for all $t\in I$ and all sufficiently small $\epsilon>0$. Specifically,  the McCann’s interpolation \citep{mccann1997}
	\begin{equation}\label{def:mcc}
		\mu_{t}=\left[\mathbf{id}+t\left(T_{\mu}^{\nu}-\mathbf{id}\right)\right] \# \mu: [0,1]\rightarrow \mathcal{W}_{2}(\reall)
	\end{equation}
	is a geodesic connecting $\mu$ to $\nu$ \citep{ambrosio2008},  where $\mathbf{id}$ denotes the identity map.
	
	Although $(\mathcal W_2(\reall),d)$ is not a Riemannian manifold, it can be endowed with a formal Riemannian structure, in which geometric concepts essential to statistical analysis, such as tangent spaces, Riemannian exponential maps and logarithmic maps can be defined \citep{bigot2017,chen2020}. This motivates us to extend the framework of \cite{lin2019} to the Wasserstein space, as follows. Define  
	\begin{equation}\label{def:Lp-spc}
		\begin{aligned}
			\mathcal{L}^{2}(\mu ; \reall):=&\bigg\{ T: \reall \rightarrow \reall\quad \mu \text {-measurable }: \int_{\reall }|T(x)|^2 \diff \mu(x)<+\infty\bigg\},
		\end{aligned}
	\end{equation}
	which is a separable Hilbert space for any probability measure $\mu$, where the inner product is given by $\langle T_{1},T_{2} \rangle_{\mu}=\int_{\reall}T_{1}(u)T_{2}(u)\diff \mu(u)$ for $T_{1},T_{2}\in \mathcal{L}^{2}(\mu ; \reall)$.
	The tangent space at $\mu$ is defined as the closure of $\tang_{\mu}^\circ=\left\{t\left(F^{-1}_{\nu}\circ F_{\mu} -\mathbf{id}\right): t>0, \nu \in \mathcal{W}_{2}(\mathbb{R})\right\}$ within ${\mathcal{L}^{2}(\mu;\mathbb{R})}$, i.e.,
	\begin{equation}\label{def:tan}
		\tang _{\mu}=\overline{\tang_{\mu}^\circ}^{\mathcal{L}^{2}(\mu;\mathbb{R})}.
	\end{equation}
	It follows from the definition that $\tang_{\mu}$ is a complete and separable subspace of  $\mathcal{L}^{2}(\mu,\reall)$. $\tang_{\mu} $ is also a linear space \citep[Chapter 2.3,][]{panaretos2020}, and thus a separable Hilbert space endowed with the inner product $\langle \cdot,\cdot \rangle_{\mu}$. The exponential map $\mathrm{Exp}_{\mu}:\rm{Tan}_{\mu}\rightarrow \mathcal{W}_{2}(\reall) $ and the corresponding logarithmic map $\rm{Log}_{\mu}: \mathcal{W}_{2}(\reall)\rightarrow\rm{Tan}_{\mu} $ at $\mu$ are defined by
	\begin{equation}\label{eq:logmap}
		\rm {Exp}_{\mu}(T)=(T+\mathbf{id})\#\mu \quad\text{ and }\quad \rm Log_{\mu}(\nu)=F_{\nu}^{-1}\circ F_{\mu}-\mathbf{id},
	\end{equation}
	respectively, 
	for $\rm T\in \rm Tan_{\mu}$ and $\mu,\nu \in \mathcal{W}_{2}(\reall)$. Unlike ordinary Riemannian exponential maps, the exponential map at $\mu$ defined above may not be a local homeomorphism between a neighborhood of $\mu$ and a neighborhood of the origin of $\rm{Tan}_\mu$, which shows that $\mathcal W_2(\reall)$ is not a genuine Riemannian manifold. Nevertheless, Theorem 2.2 in \cite{bigot2017} shows that the exponential map $\expm_{\mu}$ restricted to the image of the logarithmic map $\logm_{\mu}$ is an isometric homeomorphism with  $\logm_{\mu}$ being its inverse, and this is sufficient for our statistical analysis.
	
	As in most Wasserstein data the observed distributions have density functions supported in a compact domain $\wdomain$, in this paper we restrict our attention to $\dwsp$ that contains all probability measures  supported in $\wdomain$.
	
	\subsection{Distribution-valued Functional Data} 
	
	Let $X(t)$ be a random process indexed by $t\in\tdomain$ and taking values in $\dwsp$, where $\tdomain$ is a compact subset of $\reall$,  that is, for each $t\in\tdomain$, $X(t)$ is a random measure on $\wdomain$; in the statistical analysis, the random process $X$ serves as a prototype of the observed distribution-valued functional data. To quantify the first-order behaviors of $X$, as in \cite{dai2018,lin2019}, we utilize the concept of the Fr\'echet mean \citep{frechet1948}, which in our context is defined by 
	$$\mu(t)=\underset{p\in\dwsp}{\arg\min}F(p,t)$$
	with 
	\begin{equation}\label{def:intmeanF}
		F(p,t)=\mathbf{E}d^{2}(X(t),p), \quad p\in\dwsp,\,t\in\mathcal{T}.
	\end{equation}
	According to Lemma S2 of \cite{lin-causal-inference}, the Fr\'echet mean function exists and is unique for each $t\in\tdomain$.
	
	
	To characterize the second-order structure of $X$, we exploit the formal Riemannian structure of $\dwsp$ to consider the process $\Log_{\mu(t)}X(t)$ indexed by $t$ and taking values in the vector space $\Tan_{\mu(t)}$ for each $t\in\tdomain$. Like \cite{lin2019}, we treat the process $\Log_{\mu(\cdot)}X(\cdot)$, that is also denoted by $\Log_\mu X$ in the sequel for simplicity, as a vector field along the mean curve $\mu$, and then further view it as a random element in the space of vector fields along $\mu$,
	\begin{equation}\label{def:tensor}
		\mathscr{T}(\mu):=\left\{Z: Z(\cdot) \in \operatorname{Tan}_{\mu (\cdot)}, \int\langle Z(t), Z(t)\rangle_{\mu(t)} \mathrm{~d} t<\infty\right\}.
	\end{equation}
	For Riemannian functional data, i.e., data of functions taking value in a (finite-dimensional) Riemannian manifold, the space $\mathscr{T}(\mu)$, termed a tensor Hilbert space and shown by \cite{lin2019}, is a separable Hilbert space. Now we extend this result to the Wasserstein space $\dwsp$ by endowing $\mathscr{T}(\mu)$ with the inner product $\llangle Z_{1},Z_{2}\rrangle_{ \mu}:=\int \langle Z_{1}(t),Z_{2}(t) \rangle_{\mu(t)}\diff t$ and the induced norm $\|\cdot\|_{\mu}$, where we recall that  $\langle\cdot,\cdot\rangle_{\mu(t)}$ denotes the inner product in $\tang_{\mu(t)}$. 
	\begin{theorem}\label{thm:tensor}
		$\mathscr{T}(\mu)$ is a separable Hilbert space with the inner product $\llangle\cdot,\cdot\rrangle_{\mu}$.
	\end{theorem}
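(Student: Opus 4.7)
The plan is to reduce the problem to a standard Bochner $L^2$ space by exploiting the quantile representation that underlies the Wasserstein geometry on $\reall$. First I would verify the inner-product axioms for $\llangle\cdot,\cdot\rrangle_{\mu}$. Bilinearity and symmetry are inherited pointwise from $\langle\cdot,\cdot\rangle_{\mu(t)}$, while definiteness holds modulo the usual equivalence of sections that agree Lebesgue-a.e.\ on $\tdomain$; the only nontrivial point here is to check that $t\mapsto \langle Z_1(t),Z_2(t)\rangle_{\mu(t)}$ is measurable for any two admissible sections, which follows from the Cauchy--Schwarz inequality combined with the measurability imposed on elements of $\mathscr{T}(\mu)$.

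The key step is to construct an explicit isometric embedding of $\mathscr{T}(\mu)$ into a concrete separable Hilbert space. For each $t\in\tdomain$ define $\Phi_t:\tang_{\mu(t)}\to\mathcal L^{2}(0,1)$ by $\Phi_t(T)=T\circ F_{\mu(t)}^{-1}$. A change of variables yields
\begin{equation*}
\langle \Phi_t(T_1),\Phi_t(T_2)\rangle_{\mathcal L^2(0,1)}
=\int_{0}^{1}T_1(F_{\mu(t)}^{-1}(s))T_2(F_{\mu(t)}^{-1}(s))\,\diff s
=\langle T_1,T_2\rangle_{\mu(t)},
\end{equation*}
so $\Phi_t$ is an isometry from $\tang_{\mu(t)}$ onto a closed subspace $H_t\subseteq\mathcal L^2(0,1)$. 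Lifting this pointwise map produces $\Phi:\mathscr T(\mu)\to\mathcal L^2(\tdomain;\mathcal L^2(0,1))$ defined by $(\Phi Z)(t)=Z(t)\circ F_{\mu(t)}^{-1}$, and Fubini together with the pointwise isometry gives $\llangle Z_1,Z_2\rrangle_{\mu}=\langle \Phi Z_1,\Phi Z_2\rangle_{\mathcal L^2(\tdomain;\mathcal L^2(0,1))}$. Thus $\Phi$ is an isometric embedding into a known separable Hilbert space, immediately yielding separability.

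It remains to show that $\Phi(\mathscr{T}(\mu))$ is closed, so that $\mathscr{T}(\mu)$ inherits completeness. Given a Cauchy sequence $\{Z_n\}\subset\mathscr{T}(\mu)$, the image $\{\Phi Z_n\}$ converges in $\mathcal L^2(\tdomain;\mathcal L^2(0,1))$ to some $g$, and after passing to a subsequence, $(\Phi Z_n)(t)\to g(t)$ in $\mathcal L^2(0,1)$ for a.e.\ $t$. Since $\tang_{\mu(t)}$ is closed in $\mathcal L^{2}(\mu(t);\reall)$ by its construction in \eqref{def:tan}, and $\Phi_t$ is a linear isometry onto $H_t$, one gets $g(t)\in H_t$ a.e.; setting $Z(t)=\Phi_t^{-1}(g(t))$ (and $Z(t)=0$ on the null set) defines an element of $\mathscr{T}(\mu)$ with $\Phi Z=g$.

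The main obstacle I anticipate is the measurability bookkeeping in the last step: one needs the section $t\mapsto \Phi_t^{-1}(g(t))$ to be measurable as a vector field along $\mu$, not merely well-defined pointwise. This is handled by noting that the inverse $\Phi_t^{-1}$ is realised by composition with $F_{\mu(t)}$, so measurability of $t\mapsto F_{\mu(t)}$ (a consequence of the continuity of $\mu$ under $d$ and the monotone structure of distribution functions) transfers joint measurability from $g$ to $Z$. Once this is in place, the isometry identifies $\mathscr{T}(\mu)$ with a closed subspace of a separable Hilbert space, which completes the proof.
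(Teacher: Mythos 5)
Your proof is correct, but it takes a genuinely different route from the paper. You handle separability and completeness simultaneously through one concrete device special to $\wsp(\reall)$: the quantile composition $Z\mapsto Z(\cdot)\circ F_{\mu(\cdot)}^{-1}$, which is a pointwise isometry of $\tang_{\mu(t)}$ onto a closed subspace of $\mathcal{L}^{2}(0,1)$ and hence an isometric embedding of $\mathscr{T}(\mu)$ into the Bochner space $\mathcal{L}^{2}(\tdomain;\mathcal{L}^{2}(0,1))$; separability is inherited and completeness follows from closedness of the image, with the inverse realized canonically as composition with $F_{\mu(t)}$ (using $F_{\mu(t)}^{-1}\circ F_{\mu(t)}=\mathbf{id}$ $\mu(t)$-a.e.). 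The paper instead argues abstractly: completeness is proved directly inside $\mathscr{T}(\mu)$ by a Riesz--Fischer-type argument (a rapidly Cauchy subsequence, pointwise limits in each $\tang_{\mu(t)}$, and Fatou's lemma), and separability is proved separately by choosing an orthonormal frame $\{\mathbf{\Phi}_k(t,\cdot)\}$ along $\mu$ and exhibiting a Hilbertian isomorphism of $\mathscr{T}(\mu)$ with $\mathcal{L}^{2}(\tdomain,l^{2})$. Your approach buys concreteness and avoids the measurable-selection subtlety implicit in choosing a frame measurably in $t$, since $F_{\mu(t)}^{-1}$ is canonical; the paper's argument buys generality, as it uses nothing about the real-line quantile structure and would apply verbatim to any measurable family of separable Hilbert spaces (mirroring the Riemannian tensor Hilbert space construction it extends). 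One small caution: you justify measurability of $t\mapsto F_{\mu(t)}$ by ``continuity of $\mu$ under $d$,'' which is not assumed; measurability of the curve $\mu$ suffices (Wasserstein convergence gives pointwise convergence of distribution functions at continuity points), and this is all that is needed for your transfer-of-measurability step, so the argument stands at the same level of rigor as the paper's own treatment of measurability.
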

	
	
	Since $\mathcal{S}$ is compact, then $\dwsp$ is compact and thus $\mathbf{E}\left\|\logm_{\mu}X \right\|_{\mu}^2<\infty$, according to Theorem 7.4.2 in \cite{hsing2015}, $\logm_{\mu}X$ can be viewed as a random element in $\mathscr{T}(\mu)$. The auto-covariance operator $\mathbf{C}: \mathscr{T}(\mu) \rightarrow \mathscr{T}(\mu)$ for $X$ can be defined by
	\begin{equation}\label{def:covop}
		\llangle\mathbf{C} U, V\rrangle_{\mu}:=\mathbf{E}\left(\left\llangle\mathrm{Log} _{\mu} X, U\right\rrangle_{\mu}\left\llangle\mathrm{Log} _{\mu} X, V\right\rrangle_{\mu}\right),\quad \text { for } U, V \in \mathscr{T}(\mu).
	\end{equation}
	This operator is a nonnegative-definite trace-class operator with the following eigendecomposition \cite[Theorem 7.2.6,][]{hsing2015}
	\begin{equation}\label{def:covdec}
		\mathbf{C}=\sum_{k=1}^{\infty}\lambda_{k}\mathbf{\Phi}_{k}\otimes\mathbf{\Phi}_{k}
	\end{equation}
	with eigenvalues $\lambda_{1}>\lambda_{2}>\cdots>0$, that are assumed of multiplicity 1 without loss of generality, and orthonormal eigenelements $ \mathbf{\Phi}_{k}$ that form a complete orthonormal system for $\mathscr{T}(\mu) $. In addition, the  process $\mathrm{Log} _{\mu} X $ admits the following  Karhunen--Lo\`eve expansion 
	\begin{equation}\label{def:KLexp}
		\mathrm{Log}_{\mu}X= \sum_{k=1}^{\infty}\xi_{k}\mathbf{\Phi}_{k}
	\end{equation}
	with $\xi_{k}:=\llangle\mathrm{Log}_{\mu}X, \mathbf{\Phi}_{k} \rrangle_{\mu}$ being uncorrelated and centered random variables.

	Given a sample of independently and identically distributed (i.i.d.) copies $X_{1},\cdots,X_{n}$ of $X$,  the Fr\'echet mean function $\mu$ is estimated by its sample version 
	$$\hat{\mu}(t)=\underset{p\in\wsp}{\arg\min}\, F_{n}(p,t)$$
	with $$F_{n}(p,t)=\frac{1}{n}\sum_{i=1}^{n} d^{2}(X_{i}(t),p), \quad p\in\wsp,\,t\in\mathcal{T}. $$
	According to \cite{chen2020} and Lemma S2 of \cite{lin-causal-inference}, 
	\begin{equation}\label{def:muest}
		F_{\mu(t)}^{-1}=\mathbf{E}F_{X(t)}^{-1}\quad\text{ and }\quad F_{\hat{\mu}(t)}^{-1}=\frac{1}{n}\sum_{i=1}^{n}F_{X_{i}(t)}^{-1}\text{ for each }t\in\mathcal{T}.
	\end{equation}
	Similarly, the auto-covariance operator is estimated by its sample version
	\begin{equation}\label{def:covest}
		\hat{\mathbf{C}}=\frac{1}{n}\sum_{i=1}^{n} (\mathrm{Log}_{\hat\mu}X_{i})\otimes(\mathrm{Log}_{\hat\mu}X_{i}),
	\end{equation}
	which admits the eigendecomposion  $\hat{\mathbf{C}}=\sum_{k=1}^{\infty}\hat\lambda_{k}\mathbf{\hat\Phi}_{k}\otimes\mathbf{\hat\Phi}_{k} $ for $\hat\lambda_{1}>\hat\lambda_{2}>\cdots>0$ with the estimated eigenvalues $\hat\lambda_{k}$ and eigenfunctions $\mathbf{\hat\Phi}_{k}$. 
	
	To assess the estimation quality, for the Fr\'echet mean function, one may use the integrated squared error $\int d^2(\hat\mu(t),\mu(t))\diff t$. It turns out to be challenging to quantify the discrepancy between $\hat{\mathbf C}$ and $\mathbf C$, as when $\hat\mu$ and $\mu$ are not identical, the spaces $\mathscr{T}(\hat\mu)$ and $\mathscr{T}(\mu)$ are distinct Hilbert spaces. For Riemannian functional data, \cite{lin2019} addressed this problem by the parallel transport induced by the Levi--Civita connection that is intrinsic to the Riemannian manifold under consideration. Fortunately, as shown in \cite{ambrosio2008},   the Wasserstein space also has also a similar geometric structure that is defined via the Benamou--Brenier formula \citep{benamou2000}, which can be adopted  to $\dwsp$ as follows.

%
	First, we begin with the parallel transport of tangent vectors at an element of $\dwsp$ to another element. To this end, let $\nu$ and $\nu^\prime$ be elements in $\dwsp$,  a parallel transport operator can be defined between the entire Hilbert spaces $\mathcal{L}^{2}(\nu;\mathcal{S})$ and $\mathcal{L}^{2}(\nu^\prime;\mathcal{S})$ \citep{chen2020}, that is, $\mathrm{P}_{\nu}^{\nu^{\prime}}u:=u\circ F_{\nu}^{-1}\circ F_{\nu^{\prime}}$ for $u\in\mathcal{L}^{2}(\nu;\mathcal{S}) $, where $ F_{\nu}^{-1}$ and $F_{\nu^{\prime}} $ are the quantile function of $\nu$ and distribution function of $\nu'$. Assuming that $\nu$ is atomless, the parallel transport $\mathcal{P}_{\nu}^{\nu^{\prime}} $ from tangent space $\tang_{\nu}$ to $\tang_{\nu'}$ is defined by $\mathrm{P}_{\nu}^{\nu^{\prime}}$ restricted to $\tang_{\nu}$, i.e., $\mathcal{P}_{\nu}^{\nu^{\prime}}=\mathrm{P}_{\nu}^{\nu^{\prime}}|_{\tang_{\nu}} $.

%
	
	To extend the concept of parallel transport defined in \cite{chen2020} to tensor Hilbert spaces, let $\mu(\cdot),\mu'(\cdot),\nu(\cdot)$ and $\nu'(\cdot)$ be measurable curves on $\dwsp$. For $U\in \mathscr T(\mu)$, the parallel transport of $U$ from $\mu$ to $\mu^\prime$ is defined by $(\mathcal{P}_{\mu}^{\mu^\prime}U)(\cdot)=\mathcal{P}_{\mu(\cdot)}^{\mu^\prime(\cdot)}(U(\cdot))$. {Let $\mathfrak{B}(\mu,\nu)$ denotes the set of all bounded linear operators on $\mathscr{T}(\mu) $ mapping to $\mathscr{T}(\nu) $, which is a Banach space with the norm $|||\mathbf{A} |||_{\mathfrak{B}(\mu,\nu)}=\sup_{U\in\mathscr{T}(\mu),\|U\|_{\mu}=1}\|\mathbf{A}U\|_{\nu}$ \cite[Theorem 3.1.3]{hsing2015}.}  The operator $\mathcal{P}_{\mu}^{\nu}$ also gives rise to a mapping $\mathcal{P}_{\mathfrak{B}(\mu,\nu) }^{\mathfrak{B}(\mu',\nu')} $ from $\mathfrak{B}(\mu,\nu) $ to $\mathfrak{B}(\mu',\nu')$, defined by $\left(\mathcal{P}_{\mathfrak{B}(\mu,\nu) }^{\mathfrak{B}(\mu',\nu')}\mathbf{A}\right)V=\mathcal{P}_{\nu}^{\nu'}\mathbf{A}(\mathcal{P}_{\mu'}^{\mu}V ) $ for $\mathbf{A}\in\mathfrak{B}(\mu,\nu) $ and $V\in \mathscr{T}{(\mu')}$.
	\begin{prop}\label{prop:trans}
		Let  $\mu, \mu', \nu, \nu'$ be measurable curves on $\dwsp$ and assume for each $t\in\tdomain$, $\mu(t), \mu'(t), \nu(t), \nu'(t)$ are atomless; $U,U'\in \mathscr{T}(\mu)$, $V\in \mathscr{T}(\nu)$, $\mathbf{A}\in\mathfrak{B}(\mu,\nu)$, and $\mathbf{B}\in\mathfrak{B}(\mu',\nu')$.
		\begin{enumerate}[label=\textup{(\alph*)}]
			\item $\mathcal{P}_{\mu(t)}^{\nu(t)}$ is a unitary transportation form $\tang_{\mu(t)}$ to $\tang_{\nu(t)}$ and the adjoint operator of $\mathcal{P}_{\mu(t)}^{\nu(t)}$ is $\mathcal{P}_{\nu(t)}^{\mu(t)}$. 
			\item $\left\langle\mathcal{P}_{\mu(t)}^{\nu(t)} u,v\right\rangle_{\nu(t)} =\left\langle u,\mathcal{P}_{\nu(t)}^{\mu(t)} v\right\rangle_{\mu(t)}$ and $\left\|\mathcal{P}_{\mu(t)}^{\nu(t)} u-v \right\|_{\nu(t)}=\left\|u-\mathcal{P}_{\nu(t)}^{\mu(t)} v \right\|_{\mu(t)} $ for $u\in \tang_{\mu(t)}$ and $v\in\tang_{\nu(t)}$.
			\item $\llangle U, U'\rrangle_{\mu}=\left\llangle\mathcal{P}_{\mu}^{\nu} U,\mathcal{P}_{\mu}^{\nu} U'\right\rrangle_{\nu} $, $\left\llangle\mathcal{P}_{\mu}^{\nu} U,V\right\rrangle_{\nu} =\left\llangle U,\mathcal{P}_{\nu}^{\mu} V\right\rrangle_{\mu}$ and $\left\|\mathcal{P}_{\mu}^{\nu} U-V \right\|_{\nu}=\left\|U-\mathcal{P}_{\nu}^{\mu} V \right\|_{\mu} $.
			\item $\mathcal{P}_{\nu}^{\nu'}(\mathbf{A}U)=\left(\mathcal{P}_{\mathfrak{B}(\mu,\nu) }^{\mathfrak{B}(\mu',\nu')}\mathbf{A}\right)(\mathcal{P}_{\mu}^{\mu'} U) $.
			\item $\lnorm\mathcal{P}_{\mathfrak{B}(\mu,\nu) }^{\mathfrak{B}(\mu',\nu')} \mathbf{A}-\mathbf{B}\rnorm_{\mathfrak{B}(\mu',\nu') }=\lnorm\mathbf{A}-\mathcal{P}_{\mathfrak{B}(\mu',\nu') }^{\mathfrak{B}(\mu,\nu)} \mathbf{B}\rnorm_{\mathfrak{B}(\mu,\nu) }$.
			\item $\mathcal{P}_{\mathfrak{B}(\mu,\nu)}^{\mathfrak{B}(\mu',\nu') }\sum_{k}c_{k}\mathbf{\Phi}_{\mu,k}\otimes\mathbf{\Phi}_{\nu,k}=\sum_{k }c_{k}\left(\mathcal{P}_{\mu }^{\mu'}\mathbf{\Phi}_{\mu,k}\right)\otimes \left(\mathcal{P}_{\nu }^{\nu'}\mathbf{\Phi}_{\nu,k}\right) $, where $c_{k}$ are scalar constants, $\mathbf{\Phi}_{\mu,k}\in \mathscr{T}(\mu)$, and $\mathbf{\Phi}_{\nu,k}\in \mathscr{T}(\nu) $.
		\end{enumerate}
	\end{prop}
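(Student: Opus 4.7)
The plan is to establish the pointwise statements (a) and (b) via change-of-variables computations on $\reall$, then to lift these to the tensor Hilbert spaces of (c) by integration in $t$, and finally to derive (d)--(f) by routine operator-theoretic manipulations. Atomlessness of $\mu(t),\nu(t)$ enters only to legitimize the pushforward identities $F_{\nu(t)}\#\nu(t)=\mathrm{Unif}(0,1)$ and $T\#\nu(t)=\mu(t)$ for $T=F_{\mu(t)}^{-1}\circ F_{\nu(t)}$, together with the analogous statements with reversed roles; these are the only regularity facts about the measures the argument exploits.

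For (a), the isometry of $\mathcal{P}_{\mu(t)}^{\nu(t)}$ follows from
\begin{equation*}
\|\mathcal{P}_{\mu(t)}^{\nu(t)} u\|_{\nu(t)}^2=\int_{\wdomain} (u\circ F_{\mu(t)}^{-1}\circ F_{\nu(t)})^2\diff\nu(t)=\int_0^1(u\circ F_{\mu(t)}^{-1})^2\diff s=\|u\|_{\mu(t)}^2,
\end{equation*}
using the two pushforward identities, while surjectivity onto $\tang_{\nu(t)}$ is obtained by checking that the generating family $\{s(F_\rho^{-1}\circ F_{\mu(t)}-\mathbf{id}):s>0,\rho\in\mathcal W_2(\reall)\}$ for $\tang_{\mu(t)}$ is carried onto the corresponding family for $\tang_{\nu(t)}$, with $\mathcal{P}_{\nu(t)}^{\mu(t)}$ as its two-sided inverse by direct substitution. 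The adjoint identity in (b) comes from the substitution $y=T(x)$ with $T\#\nu(t)=\mu(t)$, giving $\int(u\circ T)v\diff\nu(t)=\int u(y)(v\circ T^{-1})(y)\diff\mu(t)=\langle u,\mathcal{P}_{\nu(t)}^{\mu(t)}v\rangle_{\mu(t)}$; the norm identity in (b) then follows by expanding $\|\mathcal{P}_{\mu(t)}^{\nu(t)}u-v\|_{\nu(t)}^2$ and invoking isometry and the adjoint relation.

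Part (c) is obtained by Fubini from (a) and (b) after integrating against $t$, interchanging integration in $t$ with the pointwise inner products. For (d), (a) and (c) imply $\mathcal{P}_{\mu'}^{\mu}\mathcal{P}_{\mu}^{\mu'}=\mathrm{Id}$, so setting $V=\mathcal{P}_{\mu}^{\mu'}U$ in the definition $\bigl(\mathcal{P}_{\mathfrak{B}(\mu,\nu)}^{\mathfrak{B}(\mu',\nu')}\mathbf{A}\bigr)V=\mathcal{P}_{\nu}^{\nu'}\mathbf{A}(\mathcal{P}_{\mu'}^{\mu}V)$ collapses the right-hand side to $\mathcal{P}_{\nu}^{\nu'}\mathbf{A}U$.

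For (e), expand the operator norm on the left, apply the isometry identity $\|\mathcal{P}_{\nu}^{\nu'}W-\mathbf{B}V\|_{\nu'}=\|W-\mathcal{P}_{\nu'}^{\nu}\mathbf{B}V\|_{\nu}$ from (c) with $W=\mathbf{A}\mathcal{P}_{\mu'}^{\mu}V$, and substitute $U=\mathcal{P}_{\mu'}^{\mu}V$, which preserves norms by (c), to convert the supremum over unit $V\in\mathscr T(\mu')$ into a supremum over unit $U\in\mathscr T(\mu)$. For (f), apply the operator to an arbitrary $V\in\mathscr T(\mu')$, pull $\mathcal{P}_\nu^{\nu'}$ inside the finite sum, and rewrite $\llangle\mathcal{P}_{\mu'}^{\mu}V,\mathbf{\Phi}_{\mu,k}\rrangle_{\mu}=\llangle V,\mathcal{P}_{\mu}^{\mu'}\mathbf{\Phi}_{\mu,k}\rrangle_{\mu'}$ by (c); comparing with the definition of the outer-product operator yields the claimed decomposition. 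The principal obstacle is the adjoint computation in (a)--(b), since it relies on the change of variables via $T$ being valid under atomlessness alone; once the pushforward identities are in place the remaining parts amount to transparent Hilbert-space bookkeeping.
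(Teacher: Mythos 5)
Your proof is correct and follows essentially the route the paper takes, which here is simply to delegate parts (a)--(b) to the change-of-variables facts in Proposition 1 of \cite{chen2020} and parts (c)--(f) to the Hilbert-space manipulations of \cite{lin2019}; you have written out exactly those details (pushforward identities under atomlessness for the unitarity and adjoint relations, then integration in $t$ and routine operator bookkeeping). The only points worth tightening are that the image of a generator $s(F_\rho^{-1}\circ F_{\mu(t)}-\mathbf{id})$ under $\mathcal{P}_{\mu(t)}^{\nu(t)}$ is a difference of two generators of $\tang_{\nu(t)}$ rather than a single one---so you must invoke the linearity and closedness of the tangent space recorded in Section 2.1---and that in (f) the sum may be infinite, where the boundedness (isometry) of the transport operators justifies passing them through the sum.
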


	In the above we exploit  the geometry of the Wasserstein space to develop the parallel transport of elements  and linear operators on tensor Hilbert spaces. This contrasts with the mechanism adopted in \cite{lin2019} for Riemannian manifolds in which the Levi--Civita connection exists and can be leveraged. The first two statements are direct results of Proposition 1 in \cite{chen2020} and the remaining assertions can be checked by  similar arguments in \cite{lin2019}; thus we omit the proof.

Now we are ready to quantify the discrepancy between objects of the same kind in $\mathscr{T}(\mu) $ and $\mathscr{T}(\hat\mu) $ by utilizing the above parallel transport.
\begin{theorem}\label{thm:meancov}
	Assume for each $t\in\tdomain$, $\mu(t)\in\wsp$ is atomless.  
	\begin{enumerate}[label=\textup{(\alph*)}]
		\item $\sqrt{n}\mathrm{Log}_{\mu }\hat{\mu} $ converges in distribution to a Gaussian measure on the tensor Hilbert space $\mathscr{T}(\mu ) $.
		\item $\sup_{t\in\mathcal{T}}d^{2}(\mu (t),\hat\mu (t))=O_{p}(n^{-1})$ and $\int_{\mathcal{T}}d^{2}(\mu (t),\hat\mu (t))\mathrm{d}t=O_{p}(n^{-1})$.
		\item $\left|\left|\left|\mathcal{P}_{\mathfrak{B(\hat\mu ,\hat\mu )} }^{\mathfrak{B(\mu ,\mu )} }\mathbf{\hat C} -\mathbf{C} \right|\right|\right|^2_{\mathfrak{B(\mu ,\mu )}}=O_{p}(n^{-1})$ and $\sup_{k\geqslant1}|\hat\lambda_{k}-\lambda_{k} |= O_{p}(n^{-1})$. 
		\item Let $\eta_{j}=(1/2)\inf_{j\neq k }|\lambda_{k}-\lambda_{j} |$ and $\Delta =\mathcal{P}_{\mathfrak{B(\hat\mu ,\hat\mu )} }^{\mathfrak{B(\mu ,\mu )} }\mathbf{\hat C} -\mathbf{C} $. If $\lambda_{j}\sim j^{-a }$ and $\eta_{j}\sim j^{-(a +1)}$, then for all $j$ such that $|||\Delta |||_{\mathfrak{B(\mu ,\mu )}}<\eta_j/2$ and a constant $C$, we have
		$$\mathbf{E}\left\|\mathcal{P}_{\hat\mu }^{\mu }\mathbf{\hat\Phi} _{j}-\mathbf{\Phi}_{j} \right\|_{\mathscr{T}(\mu )}^2\leqslant C\frac{j^{2}}{n}. $$
	\end{enumerate}
\end{theorem}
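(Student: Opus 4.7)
The plan is to handle all four parts using the closed-form identity $F_{\hat\mu(t)}^{-1}=n^{-1}\sum_i F_{X_i(t)}^{-1}$ from \eqref{def:muest} together with the unitarity of parallel transport established in Proposition \ref{prop:trans}.

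For parts (a) and (b), the key observation from the logarithmic map formula \eqref{eq:logmap} is that
$$\logm_\mu\hat\mu(t)=F_{\hat\mu(t)}^{-1}\circ F_{\mu(t)}-\mathbf{id}=\frac{1}{n}\sum_{i=1}^n\logm_{\mu(t)}X_i(t),$$
so $\sqrt{n}\logm_\mu\hat\mu$ is the rescaled centered sample mean of i.i.d.\ random elements of $\mathscr{T}(\mu)$ with finite second moment (since $\wdomain$ is compact, all admissible quantile functions are uniformly bounded). The Hilbert-space central limit theorem then yields (a). Statement (b) follows from the isometry $d^2(\mu(t),\hat\mu(t))=\|F_{\hat\mu(t)}^{-1}-F_{\mu(t)}^{-1}\|_{L^2(0,1)}^2$: the integrated version is Fubini applied to the pointwise $O(n^{-1})$ expectation bound, and the uniform version uses the uniform boundedness of quantile functions plus a standard empirical-process maximal inequality over $t\in\tdomain$.

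For part (c), direct computation from the definitions of $\logm$ and $\mathcal{P}$ yields, whenever $\hat\mu(t)$ is atomless, the clean identity
$$\mathcal{P}_{\hat\mu(t)}^{\mu(t)}\logm_{\hat\mu(t)}X_i(t)=\logm_{\mu(t)}X_i(t)-\logm_{\mu(t)}\hat\mu(t).$$
Writing $L_i:=\logm_\mu X_i$ and $M:=\logm_\mu\hat\mu=n^{-1}\sum_i L_i$, the expansion $n^{-1}\sum_i(L_i-M)^{\otimes 2}=n^{-1}\sum_i L_i^{\otimes 2}-M^{\otimes 2}$ gives
$$\Delta:=\mathcal{P}_{\mathfrak{B}(\hat\mu,\hat\mu)}^{\mathfrak{B}(\mu,\mu)}\mathbf{\hat C}-\mathbf{C}=\left(\frac{1}{n}\sum_{i=1}^n L_i^{\otimes 2}-\mathbf{C}\right)-M^{\otimes 2}.$$
The first summand is a Hilbert-Schmidt CLT-type average of i.i.d.\ centered terms with norm $O_p(n^{-1/2})$, and $\|M^{\otimes 2}\|_{\mathrm{HS}}=\|M\|_\mu^2=O_p(n^{-1})$ by (a), together giving $\lnorm\Delta\rnorm_{\mathfrak{B}(\mu,\mu)}^2=O_p(n^{-1})$. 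The eigenvalue bound follows from Weyl's inequality, since parallel transport is unitary (Proposition \ref{prop:trans}(a)) and the spectrum of $\mathbf{\hat C}$ on $\mathscr{T}(\hat\mu)$ coincides with that of $\mathcal{P}_{\mathfrak{B}(\hat\mu,\hat\mu)}^{\mathfrak{B}(\mu,\mu)}\mathbf{\hat C}$ on $\mathscr{T}(\mu)$.

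For part (d), Proposition \ref{prop:trans}(f) identifies $\mathcal{P}_{\hat\mu}^\mu\mathbf{\hat\Phi}_j$ as the $j$-th eigenelement of $\mathbf{C}+\Delta$ on $\mathscr{T}(\mu)$. On the event $\{\lnorm\Delta\rnorm<\eta_j/2\}$, the first-order resolvent expansion gives
$$\mathcal{P}_{\hat\mu}^\mu\mathbf{\hat\Phi}_j-\mathbf{\Phi}_j=\sum_{k\neq j}\frac{\llangle\mathbf{\Phi}_k,\Delta\mathbf{\Phi}_j\rrangle_\mu}{\lambda_j-\lambda_k}\mathbf{\Phi}_k+R_j,$$
with remainder $\|R_j\|_\mu^2\lesssim \lnorm\Delta\rnorm^4/\eta_j^4$. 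The decisive ingredient is the coordinate-wise variance bound $\mathbf{E}|\llangle\mathbf{\Phi}_k,\Delta\mathbf{\Phi}_j\rrangle_\mu|^2\leq C\lambda_j\lambda_k/n$; for the CLT-type piece $n^{-1}\sum L_i^{\otimes 2}-\mathbf{C}$ this follows from standard FPCA moment computations using the Karhunen--Lo\`eve expansion \eqref{def:KLexp}, while for the drift piece $M^{\otimes 2}$ it is absorbed into a lower-order correction via Cauchy--Schwarz combined with (a). Combining with the decays $\lambda_k\sim k^{-a}$ and $|\lambda_j-\lambda_k|\gtrsim j^{-(a+1)}$ for $k$ adjacent to $j$, the series $\sum_{k\neq j}\lambda_j\lambda_k/(\lambda_j-\lambda_k)^2$ is dominated by indices $k$ close to $j$ and sums to $\lesssim j^2$, yielding the claimed bound $\mathbf{E}\|\mathcal{P}_{\hat\mu}^\mu\mathbf{\hat\Phi}_j-\mathbf{\Phi}_j\|_\mu^2\leq Cj^2/n$. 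The hardest part is verifying the coordinate-wise variance bound uniformly through the geometric parallel-transport correction, which the decomposition of $\Delta$ from (c) reduces to two routine inequalities.
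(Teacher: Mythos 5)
Your treatment of parts (a)--(c) is correct and essentially follows the paper's route: the identity $\logm_{\mu(t)}\hat\mu(t)=n^{-1}\sum_i\logm_{\mu(t)}X_i(t)$ plus the Hilbert-space CLT for (a)--(b), and for (c) the transport identity $\mathcal{P}_{\hat\mu(t)}^{\mu(t)}\logm_{\hat\mu(t)}X_i(t)=\logm_{\mu(t)}X_i(t)-\logm_{\mu(t)}\hat\mu(t)$, which the paper records (up to an immaterial sign) and uses in exactly your centered-covariance form in its own proof of (d). Your two-term decomposition $\Delta=(n^{-1}\sum_i L_i^{\otimes 2}-\mathbf{C})-M^{\otimes 2}$ is in fact tidier than the paper's four-term expansion and yields the same $O_p(n^{-1})$ bound; the eigenvalue claim then follows from the unitarity of $\mathcal{P}$ and Weyl's inequality, matching the paper's appeal to the perturbation bound of Bosq (note both arguments actually give $\sup_k|\hat\lambda_k-\lambda_k|=O_p(n^{-1/2})$, i.e.\ the displayed $O_p(n^{-1})$ must be read for its square). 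Your remark on (b) about a maximal inequality over $t$ is no less complete than the paper's own one-line justification.

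Part (d), however, has a genuine gap in the remainder treatment. Bounding the nonlinear part of the perturbation expansion by $\lnorm\Delta\rnorm_{\mathfrak{B}(\mu,\mu)}^4/\eta_j^4$ cannot deliver the claimed $Cj^2/n$: with $\eta_j\sim j^{-(a+1)}$, taking expectations gives at best $\mathbf{E}\lnorm\Delta\rnorm_{\mathfrak{B}(\mu,\mu)}^4/\eta_j^4\lesssim j^{4a+4}/n^2$, or, trading two powers against the event $\lnorm\Delta\rnorm_{\mathfrak{B}(\mu,\mu)}<\eta_j/2$, $\lesssim\mathbf{E}\lnorm\Delta\rnorm_{\mathfrak{B}(\mu,\mu)}^2/\eta_j^2\asymp j^{2a+2}/n$. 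Both exceed $j^2/n$ for $a>0$ on part of the admissible range: the condition $\lnorm\Delta\rnorm_{\mathfrak{B}(\mu,\mu)}<\eta_j/2$ with $\lnorm\Delta\rnorm_{\mathfrak{B}(\mu,\mu)}\asymp n^{-1/2}$ permits $j$ up to order $n^{1/(2a+2)}$, whereas $j^{4a+4}/n^2\leqslant j^2/n$ only for $j\lesssim n^{1/(4a+2)}$, so your bound loses a factor of order $j^{2a}$ exactly where the statement is nontrivial. The paper avoids this loss by a self-bounding (absorption) argument: after the expansion from Theorem 5.1.18 of \cite{hsing2015}, every higher-order term is bounded either by a quantity of the same order as the linear term, namely $\mathbf{E}\sum_{k\neq j}\llangle\Delta\mathbf{\Phi}_j,\mathbf{\Phi}_k\rrangle_\mu^2/(\lambda_j-\lambda_k)^2\lesssim n^{-1}\sum_{k\neq j}\lambda_j\lambda_k/(\lambda_j-\lambda_k)^2\lesssim j^2/n$ (Lemma 7 of \cite{dou2012}), or by a small constant ($1/16$, $1/18$, and $1/2$ combined with Cauchy--Schwarz) times $\mathbf{E}\|\mathcal{P}_{\hat\mu}^{\mu}\mathbf{\hat\Phi}_j-\mathbf{\Phi}_j\|_\mu^2$ itself, which is then moved to the left-hand side. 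You need this absorption step (or an equally sharp second-order argument) to reach $Cj^2/n$; the crude $\lnorm\Delta\rnorm_{\mathfrak{B}(\mu,\mu)}/\eta_j$ remainder bound does not suffice.
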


Part (a), (b) and (c) in Theorem \ref{thm:meancov} show that the convergence rates for the mean and covariance estimators are root-$n$, which is consistent with the classic results for fully observed Euclidean \citep{hall2006,hall2007} and Riemannian manifold functional data \citep{lin2019}. In view of the  properties for $\dwsp$,  some regularity conditions for the Fr\'{e}chet functional  are no longer needed to ensure the existence of the population and empirical Fr\'{e}chet mean.  Furthermore, due to the flatness of the Wasserstein space, the high order terms in the Taylor expansion are  vanished, which facilitates the theoretical derivation. 
   The last statement of Theorem \ref{thm:meancov} extends the classic result in eigenfunctions for fully observed Euclidean functional data, which is essential in most FPCA-based methods, especially related to  regression problems \citep{hall2006,hall2007,dou2012}. We stress that the component number $j$ is not fixed and could diverge slowly   with $n$, and this rate is optimal in the minimax sense \citep{wahl2020}.


	\section{Intrinsic Wasserstein Correlation Analysis}\label{sec:GCA}
	With the preparation of the groundwork for Wasserstein functional data, we are ready to discuss the correlation analysis between two sets of Wasserstein functional data.
	\subsection{Wasserstein Correlation}
	Let $X$ and $Y$ be two $\dwsp$-valued random processes with mean functions $\mu_{X},\mu_{Y}$ and auto-covariance operators $\mathbf{C}_{X},\mathbf{C}_{Y}$, respectively.
	The cross-covariance operator $\mathbf{C}_{XY}:\mathscr{T}(\mu_{Y})\longmapsto\mathscr{T}(\mu_{X})$ for $X $ and $Y$ is defined as 
	\begin{equation}\label{def:crosscov}
		\llangle\mathbf{C}_{XY} V, U\rrangle_{\mu_{X}}:=\mathbf{E}\left(\left\llangle\mathrm{Log} _{\mu_{Y}} Y, V\right\rrangle_{\mu_{Y}}\left\llangle\mathrm{Log} _{\mu_{X}} X, U\right\rrangle_{\mu_{X}}\right) \quad \text { for } V\in\mathscr{T}(\mu_{Y}) , U \in \mathscr{T}(\mu_{X}),
	\end{equation}
and $\mathbf{C}_{YX}$ is defined analogously. 
	We then define the intrinsic Wasserstein correlation between $X$ and $Y$ as
	\begin{equation}\label{def:iWCA}
		\rho=\max_{U\in\mathscr{T}(\mu_{X}):\langle U,\mathbf{C}_{X}U
			\rangle_{\mu_{X}}=1\atop V\in\mathscr{T}(\mu_{Y}):\langle V,\mathbf{C}_{Y}V
			\rangle_{\mu_{Y}}=1 } \langle U,\mathbf{C}_{XY}V \rangle_{\mu_{X}},
	\end{equation}
which generalizes the canonical correlation for classic multivariate data and Euclidean functional data \citep{he2003,lian2014}. Unlike the latter two types of data, functions valued in Wasserstein space  are nonlinear, and such nonlinearity is overcome by the device of tensor Hilbert space introduced in Section \ref{sec:WFDA}. Note that  the maximization problem \eqref{def:iWCA} is equivalent to finding $U$ in $\mathscr{T}(\mu_{X})$ and $V$ in  $\mathscr{T}(\mu_{Y})$ to maximize the correlation between $\llangle U, \logm_{\mu_{X}}X \rrangle_{\mu_{X}}$ and $\llangle V, \logm_{\mu_{Y}}Y \rrangle_{\mu_{Y}}$. 

In the case of multivariate data, the solution to canonical correlation analysis is reduced to singular value decomposition of $\mathbf{C}_{X}^{-1/2}\mathbf{C}_{XY}\mathbf{C}_{Y}^{-1/2}$, which could not be applied to functional data because both $\mathbf C_X$ and $\mathbf C_Y$ are infinite-dimensional compact operators and thus have an invertibility issue. By Theorem 7.2.10 in \cite{hsing2015}, there exists an operator $\mathbf{R}_{XY}\in \mathfrak{B}(\mu_{Y},\mu_{X} )$ with $\lnorm\mathbf{R}_{XY} \rnorm\leqslant 1 $ such that $\mathbf{C}_{XY}=\mathbf{C}_{X}^{1/2}\mathbf{R}_{XY} \mathbf{C}_{Y}^{1/2}$. This suggests that the operator $\mathbf{C}_{X}^{-1/2} \mathbf{C}_{XY}\mathbf{C}_{Y}^{-1/2}$ is definable on the range of $\mathbf{C}_{Y}^{1/2}$. To formulate this idea and link it to the optimization problem \eqref{def:iWCA}, we first  recall the Karhunen--Lo$\grave{\mathrm{e}}$ve expansions for $X$ and $Y$, given by
$$\mathrm{Log}_{\mu_{X}}X= \sum_{k=1}^{\infty}\xi_{k}\mathbf{\Phi}_{X,k},\quad\mathrm{Log}_{\mu_{Y}}Y= \sum_{k=1}^{\infty}\eta_{k}\mathbf{\Phi}_{Y,k},$$
where $\{\mathbf{\Phi}_{X,k}\}_{k=1}^{\infty} $ and $\{\mathbf{\Phi}_{Y,k}\}_{k=1}^{\infty} $ are respectively the eigenbases of $\mathbf C_X$ and $\mathbf C_Y$, and $\xi_{j},\eta_{j}$ are principal component scores respectively with variance $\lambda_{X,j}$ and $\lambda_{Y,j}$. It is then seen that  $\mathbf{C}_{YX}$ and $\mathbf{C}_{XY}$ can be expressed as 
\begin{align*}
	\mathbf{C}_{YX}  =\sum_{j_{1}=1}^{\infty}\sum_{j_{2}=1}^{\infty}\gamma_{j_{1}j_{2}}\mathbf{\Phi}_{X,j_{1}}\otimes\mathbf{\Phi}_{Y,j_{2}}, \quad
	\mathbf{C}_{XY}  =\sum_{j_{1}=1}^{\infty}\sum_{j_{2}=1}^{\infty}\gamma_{j_{1}j_{2}}\mathbf{\Phi}_{Y,j_{2}}\otimes\mathbf{\Phi}_{X,j_{1}}
\end{align*}
with $\gamma_{j_{1}j_{2}}=\mathbf{E}\{\xi_{j_{1}}\eta_{j_{2}}\}$.
Now we impose the following assumption on the interplay among $\gamma_{j_1j_2}$, $\lambda_{X,j_1}$ and $\lambda_{Y,j_2}$.
	\begin{assum}[\textbf{B.0}]\label{asm:b0}
		$\sum_{j_{1},j_{2}}^{\infty}\frac{ \gamma_{j_{1}j_{2}}^2}{\lambda_{X,j_{1}}^2\lambda_{Y,j_{2}} }<\infty $ and $\sum_{j_{1},j_{2}}^{\infty}\frac{ \gamma_{j_{1}j_{2}}^2}{\lambda_{X,j_{1}}\lambda_{Y,j_{2}}^2 }<\infty $.
	\end{assum}

The above assumption is the same as Condition 4.5 in \cite{he2003} and requires that cross-covariance operator of $X$ and $Y$  be  aligned with the eigenfunctions of $\mathrm{Log}_{\mu_{X}}X$ and $\mathrm{Log}_{\mu_{Y}}Y$; such a requirement is  commonly adopted in FPCA-based functional regression models \citep{hall2007,dou2012}. Under this assumption, the following proposition, inspiring estimators that are proposed in the next section,  asserts the boundedness of the operator $\mathbf{C}_{X}^{-1/2} \mathbf{C}_{XY}\mathbf{C}_{Y}^{-1/2}$ and provides a solution to the maximization problem  \eqref{def:iWCA}. The proposition can be  checked by arguments similar to that of \cite{he2003} and \cite{lian2014}, thus  its proof is omitted.

\begin{prop}\label{prop:exist-GCA}
	Under Assumption \hyperref[asm:b0]{\textup{B.0}}, $\mathbf{C}_{X}^{-1/2} \mathbf{C}_{XY}\mathbf{C}_{Y}^{-1/2}$ and $\mathbf{C}_{X}^{-1} \mathbf{C}_{XY}\mathbf{C}_{Y}^{-1/2}$ are Hilbert--Schmidt operators defined on $\mathscr{T}(\mu_{Y})$. The maximum in \eqref{def:iWCA} is achieved for the weight functions $U\in\mathscr{T}(\mu_{X}) $ and $V\in\mathscr{T}(\mu_{Y}) $ with maximum $\rho=\sqrt{\alpha}$, where $(\alpha,U)$ is the first eigenpair of $\mathbf{C}_{X}^{-1} \mathbf{C}_{XY} \mathbf{C}_{Y}^{-1} \mathbf{C}_{YX} $, and $V= \mathbf{C}_{Y}^{-1} \mathbf{C}_{YX}U/\|\mathbf{C}_{Y}^{-1/2} \mathbf{C}_{YX}U \|_{\mu_{Y}}$.	
\end{prop}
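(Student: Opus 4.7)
The plan is to attack the statement in two parts: first establish the Hilbert--Schmidt property of the two composed operators, then solve the maximization problem in \eqref{def:iWCA}. For the Hilbert--Schmidt claim, I would work entirely in the eigenbases $\{\mathbf{\Phi}_{X,k}\}$ and $\{\mathbf{\Phi}_{Y,k}\}$: the spectral representations of $\mathbf{C}_X$, $\mathbf{C}_Y$ together with the bilinear expansion of $\mathbf{C}_{XY}$ give
\begin{equation*}
(\mathbf{C}_X^{-1/2}\mathbf{C}_{XY}\mathbf{C}_Y^{-1/2})\mathbf{\Phi}_{Y,j_2}=\lambda_{Y,j_2}^{-1/2}\sum_{j_1}\gamma_{j_1j_2}\lambda_{X,j_1}^{-1/2}\mathbf{\Phi}_{X,j_1},
\end{equation*}
so the squared Hilbert--Schmidt norms of $\mathbf{C}_X^{-1/2}\mathbf{C}_{XY}\mathbf{C}_Y^{-1/2}$ and $\mathbf{C}_X^{-1}\mathbf{C}_{XY}\mathbf{C}_Y^{-1/2}$ reduce to double series in $\gamma_{j_1j_2}^2$ weighted by $(\lambda_{X,j_1}\lambda_{Y,j_2})^{-1}$ and $(\lambda_{X,j_1}^2\lambda_{Y,j_2})^{-1}$, respectively. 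Finiteness of both series follows from Assumption \hyperref[asm:b0]{B.0} together with the Cauchy--Schwarz-type manipulations developed for ordinary functional CCA in \cite{he2003} and \cite{lian2014}.

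For the variational problem, I would expand $U=\sum_k a_k\mathbf{\Phi}_{X,k}$ and $V=\sum_k b_k\mathbf{\Phi}_{Y,k}$, then rescale via $\tilde a_k=\sqrt{\lambda_{X,k}}\,a_k$ and $\tilde b_k=\sqrt{\lambda_{Y,k}}\,b_k$. This converts the constraints $\langle U,\mathbf{C}_X U\rangle_{\mu_X}=\langle V,\mathbf{C}_Y V\rangle_{\mu_Y}=1$ into $\sum\tilde a_k^2=\sum\tilde b_k^2=1$ and the objective into the bilinear form $\llangle\tilde U,\mathbf{T}\tilde V\rrangle_{\mu_X}$, where $\mathbf{T}:=\mathbf{C}_X^{-1/2}\mathbf{C}_{XY}\mathbf{C}_Y^{-1/2}$ is the (now Hilbert--Schmidt, hence compact) operator whose coordinates in the product basis are $\gamma_{j_1j_2}/\sqrt{\lambda_{X,j_1}\lambda_{Y,j_2}}$. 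The maximum of such a bilinear form over unit vectors equals the operator norm $\lnorm\mathbf{T}\rnorm$, attained at the leading left and right singular vectors of $\mathbf{T}$, i.e., the first eigenvectors of $\mathbf{T}\mathbf{T}^*$ and $\mathbf{T}^*\mathbf{T}$, with common eigenvalue $\alpha=\rho^2$.

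To recover the stated form of the optimizer, I would unwind the rescaling. A first eigenfunction of $\mathbf{T}\mathbf{T}^*=\mathbf{C}_X^{-1/2}\mathbf{C}_{XY}\mathbf{C}_Y^{-1}\mathbf{C}_{YX}\mathbf{C}_X^{-1/2}$ with eigenvalue $\alpha$ becomes, via the substitution $\tilde U=\mathbf{C}_X^{1/2}U$, a first eigenfunction $U$ of $\mathbf{C}_X^{-1}\mathbf{C}_{XY}\mathbf{C}_Y^{-1}\mathbf{C}_{YX}$ with the same eigenvalue. The singular-value identity $\mathbf{T}^*\tilde U=\sqrt{\alpha}\,\tilde V$ then yields $\tilde V=\alpha^{-1/2}\mathbf{C}_Y^{-1/2}\mathbf{C}_{YX}U$, so $V=\mathbf{C}_Y^{-1/2}\tilde V$ is proportional to $\mathbf{C}_Y^{-1}\mathbf{C}_{YX}U$; the normalization constant $\|\mathbf{C}_Y^{-1/2}\mathbf{C}_{YX}U\|_{\mu_Y}=\sqrt{\alpha}$ then matches the claim exactly, with $\rho=\sqrt{\alpha}$.

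The main obstacle is that $\mathbf{C}_X^{-1/2}$ and $\mathbf{C}_Y^{-1/2}$ are unbounded, so the operator-theoretic identities above require care; the substitution $\tilde U=\mathbf{C}_X^{1/2}U$ is only meaningful on the dense subspace where the relevant compositions are defined. I plan to circumvent this by carrying out the Lagrange multiplier computation directly at the level of the coordinate sequences $(a_k),(b_k)$, where all operators act diagonally and Assumption \hyperref[asm:b0]{B.0} ensures that every series produced by the formal manipulation converges absolutely. This coordinate-level argument, parallel to the treatment in \cite{he2003} and \cite{lian2014}, also yields the strict positivity of $\alpha$ (and hence the well-definedness of the normalization of $V$) whenever $\mathbf{C}_{XY}\neq 0$.
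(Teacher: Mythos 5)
Your proposal is correct and takes essentially the route the paper itself relies on: the paper omits the proof and appeals to the arguments of \cite{he2003} and \cite{lian2014}, which are precisely your reduction to the compact operator $\mathbf{T}=\mathbf{C}_{X}^{-1/2}\mathbf{C}_{XY}\mathbf{C}_{Y}^{-1/2}$, identification of $\rho$ with its largest singular value, and use of Assumption B.0 both for the Hilbert--Schmidt bounds and to ensure the leading singular pair pulls back to genuine weight functions $U\in\mathscr{T}(\mu_{X})$, $V\in\mathscr{T}(\mu_{Y})$. Your coordinate-level treatment of the unbounded inverses (working on the dense range of $\mathbf{C}_{X}^{1/2}$, $\mathbf{C}_{Y}^{1/2}$ and letting B.0 control the resulting series) is the same device used in those references, so there is no gap.
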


	\subsection{Estimation and Theoretical Properties}
	Given a random sample of functions $\{(X_{i},Y_{i})\}_{i=1}^{n}$ of $(X,Y)$,  the mean functions for $X$ and $Y$ are estimated by $\hat\mu_X$ and $\hat\mu_Y$ that are respectively represented by 
	$$
	F_{\hat{\mu}_{X}(t)}^{-1}=\frac{1}{n}\sum_{i=1}^{n}F_{X_{i}(t)}^{-1}\text{ and }F_{\hat{\mu}_{Y}(t)}^{-1}=\frac{1}{n}\sum_{i=1}^{n}F_{Y_{i}(t)}^{-1}\text{ for each }t\in\mathcal{T}. 
	$$
	The estimators for autocovariance operators of $X$ and $Y$ are 
	$$
	\hat{\mathbf{C}}_{X}=\frac{1}{n}\sum_{j=1}^{n} (\mathrm{Log}_{\hat\mu_{X}}X_{i})\otimes(\mathrm{Log}_{\hat\mu_{X}}X_{i})\text{ and }	\hat{\mathbf{C}}_{Y}=\frac{1}{n}\sum_{j=1}^{n} (\mathrm{Log}_{\hat\mu_{Y}}Y_{i})\otimes(\mathrm{Log}_{\hat\mu_{Y}}Y_{i}), 
	$$
	respectively. In addition, they admit the following decompositions,
	\begin{equation}\label{def:sampleig}
		\hat{\mathbf{C}}_{X}=\sum_{k=1}^{\infty}\hat\lambda_{X,k}\mathbf{\hat\Phi}_{X,k}\otimes\mathbf{\hat\Phi}_{X,k} \text{ and }\hat{\mathbf{C}}_{Y}=\sum_{k=1}^{\infty}\hat\lambda_{Y,k}\mathbf{\hat\Phi}_{Y,k}\otimes\mathbf{\hat\Phi}_{Y,k}, 
	\end{equation}
	where $(\hat\lambda_{X,k}\mathbf{\hat\Phi}_{X,k}) $ and $(\hat\lambda_{Y,k},\mathbf{\hat\Phi}_{Y,k}) $ serve as estimators for $(\lambda_{X,k},\mathbf{\Phi}_{X,k}) $ and  $(\lambda_{Y,k},\mathbf{\Phi}_{Y,k}) $, respectively. Similarly,  the cross covariance operators between $X$ and $Y$ are estimated by
	$$\hat{\mathbf{C}}_{YX}=\frac{1}{n}\sum_{i=1}^{n} (\mathrm{Log}_{\hat\mu_{X}}X_{i} )\otimes(\mathrm{Log}_{\hat\mu_{Y}}Y_{i} )\text{ and }\hat{\mathbf{C}}_{XY}=\frac{1}{n}\sum_{i=1}^{n} (\mathrm{Log}_{\hat\mu_{Y}}Y_{i} )\otimes(\mathrm{Log}_{\hat\mu_{X}}X_{i} ). $$
	
	Based on the above estimators, we construct an estimator of  $\mathbf{C}_{X}^{-1} \mathbf{C}_{XY} \mathbf{C}_{Y}^{-1} \mathbf{C}_{YX}$, denoted   by  $\hat{\mathbf{C}}_{X,k_{X}}^{-1}\hat{\mathbf{C}}_{XY}\hat{\mathbf{C}}_{Y,k_{Y}}^{-1}\hat{\mathbf{C}}_{YX}$, where $\hat{\mathbf{C}}_{X,k_{X}}^{-1}=\sum_{j=1}^{k_{X}}\hat\lambda_{X,j}^{-1}\hat{\mathbf{\Phi}}_{X,j}$, $\hat{\mathbf{C}}_{Y,k_{Y}}^{-1}=\sum_{j=1}^{k_{Y}}\hat\lambda_{Y,j}^{-1}\hat{\mathbf{\Phi}}_{Y,j}$, and $k_{X},k_{Y}\in \mathbb{N}^{+}$ are two tuning parameters. The truncation of $\hat{\mathbf{C}}_{X,k_{X}}^{-1}$ and $\hat{\mathbf{C}}_{X,k_{X}}^{-1}$ at respectively finite levels $k_X$ and $k_Y$ serves as a way of regularization that is  needed to address the invertibility issue of infinite-dimensional compact operators \citep{yao2005aos,hall2007,dou2012}. Then, the estimator of $U$, denoted by $\hat{U}$, is the eigenfunction of $\hat{\mathbf{C}}_{X,k_{X}}^{-1}\hat{\mathbf{C}}_{XY}\hat{\mathbf{C}}_{Y,k_{Y}}^{-1}\hat{\mathbf{C}}_{YX} $ associated with its largest eigenvalue $\hat\alpha$, and the estimators for $V,\rho$ are defined by $\hat{V}= \hat{\mathbf{C}}_{Y,k_{Y}}^{-1} \hat{\mathbf{C}}_{YX}\hat{U}/\|\hat{\mathbf{C}}_{Y,k_{Y}}^{-1/2} \hat{\mathbf{C}}_{YX}\hat{U} \|_{\hat\mu_{Y}},\hat\rho=\sqrt{\hat\alpha}$, respectively.
	
	Alternatively, we may utilize Tikhonov regularization \citep{hall2007} to estimate $\mathbf{C}_{X}^{-1} \mathbf{C}_{XY} \mathbf{C}_{Y}^{-1} \mathbf{C}_{YX}$ by  $(\hat{\mathbf{C}}_{X}+\epsilon_{X}\hat{\mathbf{id}}_{X})^{-1}\hat{\mathbf{C}}_{XY}(\hat{\mathbf{C}}_{Y}+\epsilon_{Y}\hat{\mathbf{id}}_{Y})^{-1}\hat{\mathbf{C}}_{YX}$, where $\epsilon_{X},\epsilon_{Y}$ are positive tuning parameters and $ \hat{\mathbf{id}}_{X},\hat{\mathbf{id}}_{Y}$ are the identity operators, respectively, on $\mathscr{T}(\hat\mu_{X})$ and $\mathscr{T}(\hat\mu_{Y})$. Then $U$ is estimated by the eigenfunction $\tilde{U}$ of $(\hat{\mathbf{C}}_{X}+\epsilon_{X}\hat{\mathbf{id}}_{X})^{-1}\hat{\mathbf{C}}_{XY}(\hat{\mathbf{C}}_{Y}+\epsilon_{Y}\hat{\mathbf{id}}_{Y})^{-1}\hat{\mathbf{C}}_{YX}$ associated with the largest eigenvalue $\tilde{\alpha}$,  $V$ is estimated by $\tilde{V}= (\hat{\mathbf{C}}_{Y}+\epsilon_{Y}\hat{\mathbf{id}}_{Y})^{-1}\hat{\mathbf{C}}_{YX}\tilde{U}/\|(\hat{\mathbf{C}}_{Y}+\epsilon_{Y}\hat{\mathbf{id}}_{Y})^{-1/2}\hat{\mathbf{C}}_{YX}\tilde{U}\|_{\hat\mu_{Y}}$, and $\rho$ is estimated by $\tilde\rho=\sqrt{\tilde\alpha}$. 
	
	To study the theoretical properties of the estimators $\hat U$ and $\hat V$, we require the following assumption to  utilize the parallel transportation operators defined in Section \ref{sec:WFDA}.
	\begin{assum}[\textbf{A.1}]\label{asm:a1}
		For each $t\in\tdomain$, $\mu_{X}(t)$ and $\mu_{Y}(t) $ are atomless.
	\end{assum}
	In addition, we assume the eigenspace for the largest eigenvalue of $\mathbf{C}_{X}^{-1} \mathbf{C}_{XY} \mathbf{C}_{Y}^{-1} \mathbf{C}_{YX} $ has multiplicity one to ensure the uniqueness of the solution to \eqref{def:iWCA}. We also make the following assumptions. 
	\begin{assum}[$\textbf{A.2}$ ]\label{asm:a2}
		$\mathbf{E}\xi_{j}^4\leqslant C\lambda_{X,j}^{2}$ and $\mathbf{E}\eta_{j}^4\leqslant C\eta_{Y,j}^{2}$ for a positive constant $C$.
	\end{assum}
	
	\begin{assum}[$\textbf{B.1}$ ]\label{asm:b1}
		There exist positive constants $c,C$ such that $Cj^{-a_{X}}\geqslant\lambda_{X,j}\geqslant\lambda_{X,j+1}+cj^{-a_{X}-1}$ and $Cj^{-a_{Y}}\geqslant\lambda_{Y,j}\geqslant\lambda_{Y,j+1}+cj^{-a_{Y}-1}$ for some $a_{X},a_{Y}>1$ and each $j\geqslant1$. 
	\end{assum}
	\begin{assum}[$\textbf{B.2}$ ]\label{asm:b2}
		$\sum_{j_{2}}\gamma_{j_{1}j_{2}}^2\leqslant C j_{1}^{-2a_{X}-2b_{X}},\sum_{j_{1}}\gamma_{j_{1}j_{2}}^2\leqslant C j_{2}^{-2a_{Y}-2b_{Y}} , k_{X}\asymp n^{1/(a_{X}+2b_{X})}, k_{Y}\asymp n^{1/(a_{Y}+2b_{Y})} $, for some constants $b_{X}>a_{X}/2+1,b_{Y}>a_{Y}/2+1$ and a positive constant $C$.
	\end{assum}
	Assumption \hyperref[asm:a2]{$\textup{A.2}$} is common in the FPCA literature \citep{hall2007}. Assumptions \hyperref[asm:b1]{B.1} and \hyperref[asm:b2]{B.2} define a class of random processes $X$ and $Y$ for which we are able to establish the minimax rate for the proposed estimators; similar conditions  have been adopted in the literature of functional linear regression \citep{hall2007,dou2012}. The following theorem presents an upper bound on the convergence rate of estimated weight functions $\hat U$ and $\hat V$.
	
	\begin{theorem}\label{thm:fpc}
		Under assumptions \hyperref[asm:a1]{\textup{A.1}}, \hyperref[asm:a2]{\textup{A.2}}, \hyperref[asm:b1]{\textup{B.1}} and \hyperref[asm:b2]{\textup{B.2}}, we have 
		$$\|\mathcal{P}_{\hat\mu_{X}}^{\mu_{X}} \hat{U}-U\|_{\mu_{X}}^2+ \|\mathcal{P}_{\hat\mu_{Y}}^{\mu_{Y}} \hat{V}-V\|_{\mu_{Y}}^2=O_{p}\left(\max\left\{n^{-(2b_{X}-1)/(a_{X}+2b_{X})},n^{-(2b_{Y}-1)/(a_{Y}+2b_{Y})} \right\}\right).$$
	\end{theorem}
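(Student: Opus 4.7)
The strategy is to transport every object into the fixed tensor Hilbert spaces $\mathscr{T}(\mu_{X})$ and $\mathscr{T}(\mu_{Y})$ via the parallel-transport machinery of Section~\ref{sec:WFDA}, and then adapt the eigenvalue/eigenfunction perturbation analysis used for Euclidean functional CCA in \cite{he2003} and \cite{lian2014}. Specifically, introduce the transported operators $\tilde{\mathbf{C}}_{X}:=\mathcal{P}_{\mathfrak{B}(\hat\mu_{X},\hat\mu_{X})}^{\mathfrak{B}(\mu_{X},\mu_{X})}\hat{\mathbf{C}}_{X}$, and analogously $\tilde{\mathbf{C}}_{Y},\tilde{\mathbf{C}}_{XY},\tilde{\mathbf{C}}_{YX}$. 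Parts (d)--(f) of Proposition~\ref{prop:trans} guarantee that $\tilde{\mathbf{C}}_{X}=\sum_{k}\hat\lambda_{X,k}(\mathcal{P}_{\hat\mu_{X}}^{\mu_{X}}\hat{\mathbf{\Phi}}_{X,k})\otimes(\mathcal{P}_{\hat\mu_{X}}^{\mu_{X}}\hat{\mathbf{\Phi}}_{X,k})$ and that $\mathcal{P}_{\hat\mu_{X}}^{\mu_{X}}\hat{U}$ is the leading eigenfunction of $\tilde{\mathbf{T}}:=\tilde{\mathbf{C}}_{X,k_{X}}^{-1}\tilde{\mathbf{C}}_{XY}\tilde{\mathbf{C}}_{Y,k_{Y}}^{-1}\tilde{\mathbf{C}}_{YX}$ with the same eigenvalue $\hat\alpha$. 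Theorem~\ref{thm:meancov}(c) supplies the $O_{p}(n^{-1/2})$ operator-norm rate for the auto-covariance pieces, and an essentially identical argument---linearising $\mathrm{Log}_{\hat\mu_{X}}X_{i}\otimes\mathrm{Log}_{\hat\mu_{Y}}Y_{i}$ in $\sup_{t}d(\mu(t),\hat\mu(t))$ via Theorem~\ref{thm:meancov}(b) and Assumption~\hyperref[asm:a2]{\textup{A.2}}---yields $\lnorm\tilde{\mathbf{C}}_{XY}-\mathbf{C}_{XY}\rnorm_{\mathfrak{B}(\mu_{Y},\mu_{X})}=O_{p}(n^{-1/2})$.

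Next I would bound $\lnorm\tilde{\mathbf{T}}-\mathbf{T}\rnorm$ through the telescoping decomposition
\begin{align*}
\tilde{\mathbf{T}}-\mathbf{T}&=(\tilde{\mathbf{C}}_{X,k_{X}}^{-1}-\mathbf{C}_{X}^{-1})\tilde{\mathbf{C}}_{XY}\tilde{\mathbf{C}}_{Y,k_{Y}}^{-1}\tilde{\mathbf{C}}_{YX}+\mathbf{C}_{X}^{-1}(\tilde{\mathbf{C}}_{XY}-\mathbf{C}_{XY})\tilde{\mathbf{C}}_{Y,k_{Y}}^{-1}\tilde{\mathbf{C}}_{YX}\\
&\quad+\mathbf{C}_{X}^{-1}\mathbf{C}_{XY}(\tilde{\mathbf{C}}_{Y,k_{Y}}^{-1}-\mathbf{C}_{Y}^{-1})\tilde{\mathbf{C}}_{YX}+\mathbf{C}_{X}^{-1}\mathbf{C}_{XY}\mathbf{C}_{Y}^{-1}(\tilde{\mathbf{C}}_{YX}-\mathbf{C}_{YX}),
\end{align*}
and further split each inverse difference as $\tilde{\mathbf{C}}_{X,k_{X}}^{-1}-\mathbf{C}_{X}^{-1}=(\mathbf{C}_{X,k_{X}}^{-1}-\mathbf{C}_{X}^{-1})+(\tilde{\mathbf{C}}_{X,k_{X}}^{-1}-\mathbf{C}_{X,k_{X}}^{-1})$, a deterministic truncation piece plus a stochastic piece. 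The basis expansion $\mathbf{C}_{XY}=\sum_{j_{1},j_{2}}\gamma_{j_{1}j_{2}}\mathbf{\Phi}_{Y,j_{2}}\otimes\mathbf{\Phi}_{X,j_{1}}$ together with Assumption~\hyperref[asm:b2]{\textup{B.2}} yields
\begin{equation*}
\lnorm(\mathbf{C}_{X,k_{X}}^{-1}-\mathbf{C}_{X}^{-1})\mathbf{C}_{XY}\rnorm^{2}\leqslant\sum_{j_{1}>k_{X}}\lambda_{X,j_{1}}^{-2}\sum_{j_{2}}\gamma_{j_{1}j_{2}}^{2}\lesssim\sum_{j_{1}>k_{X}}j_{1}^{-2b_{X}}\asymp k_{X}^{-(2b_{X}-1)},
\end{equation*}
which under $k_{X}\asymp n^{1/(a_{X}+2b_{X})}$ is exactly $n^{-(2b_{X}-1)/(a_{X}+2b_{X})}$; the $Y$-side is symmetric. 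For the stochastic piece, Theorem~\ref{thm:meancov}(c)--(d) bounds $|\hat\lambda_{X,j}-\lambda_{X,j}|$ and $\|\mathcal{P}_{\hat\mu_{X}}^{\mu_{X}}\hat{\mathbf{\Phi}}_{X,j}-\mathbf{\Phi}_{X,j}\|$ at rates $n^{-1/2}$ and $j/\sqrt{n}$, and summing the coordinate-wise errors up to $k_{X}$, weighted by $\lambda_{X,j}^{-1}\asymp j^{a_{X}}$ and absorbing the cross-covariance decay from Assumption~\hyperref[asm:b2]{\textup{B.2}}, produces the same order; the tuning-parameter choice is made precisely to balance these two contributions.

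To pass from the operator bound to eigenfunction bounds, since $\alpha$ has multiplicity one the spectral gap of $\mathbf{T}$ at $\alpha$ is positive, so a sin-$\Theta$/resolvent perturbation inequality gives $\|\mathcal{P}_{\hat\mu_{X}}^{\mu_{X}}\hat{U}-U\|_{\mu_{X}}\lesssim\lnorm\tilde{\mathbf{T}}-\mathbf{T}\rnorm_{\mathfrak{B}(\mu_{X},\mu_{X})}$ with high probability; squaring yields the $\hat U$-rate. The $\hat V$-rate follows because $V=\mathbf{C}_{Y}^{-1}\mathbf{C}_{YX}U/\|\mathbf{C}_{Y}^{-1/2}\mathbf{C}_{YX}U\|_{\mu_{Y}}$ and its estimator is obtained by replacing each factor by its transported sample analog; a product-rule argument via Proposition~\ref{prop:trans}(c) combined with the already established operator rates converts the $\hat U$-bound into the $\hat V$-bound with the $Y$-side exponent, and the pair sums to the claimed maximum rate.

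The main obstacle will be handling the stochastic piece $\tilde{\mathbf{C}}_{X,k_{X}}^{-1}-\mathbf{C}_{X,k_{X}}^{-1}$ near the truncation level: inverting $\hat\lambda_{X,j}$ amplifies its error by $\lambda_{X,j}^{-2}\asymp j^{2a_{X}}$, so only the sharp $j/\sqrt{n}$ eigenfunction bound of Theorem~\ref{thm:meancov}(d), together with the polynomial eigengap in Assumption~\hyperref[asm:b1]{\textup{B.1}}, keeps the partial sum tractable; the cross-covariance decay in Assumption~\hyperref[asm:b2]{\textup{B.2}} must be deployed to absorb the remaining factors of $j^{a_{X}+b_{X}}$ before truncation. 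A secondary difficulty is that the telescoping decomposition involves products such as $\mathbf{C}_{X}^{-1}\mathbf{C}_{XY}$ that are only bounded in the Hilbert--Schmidt sense of Assumption~\hyperref[asm:b0]{\textup{B.0}}; writing each term so that the unbounded factors appear only in combinations already known to be Hilbert--Schmidt via Proposition~\ref{prop:exist-GCA} requires careful bookkeeping.
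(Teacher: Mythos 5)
Your overall skeleton (transporting everything into the fixed spaces $\mathscr{T}(\mu_{X})$, $\mathscr{T}(\mu_{Y})$, identifying $\mathcal{P}_{\hat\mu_{X}}^{\mu_{X}}\hat U$ as the top eigenfunction of the transported operator, a truncation-bias computation of order $k_{X}^{-(2b_{X}-1)}$, a perturbation step for the eigenfunction, and deducing the $\hat V$-rate from the operator bound together with the $\hat U$-bound) agrees with the paper. But the central telescoping decomposition is not viable. In your second and fourth terms the raw population inverses sit directly against pure noise operators: $\mathbf{C}_{X}^{-1}(\tilde{\mathbf{C}}_{XY}-\mathbf{C}_{XY})\tilde{\mathbf{C}}_{Y,k_{Y}}^{-1}\tilde{\mathbf{C}}_{YX}$ and $\mathbf{C}_{X}^{-1}\mathbf{C}_{XY}\mathbf{C}_{Y}^{-1}(\tilde{\mathbf{C}}_{YX}-\mathbf{C}_{YX})$. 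The perturbation $\tilde{\mathbf{C}}_{XY}-\mathbf{C}_{XY}$ has coordinates $n^{-1}\sum_{i}\xi_{ij_{1}}\eta_{ij_{2}}-\gamma_{j_{1}j_{2}}$ of typical size $(\lambda_{X,j_{1}}\lambda_{Y,j_{2}}/n)^{1/2}$ with no alignment to the spectral decay of $\mathbf{C}_{X}$, so applying $\mathbf{C}_{X}^{-1}$ produces sums behaving like $n^{-1}\sum_{j_{1}}\lambda_{X,j_{1}}^{-1}$, which diverge; Assumption B.0 and Proposition \ref{prop:exist-GCA} only make the composite blocks $\mathbf{C}_{X}^{-1}\mathbf{C}_{XY}$ and $\mathbf{C}_{Y}^{-1}\mathbf{C}_{YX}$ well defined, not $\mathbf{C}_{X}^{-1}$ alone nor $\mathbf{C}_{X}^{-1}\mathbf{C}_{XY}\mathbf{C}_{Y}^{-1}$. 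This is not a bookkeeping issue to be fixed at the end: it is exactly the trap that Remark \ref{rem:minimax} identifies in \cite{lian2014} and \cite{zhou2020}, where an inverse applied to the sample cross-covariance was bounded as if it were applied to the population one.

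Even where your factors are individually bounded, the operator-norm product strategy is too lossy to reach the stated rate. For instance, bounding $\tilde{\mathbf{C}}_{X,k_{X}}^{-1}(\tilde{\mathbf{C}}_{XY}-\mathbf{C}_{XY})$ by $\lnorm\tilde{\mathbf{C}}_{X,k_{X}}^{-1}\rnorm\cdot\lnorm\tilde{\mathbf{C}}_{XY}-\mathbf{C}_{XY}\rnorm\asymp k_{X}^{a_{X}}n^{-1/2}$ yields, after squaring and with $k_{X}\asymp n^{1/(a_{X}+2b_{X})}$, the order $n^{(a_{X}-2b_{X})/(a_{X}+2b_{X})}$, which exceeds the target $n^{-(2b_{X}-1)/(a_{X}+2b_{X})}$ because $a_{X}>1$; similarly the stochastic inverse piece times a bounded block gives $k_{X}^{4a_{X}}/n$, far off the mark. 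The sharp rate $k^{a+1}/n$ only emerges from a coordinate-wise second-moment computation in which Assumption A.2 gives $\mathbf{E}\bigl(n^{-1}\sum_{i}\xi_{ij_{1}}\eta_{ij_{2}}-\gamma_{j_{1}j_{2}}\bigr)^{2}\leqslant C\lambda_{X,j_{1}}\lambda_{Y,j_{2}}/n$, so that one factor $\lambda_{Y,j_{2}}$ cancels against the weight $\lambda_{Y,j_{2}}^{-2}$. Accordingly, the paper never isolates the cross-covariance noise next to an unrestricted inverse: it bounds the whole regression blocks $(\mathcal{P}\hat{\mathbf{C}}_{Y,k_{Y}}^{-1})(\mathcal{P}\hat{\mathbf{C}}_{YX})-\mathbf{C}_{Y}^{-1}\mathbf{C}_{YX}$ via the split $J_{1}+J_{2}+J_{3}$ (sample truncated inverse on the sample cross-covariance; population truncated inverse on the sampling error, handled by the moment bound; truncation bias), treats the eigenvalue/eigenfunction estimation errors paired with the $\gamma$-decay of B.2 using Theorem \ref{thm:meancov}(d) and Lemmas \ref{lem:eigvalue}--\ref{lem:eigap}, and only then assembles $\mathbf{C}_{X}^{-1}\mathbf{C}_{XY}\mathbf{C}_{Y}^{-1}\mathbf{C}_{YX}$ through a two-term triangle inequality in which every remaining factor is a bounded block. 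Your plan needs to be restructured along these lines; as written, two of your four telescoping terms cannot be bounded at all and the other two do not achieve the minimax rate.
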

	
	The convergence rate in Theorem \ref{thm:fpc} is in accordance to  the rate in classic functional regression problems \citep{hall2007,yuan2010,dou2012}, as well as the rate for non-functional Wasserstein regression \citep{chen2020}. This is not surprising, since canonical correlation analysis is intimately related to two regression problems, in our context, one in which  $\logm_{\mu_{X}}X$ is regressed on $\logm_{\mu_{Y}}Y$ and the other in which $\logm_{\mu_{Y}}Y$ is  regressed on $\logm_{\mu_{X}}X$. This is slightly different from the attained rate of the functional linear regression involving the Riemannian manifold \citep{lin2019,lin2021}, as the nonlinear structure does not affect the convergence rate due to the flatness of the geodesic in Wasserstein spaces. 
	
	To study the asymptotic properties of the Tikhonov estimators $\tilde U$ and $\tilde V$, we require the following assumptions.
	\begin{assum}[$\textbf{B.1}^{'}$ ]\label{asm:b1'}
		There exist positive constants $C$, $a_X$ and $a_Y$ such that $\lambda_{X,j}\leqslant Cj^{-a_{X}}$ and $\lambda_{Y,j}\leqslant Cj^{-a_{Y}}$ for all $j\geqslant1$. 
	\end{assum}
	\begin{assum}[$\textbf{B.2}^{'}$ ]\label{asm:b2'}
		For some constants  $b_{X}>a_{X}-1/2,b_{Y}>a_{Y}-1/2$, 
		$\sum_{j_{2}}\gamma_{j_{1}j_{2}}^2\leqslant C j_{1}^{-2a_{X}-2b_{X}}$, $\sum_{j_{1}}\gamma_{j_{1}j_{2}}^2\leqslant C j_{2}^{-2a_{Y}-2b_{Y}}$, $\epsilon_{X}\asymp n^{-a_{X}/(a_{X}+2b_{X})}$, and $\epsilon_{Y}\asymp n^{-a_{Y}/(a_{Y}+2b_{Y})} $.
	\end{assum}
The following theorem provides an upper bound on the convergence rate of both  $\tilde{U}$ and $\tilde V$. 
	\begin{theorem}\label{thm:ridge}
		Under assumptions \hyperref[asm:a1]{\textup{A.1}}, \hyperref[asm:a2]{\textup{A.2}}, \hyperref[asm:b1']{$\textup{B.1}^{'}$} and \hyperref[asm:b1']{$\textup{B.2}^{'}$}, we have 
		$$\|\mathcal{P}_{\hat\mu_{X}}^{\mu_{X}} \tilde{U}-U\|_{\mu_{X}}^2+ \|\mathcal{P}_{\hat\mu_{Y}}^{\mu_{Y}} \tilde{V}-V\|_{\mu_{Y}}^2=O_{p}\left(\max\left\{n^{-(2b_{X}-1)/(a_{X}+2b_{X})},n^{-(2b_{Y}-1)/(a_{Y}+2b_{Y})} \right\}\right).$$
	\end{theorem}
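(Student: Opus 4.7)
My plan is to follow the template of Theorem~\ref{thm:fpc} but to replace the FPCA-truncation bias/variance analysis with the Tikhonov calculus in the spirit of \cite{hall2007}. First, using the isometric identities of Proposition~\ref{prop:trans}(c), I would reduce the problem to comparing the transported estimators $\mathcal{P}_{\hat\mu_X}^{\mu_X}\tilde U$ and $\mathcal{P}_{\hat\mu_Y}^{\mu_Y}\tilde V$ with the targets $U$ and $V$ in the fixed ambient spaces $\mathscr{T}(\mu_X)$ and $\mathscr{T}(\mu_Y)$. Writing $\mathbf{T}=\mathbf{C}_X^{-1}\mathbf{C}_{XY}\mathbf{C}_Y^{-1}\mathbf{C}_{YX}$ and $\hat{\mathbf{T}}_{\varepsilon}=(\hat{\mathbf{C}}_X+\varepsilon_X\hat{\mathbf{id}}_X)^{-1}\hat{\mathbf{C}}_{XY}(\hat{\mathbf{C}}_Y+\varepsilon_Y\hat{\mathbf{id}}_Y)^{-1}\hat{\mathbf{C}}_{YX}$, Proposition~\ref{prop:exist-GCA} identifies $U$ as the leading eigenfunction of $\mathbf{T}$ and $\tilde U$ as that of $\hat{\mathbf{T}}_{\varepsilon}$, so a Kato / Davis--Kahan perturbation bound delivers
\begin{equation*}
\|\mathcal{P}_{\hat\mu_X}^{\mu_X}\tilde U-U\|_{\mu_X}^{2}\leqslant C\,\lnorm\mathcal{P}_{\mathfrak{B}(\hat\mu_X,\hat\mu_X)}^{\mathfrak{B}(\mu_X,\mu_X)}\hat{\mathbf{T}}_{\varepsilon}-\mathbf{T}\rnorm_{\mathfrak{B}(\mu_X,\mu_X)}^{2},
\end{equation*}
with $C$ depending only on the gap between the top two eigenvalues of $\mathbf{T}$, which is positive by the uniqueness assumption preceding the theorem.

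Second, I would split the operator error into a deterministic Tikhonov bias and a stochastic piece,
\begin{equation*}
\mathcal{P}\hat{\mathbf{T}}_{\varepsilon}-\mathbf{T}=\bigl(\mathbf{T}_{\varepsilon}-\mathbf{T}\bigr)+\bigl(\mathcal{P}\hat{\mathbf{T}}_{\varepsilon}-\mathbf{T}_{\varepsilon}\bigr),
\end{equation*}
where $\mathbf{T}_{\varepsilon}=(\mathbf{C}_X+\varepsilon_X\mathbf{id}_X)^{-1}\mathbf{C}_{XY}(\mathbf{C}_Y+\varepsilon_Y\mathbf{id}_Y)^{-1}\mathbf{C}_{YX}$. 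Using the resolvent identity $(\mathbf{C}_X+\varepsilon_X)^{-1}-\mathbf{C}_X^{-1}=-\varepsilon_X(\mathbf{C}_X+\varepsilon_X)^{-1}\mathbf{C}_X^{-1}$ together with the eigen-expansion $\mathbf{C}_{XY}=\sum_{j_1,j_2}\gamma_{j_1j_2}\mathbf{\Phi}_{Y,j_2}\otimes\mathbf{\Phi}_{X,j_1}$, the bias breaks into two single-side pieces and a cross term, each controlled by a Hilbert--Schmidt sum of the form
\begin{equation*}
\sum_{j_1,j_2}\frac{\varepsilon_X^{2}\gamma_{j_1j_2}^{2}}{(\lambda_{X,j_1}+\varepsilon_X)^{2}\lambda_{X,j_1}^{2}}.
\end{equation*}
Splitting the sum at the transition index $j_1\asymp\varepsilon_X^{-1/a_X}$ and invoking Assumptions~\hyperref[asm:b1']{B.1$'$} and \hyperref[asm:b2']{B.2$'$} yields a bias of order $\varepsilon_X^{(2b_X-1)/a_X}+\varepsilon_Y^{(2b_Y-1)/a_Y}$. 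For the stochastic piece, Theorem~\ref{thm:meancov}(c) supplies $\lnorm\mathcal{P}\hat{\mathbf{C}}_X-\mathbf{C}_X\rnorm=O_p(n^{-1/2})$ (and analogously for $\mathbf{C}_Y$, with a parallel argument delivering the same rate for $\mathbf{C}_{XY}$), and the identity $(\hat{\mathbf{C}}_X+\varepsilon_X)^{-1}-(\mathbf{C}_X+\varepsilon_X)^{-1}=(\hat{\mathbf{C}}_X+\varepsilon_X)^{-1}(\mathbf{C}_X-\hat{\mathbf{C}}_X)(\mathbf{C}_X+\varepsilon_X)^{-1}$ together with $\|(\cdot+\varepsilon)^{-1}\|_{\mathrm{op}}\leqslant\varepsilon^{-1}$ converts the sampling error into an operator-norm bound. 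Inserting the schedule $\varepsilon_X\asymp n^{-a_X/(a_X+2b_X)}$ and $\varepsilon_Y\asymp n^{-a_Y/(a_Y+2b_Y)}$ from Assumption~\hyperref[asm:b2']{B.2$'$} and balancing yields the rate $n^{-(2b_X-1)/(a_X+2b_X)}\vee n^{-(2b_Y-1)/(a_Y+2b_Y)}$.

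Finally, to pass from $\tilde U$ to $\tilde V$, I would exploit the closed form $\tilde V=(\hat{\mathbf{C}}_Y+\varepsilon_Y)^{-1}\hat{\mathbf{C}}_{YX}\tilde U/\|(\hat{\mathbf{C}}_Y+\varepsilon_Y)^{-1/2}\hat{\mathbf{C}}_{YX}\tilde U\|_{\hat\mu_Y}$ and its population counterpart, parallel-transport both to $\mathscr{T}(\mu_Y)$, and decompose the difference into (i) a piece linear in $\tilde U-U$, already controlled; (ii) an operator error $(\hat{\mathbf{C}}_Y+\varepsilon_Y)^{-1}\hat{\mathbf{C}}_{YX}-\mathbf{C}_Y^{-1}\mathbf{C}_{YX}$, handled by the same bias--variance dissection; and (iii) a lower-order normalization remainder. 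The main obstacle I anticipate is the careful bookkeeping of the Tikhonov bias when \emph{two} regularized inverses appear in the same operator: the cross term $[(\mathbf{C}_X+\varepsilon_X)^{-1}-\mathbf{C}_X^{-1}]\mathbf{C}_{XY}[(\mathbf{C}_Y+\varepsilon_Y)^{-1}-\mathbf{C}_Y^{-1}]\mathbf{C}_{YX}$ demands a joint Cauchy--Schwarz on a doubly indexed eigen-sum, and one must verify that the strengthened smoothness $b_X>a_X-1/2$ and $b_Y>a_Y-1/2$ in Assumption~\hyperref[asm:b2']{B.2$'$} is sharp enough to keep this cross contribution within the claimed rate.
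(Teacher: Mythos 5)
Your skeleton — parallel-transport reduction, identifying $\tilde U$ as the top eigenfunction of the regularized operator and invoking a spectral-gap perturbation bound, a bias/variance split of the operator error with the resolvent identity for the Tikhonov bias, and the closed form for $\tilde V$ — is essentially the paper's route: the paper reduces everything to the bound $\sup_{\|h\|_{\mu_X}=1}\|(\mathcal{P}\hat{\mathbf{C}}_{Y}+\epsilon_{Y}\mathbf{id}_{Y})^{-1}(\mathcal{P}\hat{\mathbf{C}}_{YX})h-\mathbf{C}_{Y}^{-1}\mathbf{C}_{YX}h\|_{\mu_Y}^2=O_p(n^{-(2b_Y-1)/(a_Y+2b_Y)})$ and then finishes exactly as in Theorem \ref{thm:fpc}. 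Your bias term ($K_3$ in the paper's notation) is handled the same way and is correct.

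The genuine gap is in your treatment of the stochastic piece. You propose to control it through $\lnorm\mathcal{P}\hat{\mathbf{C}}_{Y}-\mathbf{C}_{Y}\rnorm=O_p(n^{-1/2})$ combined with the crude resolvent bound $\|(\cdot+\epsilon_Y)^{-1}\|\leqslant\epsilon_Y^{-1}$ (and analogously for $\hat{\mathbf{C}}_{YX}-\mathbf{C}_{YX}$). That conversion yields a squared error of order $\epsilon_Y^{-2}/n$, which under the prescribed schedule $\epsilon_Y\asymp n^{-a_Y/(a_Y+2b_Y)}$ equals $n^{-(2b_Y-a_Y)/(a_Y+2b_Y)}$ — strictly slower than the claimed $n^{-(2b_Y-1)/(a_Y+2b_Y)}$ whenever $a_Y>1$, which is the relevant regime since the eigenvalues of a trace-class covariance must be summable. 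Nor can you rescue it by "balancing": re-optimizing $\epsilon_Y$ against the bias $\epsilon_Y^{(2b_Y-1)/a_Y}$ would give the rate $n^{-(2b_Y-1)/(2a_Y+2b_Y-1)}$, still short of the minimax rate (and $\epsilon_Y$ is in any case fixed by Assumption $\textup{B.2}^{'}$). What is needed — and what the paper does for its terms $K_1$ and $K_2$ — is the spectrally weighted bound
\begin{equation*}
\mathbf{E}\bigl\|(\mathbf{C}_{Y}+\epsilon_{Y}\mathbf{id}_{Y})^{-1}(\mathcal{P}\hat{\mathbf{C}}_{Y}-\mathbf{C}_{Y})g\bigr\|_{\mu_Y}^{2}
\lesssim \frac{1}{n}\sum_{j}\frac{\lambda_{Y,j}}{(\lambda_{Y,j}+\epsilon_{Y})^{2}}
\asymp \frac{\epsilon_{Y}^{-(1+1/a_{Y})}}{n},
\end{equation*}
together with the cross-moment bound $\mathbf{E}\bigl(n^{-1}\sum_{i}\xi_{ij_1}\eta_{ij_2}-\gamma_{j_1j_2}\bigr)^{2}\leqslant C\lambda_{X,j_1}\lambda_{Y,j_2}/n$ supplied by Assumption \hyperref[asm:a2]{$\textup{A.2}$}, which gives $\mathbf{E}\|K_2\|_{\mu_Y}^2\lesssim n^{-1}\sum_{j_2}(\lambda_{Y,j_2}+\epsilon_Y)^{-2}\sum_{j_1}\lambda_{X,j_1}\lambda_{Y,j_2}\asymp \epsilon_Y^{-(1+1/a_Y)}/n$. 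Since $\epsilon_Y^{-(1+1/a_Y)}/n=n^{-(2b_Y-1)/(a_Y+2b_Y)}$ under the schedule, this is what delivers the theorem. This is precisely the delicate step that Remark \ref{rem:minimax} flags as having been shortcut incorrectly in earlier CCA work, so the coarse operator-norm bound cannot be the right tool here; your argument needs to be replaced by (or upgraded to) the eigen-expansion computation above before the stated rate follows.
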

	
	Now we show that the bounds in Theorems \ref{thm:fpc} and \ref{thm:ridge} are tight. To this end, for two mean surfaces $\mu_{X}$ and $\mu_{Y}$,  recall that $\logm_{\mu_{X}}X $ and $\logm_{\mu_{Y}}Y $ admit the expansions $\mathrm{Log}_{\mu_{X}}X=\sum_{j}\xi_{j}\mathbf{\Phi}_{X,j} $ and $\mathrm{Log}_{\mu_{Y}}Y=\sum_{j}\eta_{j}\mathbf{\Phi}_{Y,j} $ for the orthogonal bases $\{\mathbf{\Phi}_{X,j}\}_{j=1}^{\infty}\subset\mathscr{T}(\mu_{X}) $ and $\{\mathbf{\Phi}_{Y,j}\}_{j=1}^{\infty}\subset\mathscr{T}(\mu_{Y}) $. Let ${P}_{XY}$ be the distribution of $(\logm_{\mu_{X}}X,\logm_{\mu_{Y}}Y)$ and define the family
	\begin{align*}
		\mathcal{F}(C,a,b):=&\left\{ P_{XY}:\sum_{i,j=1} ^{\infty}\frac{\gamma_{ij}^2}{\lambda_{X,j_{1}}^2\lambda_{Y,j_{2}}}\leqslant C,\sum_{i,j=1} ^{\infty}\frac{\gamma_{ij}^2}{\lambda_{X,j_{1}}\lambda_{Y,j_{2}}^2}\leqslant C, C^{-1}j^{-a}\leqslant \lambda_{X,j}\leqslant Cj^{-a}, \right.\\
		&\left.C^{-1}j^{-a}\leqslant \lambda_{Y,j}\leqslant Cj^{-a} , \sum_{j}\gamma_{ij}^2\leqslant Ci^{-2a-2b},\sum_{i}\gamma_{ij}^2\leqslant Cj^{-2a-2b}\right\}.
	\end{align*} 
The following result establishes  a lower bound on the convergence rate of an estimator of $(U,V)$ and can be derived by similar arguments in \cite{lian2014}; we omit the proof. The bound matches the upper bound in Theorems \ref{thm:fpc} and \ref{thm:ridge} and thus implies the optimality of the estimators $(\hat U,\hat V)$ and $(\tilde U,\tilde V)$ in the minimax sense.
	\begin{theorem}\label{thm:lowerbnd}
		Suppose that $(X_1,Y_1),\ldots (X_n,Y_n)$ form a random sample of $(X,Y)$ with $(\Log_{\mu_X}X,\Log_{\mu_Y}Y)$ following the distribution  ${P}_{XY}\in \mathcal{F}(C,a,b,\mu)$, and that $(U,V)$ maximizes \eqref{def:iWCA}. Then
		$$\lim_{c\rightarrow0}\liminf_{n\rightarrow\infty}\inf_{(\check{U},\check{V})}\sup_{{P}_{XY}\in \mathcal{F}(C,a,b) }P_{XY}\left(\|\check{U}-U\|^{2}_{\mu_{X}}+\|\check{V}-V\|^{2}_{\mu_{Y}} \geqslant cn^{-\frac{2b-1}{a+2b}}\right)=1, $$
		 where $(\hat{U},\hat{V})$ denotes an estimator of $(U,V)$ based on the data $(X_1,Y_1),\ldots (X_n,Y_n)$.
	\end{theorem}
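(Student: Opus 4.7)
The plan is to reduce the claim to a minimax lower bound for canonical correlation analysis on an abstract separable Hilbert space and then adapt the Fano-type argument of \cite{lian2014}. Fix reference mean curves $\mu_{X},\mu_{Y}$ satisfying Assumption \hyperref[asm:a1]{A.1}. Since $\mathcal{F}(C,a,b)$ depends only on the joint law $P_{XY}$ of $(\Log_{\mu_{X}}X,\Log_{\mu_{Y}}Y)$ on the separable Hilbert space $\mathscr{T}(\mu_{X})\times\mathscr{T}(\mu_{Y})$ (Theorem \ref{thm:tensor}), I prescribe the joint distribution directly through coordinates in fixed orthonormal bases $\{\mathbf{\Phi}_{X,j}\},\{\mathbf{\Phi}_{Y,j}\}$. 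With perturbations taken sufficiently small, the corresponding tangent vectors stay in the image of the logarithmic maps on $\wdomain$, so each hypothesis gives a valid law on $\dwsp\times\dwsp$. No parallel transport appears in the lower-bound construction, because all hypotheses share the same $\mu_{X},\mu_{Y}$.

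Next I build a jointly Gaussian family $\{P^{(\theta)}:\theta\in\{0,1\}^{m}\}\subset\mathcal{F}(C,a,b)$. Fix diagonal marginal covariances with $\lambda_{X,j}=\lambda_{Y,j}\asymp j^{-a}$ and a baseline cross-covariance with $\gamma_{jj}^{(0)}\asymp j^{-a-b}$ (off-diagonal zero), placing $P^{(0)}$ strictly inside $\mathcal{F}(C,a,b)$ and giving a simple top eigenvalue of $\mathbf{C}_{X}^{-1}\mathbf{C}_{XY}^{(0)}\mathbf{C}_{Y}^{-1}\mathbf{C}_{YX}^{(0)}$ whose spectral gap is bounded below. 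With $m\asymp n^{1/(a+2b)}$ and $\delta_{n}^{2}\asymp c\,n^{-(2b-1)/(a+2b)}$, perturb $m$ cross-covariance coefficients by $\pm\delta_{n}j^{-a-b}$ coordinated by $\theta$. A first-order eigenvector expansion (Davis--Kahan, invoking the spectral gap) gives $\|U^{(\theta)}-U^{(\theta')}\|_{\mu_{X}}^{2}+\|V^{(\theta)}-V^{(\theta')}\|_{\mu_{Y}}^{2}\gtrsim d_{H}(\theta,\theta')\,\delta_{n}^{2}/m$, where $d_{H}$ denotes Hamming distance. By Varshamov--Gilbert, extract $\Theta\subset\{0,1\}^{m}$ with $|\Theta|\geq 2^{m/8}$ and pairwise Hamming distance $\geq m/8$, so distinct hypotheses are separated by $\gtrsim c\,n^{-(2b-1)/(a+2b)}$. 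A second-order Gaussian KL expansion together with the chosen scalings yields $n\operatorname{KL}(P^{(\theta)}\|P^{(\theta')})\lesssim c\log|\Theta|$, so Fano's inequality delivers
$$\liminf_{n}\inf_{(\check U,\check V)}\sup_{P_{XY}\in\mathcal{F}(C,a,b)}P_{XY}\!\left(\text{loss}\geq c\,n^{-(2b-1)/(a+2b)}\right)\geq 1-O(c),$$
and letting $c\to 0$ produces the claimed limit $=1$.

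The main obstacle is the eigenvector-perturbation step: because $U$ and $V$ are nonlinear (eigenvector) functionals of $(\mathbf{C}_{X},\mathbf{C}_{Y},\mathbf{C}_{XY})$, the cross-covariance perturbation must be engineered so that distinct $\theta$ produce well-separated leading eigenvectors of $\mathbf{C}_{X}^{-1}\mathbf{C}_{XY}^{(\theta)}\mathbf{C}_{Y}^{-1}\mathbf{C}_{YX}^{(\theta)}$, the spectral gap of its top eigenvalue stays bounded below uniformly in $\theta$, and each $P^{(\theta)}$ remains inside $\mathcal{F}(C,a,b)$. This is the technical heart of the construction in \cite{lian2014}; once it is carried out, the tensor-Hilbert-space setting contributes nothing further, because $\mathscr{T}(\mu_{X})$ and $\mathscr{T}(\mu_{Y})$ are ordinary separable Hilbert spaces, so the argument transfers verbatim.
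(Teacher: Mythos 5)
Your overall route matches what the paper intends: the paper gives no proof of Theorem \ref{thm:lowerbnd} at all, saying only that it ``can be derived by similar arguments in \cite{lian2014}'', and your plan (fix $\mu_X,\mu_Y$, work in the fixed separable Hilbert spaces $\mathscr{T}(\mu_X)\times\mathscr{T}(\mu_Y)$ so that no parallel transport enters, build a hypercube of perturbed cross-covariances, separate the leading eigenvectors by first-order perturbation under a uniform spectral gap, then apply Varshamov--Gilbert and Fano with divergences scaling linearly in $c$) is exactly that adaptation. There is, however, a genuine Wasserstein-specific gap in your construction: the hypotheses cannot be literally ``jointly Gaussian''. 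The family $\mathcal{F}(C,a,b)$ consists of laws of $(\Log_{\mu_X}X,\Log_{\mu_Y}Y)$ for $\dwsp$-valued processes, and since $\wdomain$ is compact, the image of $\Log_{\mu_X(t)}$ consists of maps $T$ with $T+\mathbf{id}$ non-decreasing and taking values in $\wdomain$; the admissible tangent elements therefore form a bounded, constrained subset of $\tang_{\mu_X(t)}$ (this is precisely why the simulations in Section \ref{sec:sim} use truncated-normal or uniform scores subject to \eqref{eq:sim-1}). Gaussian scores are unbounded, so each $P^{(\theta)}$ as you define it fails these constraints with positive probability and is not realizable as the law of log-transformed Wasserstein functional data, hence lies outside $\mathcal{F}(C,a,b)$. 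Your opening claim that sufficiently small perturbations keep the tangent vectors in the image of the logarithmic maps is thus incompatible with the exact Gaussianity you later invoke for the ``second-order Gaussian KL expansion''.

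The repair is to use compactly supported score distributions (e.g.\ the truncated or uniform construction of Section \ref{sec:sim}) with amplitudes small enough to respect the log-image constraint and the fourth-moment condition, verify membership of every hypothesis in $\mathcal{F}(C,a,b)$, and then bound the KL (or chi-square) divergences for these non-Gaussian product-perturbation families directly; this is routine but is not the closed-form Gaussian computation you cite. Separately, the quantitative heart of the argument remains deferred rather than verified: you do not specify which coordinates are perturbed (the standard choice is $j\in(m,2m]$ with $m\asymp n^{1/(a+2b)}$), and your stated scalings --- perturbations $\pm\delta_n j^{-a-b}$ with $\delta_n^2\asymp c\,n^{-(2b-1)/(a+2b)}$, separation $\gtrsim d_H(\theta,\theta')\,\delta_n^2/m$, and $n\,\mathrm{KL}\lesssim c\log|\Theta|$ --- do not self-evidently hold simultaneously once the map from a cross-covariance perturbation to the leading eigenvector of $\mathbf{C}_X^{-1}\mathbf{C}_{XY}\mathbf{C}_Y^{-1}\mathbf{C}_{YX}$ (which introduces factors of order $\lambda_{X,j}^{-1}\asymp j^{a}$ and the eigengap) is taken into account. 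Since the uniform spectral gap, the eigenvector separation, and the divergence budget are exactly the content of the construction in \cite{lian2014}, they need to be carried out in this setting, not merely cited, for the proposal to constitute a proof.
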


	\section{Simulation Studies}\label{sec:sim}
To illustrate the numerical behavior of the proposed methods, we set $\mu_{X}(t)$ to the Beta distribution with parameters $(2+t,3-(t^2+t)/2)$ and $\mu_{Y}(t)$ to the Beta distribution with parameters $(3-t,2+ (t^2+t)/2)$ for each $t\in[0,1]$ as the mean surfaces. We consider the set of orthonormal functions
	$$
	\phi_{j}(x)=\sqrt{2} \sin ( \pi j x), \quad \text { for } x \in[0,1], \text { and } j \in \mathbb{N}_{+}.
	$$
	Then $\mathbf{\Phi}_{X,j}(x,t)=\phi_{j}\circ F_{X,t}(x)$ and $\mathbf{\Phi}_{Y,j}(x,t)=\phi_{j}\circ F_{Y,t}(x) $, $j=1,2,\ldots$, form an orthonormal basis for $\mathscr{T}(\mu_{X})$ and $\mathscr{T}(\mu_{Y})$, where $F_{X,t}$ and $F_{Y,t}$ denote the distribution function of $\mu_{X}(t)$ and $\mu_{Y}(t)$, respectively. 
	
Write $\mathrm{Log}_{\mu_{X}}X_{i}(t)=\sum_{j=1}^{\infty}\xi_{ij} \mathbf{\Phi}_{X,j}=\sum_{j=1}^{\infty}\xi_{ij}\phi_{j}\circ F_{X,t}(x) $, where $\xi_{ij}$ are uncorrelated random variables with zero mean such that $\sum_{j=1}^{\infty}\xi_{ij}^2<\infty$ almost surely. To guarantee $\sum_{j=1}^{\infty}\xi_{ij}\phi_{j}\circ F_{X,t}(x)\in \mathrm{Log}_{\mu_{X}}\dsp([0,1])$, where $\dsp([0,1])$ denotes the set of absolutely continuous measures on $[0,1]$, it suffices to require \begin{equation}\label{eq:sim-1}
		\sum_{j=1}^{\infty}\xi_{ij}\phi'_{j}(F_{X,t}(x) )f_{X,t} +1\geqslant0\text{ for all }x\in[0,1] \text{ and }t\in\mathcal{T} ,
	\end{equation}
	where $f_{X,t} $ is the density function of $F_{X,t}$.
	Condition \eqref{eq:sim-1} is satisfied, e.g., when $\xi_{ij}\leqslant v_{j}/(\sup_{x\in[0,1]}|\phi'_{j}(x) |\sup_{t,x\in[0,1]}f_{X,t}(x)\sum_{j=1}^{\infty}v_{j} )$, where $\{v_{j}\}_{j=1}^{\infty}$ is a non-negative sequence of constants such that $\sum_{j=1}^{\infty }v_{j}<\infty $, e.g., $v_j=a^{-j}$ for a given $a>1$.
	
	Taking $K=20$, $v_{j}=2^{-j}$, $V_{j}:=\sup_{x\in[0,1]}|\phi'_{j}(x)|=\sqrt{2}\pi j$ and $M=1.78> \sup_{x,t\in[0,1]}f_{X,t}(x)=\sup_{x,t\in[0,1]}f_{Y,t}(x) $, we set  $X_{i}(t)=\mathrm{Exp}_{\mu_{X}}(\sum_{k=1}^{K}\xi_{ik}\mathbf{\Phi}_{X,k} ) $ and $Y_{i}(t)=\mathrm{Exp}_{\mu_{Y}}(\sum_{k=1}^{K}\eta_{ik}\mathbf{\Phi}_{Y,k} ) $, considering the following two types of scores $\xi_{ij}$ and $\eta_{ij}$.
	\begin{itemize}
		\item Case 1 (Truncated normal): We sample $\xi_{ik}\sim v_{k}(V_{k}M)^{-1}\theta_{ik}$ with $\theta_{ik}\sim\text{TN}_{[-1,1]}(0,1)$ independently for $i=1,2,\cdots,n$ and $k=1,2,\cdots,K$, and $\eta_{ik}\sim v_{k}(V_{k}M)^{-1}\vartheta_{ik}$ with $\vartheta_{ik}\sim\text{TN}_{[-1,1]}(0,1)$, except that $\eta_{i2}=0.5(\xi_{i1}+\xi_{i2} )+\sigma\mathbf{E}(\xi_{i1}^2+\xi_{i2}^2)\vartheta_{i2}$,  where $\text{TN}_{[-1,1]}(0,1) $ denotes the Gaussian distribution $N(0,1)$ truncated on $[-1,1]$, and $\sigma$ is a constant representing the noise level. 
		\item Case 2 (Uniform): We sample $\xi_{ik}\sim \text{Unif}[-v_{k}(V_{k}M )^{-1},v_{k}(V_{k}M )^{-1} ]$ independently for $i=1,2,\cdots,n$ and $k=1,2,\cdots,K$, and $\eta_{ik}\sim \text{Unif}[-v_{k}(V_{k}M )^{-1},v_{k}(V_{k}M )^{-1} ]$, except that $\eta_{i2}=0.5(\xi_{i1}+\xi_{i2} )+\sigma\mathbf{E}(\xi_{i1}^2+\xi_{i2}^2)\vartheta_{i2}$ with $\vartheta_{i2}\sim\text{Unif}[-1,1]$. 
	\end{itemize}
	In this construction we have $U\propto \mathbf{\Phi}_{X,1}+\mathbf{\Phi}_{X,2} $, $V\propto\mathbf{\Phi}_{Y,2}$ and $\rho^2=0.5^2/(0.5^2+\sigma^2)$.
	
		\begin{figure}[htbp]
		\centering
		\caption{Absolute error for $\hat\rho-\rho$ ($\tilde\rho-\rho$, respectively) and IMSE for $\hat{U},\hat{V} $ ($\tilde{U},\tilde{V}  $, respectively) on different tuning parameters for the FPCA (left column) and Tikhonov (right column) methods by the average of 200 Monte Carlo replicates with noise level $\sigma=0.05$ in Case 1.}
		\includegraphics[width=\textwidth]{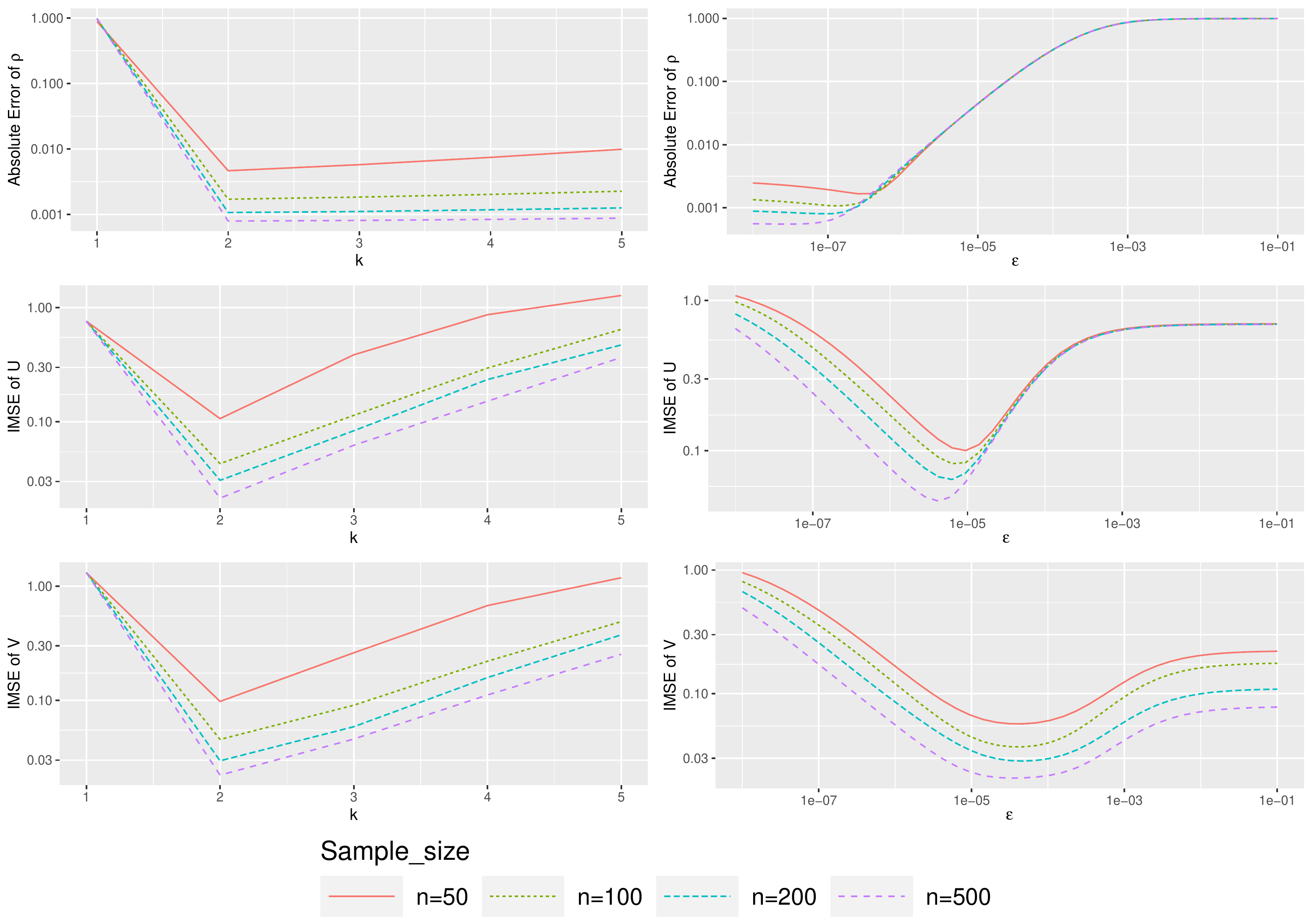}	
		\label{fig:0.05case1}	
		\end{figure}
		\begin{table}[htbp]
			\caption{Absolute Error of $\hat\rho-\rho$ ($\tilde\rho-\rho$, respectively) for different noise levels with tuning parameters chosen by five-fold CV\label{tab:CV-rho}}
			\vspace{0.1in}
			\centering
			\resizebox{\textwidth}{!}{%
    \begin{tabular}{cccccc|cccc}
    \hline
    \multirow{2}[4]{*}{} &       & \multicolumn{4}{c}{FPCA}      & \multicolumn{4}{c}{Tikhonov} \bigstrut\\
\cline{2-10}          & $\sigma$ & n=50  & n=100 & n=200 & n=500 & n=50  & n=100 & n=200 & n=500 \bigstrut\\
    \hline
    \multirow{5}[10]{*}{Case 1} & 0.05  & 1.8921E-03 & 1.0968E-03 & 8.2418E-04 & 5.6042E-04 & 3.8412E-02 & 1.2611E-02 & 6.8533E-03 & 4.7028E-03 \bigstrut\\
\cline{2-10}          & 0.1   & 7.3811E-03 & 4.3114E-03 & 3.2315E-03 & 2.1716E-03 & 4.4891E-02 & 2.1650E-02 & 1.4277E-02 & 9.2885E-03 \bigstrut\\
\cline{2-10}          & 0.2   & 2.6169E-02 & 1.5551E-02 & 1.1578E-02 & 7.4486E-03 & 7.4891E-02 & 3.5729E-02 & 2.5251E-02 & 1.7177E-02 \bigstrut\\
\cline{2-10}          & 0.3   & 4.8671E-02 & 2.8719E-02 & 2.1629E-02 & 1.3469E-02 & 1.1181E-01 & 4.9525E-02 & 3.6763E-02 & 2.2418E-02 \bigstrut\\
\cline{2-10}          & 0.5   & 8.4157E-02 & 4.9888E-02 & 3.7398E-02 & 2.1870E-02 & 1.5973E-01 & 7.4893E-02 & 5.1280E-02 & 2.7180E-02 \bigstrut\\
    \hline
    \multirow{5}[10]{*}{Case 2} & 0.05  & 1.8223E-03 & 1.0492E-03 & 7.8767E-04 & 5.1942E-04 & 2.8798E-02 & 1.2771E-02 & 6.7532E-03 & 4.6829E-03 \bigstrut\\
\cline{2-10}          & 0.1   & 7.1250E-03 & 4.0953E-03 & 3.0922E-03 & 2.0080E-03 & 4.4630E-02 & 2.1645E-02 & 1.3985E-02 & 9.2176E-03 \bigstrut\\
\cline{2-10}          & 0.2   & 2.5174E-02 & 1.4878E-02 & 1.1056E-02 & 6.9673E-03 & 7.4311E-02 & 3.5712E-02 & 2.4669E-02 & 1.7008E-02 \bigstrut\\
\cline{2-10}          & 0.3   & 4.7609E-02 & 2.8180E-02 & 2.0961E-02 & 1.2753E-02 & 1.0168E-01 & 4.9744E-02 & 3.2470E-02 & 2.1946E-02 \bigstrut\\
\cline{2-10}          & 0.5   & 8.2501E-02 & 4.9165E-02 & 3.6464E-02 & 2.1097E-02 & 1.3788E-01 & 7.5757E-02 & 4.7427E-02 & 2.6488E-02 \bigstrut\\
    \hline
    \end{tabular}
    }%
			
		\end{table}%

	
	We first inspect how the estimation error responds to the tuning parameters for  FPCA and Tikhonov methods,  where we set the same tuning parameters for $X$ and $Y$ for simplicity, i.e., $k_{X}=k_{Y}=k$ and $\epsilon_{X}=\epsilon_{Y}=\epsilon$, respectively. Four sample sizes from 50 to 500 are considered, and each simulation is repeated 200 times independently. We use the absolute error  $|\hat\rho-\rho|$ to quantify the estimation error for $\hat\rho$ and integrated mean squared error $\mathrm{IMSE}(\hat U)= \{\mathbb{E}\|\hat{U}-U \|_{\mu}^{2}\}^{1/2}$ for $\hat{U}$; estimation errors for  $\hat V$, $\tilde \rho$, $\tilde U$ and $\tilde V$ are quantified analogously. From the results in Figure \ref{fig:0.05case1} we see that the estimators converge rapidly  as the the sample size $n$ increases. In addition, we observe that the optimal tuning parameters for $\rho$, $U$ and $V$ are rather different for Tikhonov estimators, e.g., the $\epsilon$ that minimizes $|\hat\rho-\rho|$ may not be the optimal choice for estimating the weight functions $U$ and $V$. By contrast, the optimal truncation parameter $k_{X}$ ($k_{Y}$) for $\rho,U $  and $V$ seems to have little difference.
	
	In practice, we propose to choose the tuning parameters via  a   $\kappa$-fold cross-validation approach with $\kappa=5$, as follows. {We split the dataset into $\kappa$ partitions of roughly even size. Taking the FPCA estimation for example, let $\hat{U}^{-l}_{k} $ and $\hat{V}^{-l}_{k} $ be the FPCA estimators of the weight functions with the tuning (truncation) parameter $k$ obtained without using data from the $l$-th partition. The cross-validation score of $k$ is defined based on the squared Pearson correlation  \citep{leurgans1993}
	\begin{equation}\label{eq:CV}
		\mathrm{CV}( k )=\left[\frac{n\left(\sum_{i=1}^{n}x_{i, k }y_{i, k }\right)-\left( \sum_{i=1}^{n}x_{i, k }\right)\left( \sum_{i=1}^{n}y_{i, k }\right)  }{\sqrt{\left\{n\sum_{i=1}^{n}x_{i, k }^{2}-\left( \sum_{i=1}^{n}x_{i, k }\right)^2 \right\}\left\{n\sum_{i=1}^{n}y_{i, k }^{2}-\left( \sum_{i=1}^{n}y_{i, k }\right)^2 \right\} }}\right]^2
	\end{equation}
	with
	$$x_{i,k}=\sum_{l=1}^{\kappa}\langle\mathrm{Log}_{\hat\mu_{X}}X_{1}, \hat{U}^{-l}_{k}\rangle \mathbf{1}_{i\in I_{l}}\text{ and }y_{i,k}=\sum_{k=1}^{\kappa}\langle\mathrm{Log}_{\hat\mu_{Y}}Y_{1}, \hat{V}^{-l}_{ k }\rangle \mathbf{1}_{i\in I_{l}},$$
where $I_{l}$ denotes the set of indices of data in the $l$-th partition. Then we  choose the value of $k$ that maximizes $\mathrm{CV}(k)$. Selection for the tuning parameters associated with the Tikhonov estimators $\tilde U$ and $\tilde V$ can be performed analogously.

	\begin{table}[htbp]
			
			\centering
			\caption{IMSE of $\hat U$ ($\tilde U$, respectively) for different noise levels with tuning parameters chosen by five-fold CV\label{tab:CV-f}}
			\vspace{0.1in}
			\resizebox{\textwidth}{!}{%
			 \begin{tabular}{cccccc|cccc}
    \hline
    \multirow{2}[4]{*}{} &       & \multicolumn{4}{c}{FPCA}      & \multicolumn{4}{c}{Tikhonov} \bigstrut\\
\cline{2-10}          & $\sigma$ & n=50  & n=100 & n=200 & n=500 & n=50  & n=100 & n=200 & n=500 \bigstrut\\
    \hline
    \multirow{5}[10]{*}{Case 1} & 0.05  & 0.1281  & 0.0900  & 0.0850  & 0.0481  & 0.2536  & 0.2580  & 0.3033  & 0.2498  \bigstrut\\
\cline{2-10}          & 0.1   & 0.1935  & 0.1446  & 0.1377  & 0.0863  & 0.2846  & 0.2519  & 0.2977  & 0.2888  \bigstrut\\
\cline{2-10}          & 0.2   & 0.2979  & 0.2194  & 0.2303  & 0.1417  & 0.3538  & 0.3154  & 0.3156  & 0.2481  \bigstrut\\
\cline{2-10}          & 0.3   & 0.3806  & 0.2880  & 0.2747  & 0.1873  & 0.4318  & 0.3783  & 0.3328  & 0.2551  \bigstrut\\
\cline{2-10}          & 0.5   & 0.5237  & 0.3816  & 0.3429  & 0.2808  & 0.5322  & 0.4876  & 0.3912  & 0.3067  \bigstrut\\
    \hline
    \multirow{5}[10]{*}{Case 2} & 0.05  & 0.1312  & 0.0870  & 0.0847  & 0.0492  & 0.2546  & 0.2537  & 0.3006  & 0.2577  \bigstrut\\
\cline{2-10}          & 0.1   & 0.1994  & 0.1419  & 0.1447  & 0.0828  & 0.2928  & 0.2494  & 0.3037  & 0.2966  \bigstrut\\
\cline{2-10}          & 0.2   & 0.2938  & 0.2124  & 0.2278  & 0.1460  & 0.3562  & 0.3131  & 0.3199  & 0.2497  \bigstrut\\
\cline{2-10}          & 0.3   & 0.3789  & 0.2811  & 0.2744  & 0.1911  & 0.4279  & 0.3723  & 0.3359  & 0.2582  \bigstrut\\
\cline{2-10}          & 0.5   & 0.5178  & 0.3867  & 0.3457  & 0.2764  & 0.5241  & 0.4803  & 0.3853  & 0.3030  \bigstrut\\
    \hline
    \end{tabular}}
	\end{table}%
	\begin{table}[htbp]
	\centering
	\caption{IMSE of $\hat V$ ($\tilde V$, respectively) for different noise levels with tuning parameters chosen by 5-fold CV}\vspace{0.1in}
    \resizebox{\textwidth}{!}{%
\begin{tabular}{cccccc|cccc}
    \hline
    \multirow{2}[4]{*}{} &       & \multicolumn{4}{c}{FPCA}      & \multicolumn{4}{c}{Tikhonov} \bigstrut\\
\cline{2-10}          & $\sigma$ & n=50  & n=100 & n=200 & n=500 & n=50  & n=100 & n=200 & n=500 \bigstrut\\
    \hline
    \multirow{5}[10]{*}{Case 1} & 0.05  & 0.1095  & 0.0774  & 0.0621  & 0.0405  & 0.1943  & 0.1982  & 0.2431  & 0.1849  \bigstrut\\
\cline{2-10}          & 0.1   & 0.1554  & 0.1177  & 0.0989  & 0.0657  & 0.2003  & 0.1907  & 0.2361  & 0.2316  \bigstrut\\
\cline{2-10}          & 0.2   & 0.2111  & 0.1754  & 0.1661  & 0.1048  & 0.2255  & 0.2268  & 0.2359  & 0.2020  \bigstrut\\
\cline{2-10}          & 0.3   & 0.2640  & 0.2340  & 0.2061  & 0.1360  & 0.2549  & 0.2622  & 0.2459  & 0.1997  \bigstrut\\
\cline{2-10}          & 0.5   & 0.3869  & 0.3268  & 0.2753  & 0.2187  & 0.2860  & 0.3360  & 0.2939  & 0.2608  \bigstrut\\
    \hline
    \multirow{5}[10]{*}{Case 2} & 0.05  & 0.1133  & 0.0769  & 0.0645  & 0.0433  & 0.2040  & 0.1955  & 0.2387  & 0.1925  \bigstrut\\
\cline{2-10}          & 0.1   & 0.1568  & 0.1114  & 0.1074  & 0.0663  & 0.2082  & 0.1857  & 0.2414  & 0.2395  \bigstrut\\
\cline{2-10}          & 0.2   & 0.2150  & 0.1674  & 0.1632  & 0.1070  & 0.2271  & 0.2241  & 0.2410  & 0.2041  \bigstrut\\
\cline{2-10}          & 0.3   & 0.2677  & 0.2274  & 0.2062  & 0.1422  & 0.2578  & 0.2563  & 0.2504  & 0.2029  \bigstrut\\
\cline{2-10}          & 0.5   & 0.3847  & 0.3287  & 0.2829  & 0.2085  & 0.2922  & 0.3287  & 0.2917  & 0.2571  \bigstrut\\
    \hline
    \end{tabular}}
\label{tab:CV-g}%
	\end{table}

	Tables \ref{tab:CV-rho} to \ref{tab:CV-g}  report the absolute error  $|\hat\rho-\rho|$ ($|\tilde{\rho}-\rho|$, respectively) and the IMSE for estimating $U$ and $V$, where the tuning parameters are selected  to optimize the cross-validation score \eqref{eq:CV}. From Table \ref{tab:CV-rho}, we see that the correlation $\rho$ is estimated well by both  methods and   converges rapidly as the sample size increases. When inspecting the absolute error of $\hat\rho-\rho$ ($\tilde\rho- \rho$, respectively) in Figure \ref{fig:0.05case1} and Table \ref{tab:CV-rho} closely, one may find that Figure \ref{fig:0.05case1} shows similar scales between two methods over ranges of tuning parameters, while Table \ref{tab:CV-rho} shows that estimation errors by FPCA method are clearly lower than Tikhonov methods with tuning parameters chosen by five-fold CV. This might be explained by Figure \ref{fig:CVselect}, which presents the selected tuning parameters in both methods. For the FPCA method, we see that most of the selected truncation numbers  are concentrated in $2$, that is, exactly the optimal tuning parameter. Figure \ref{fig:0.05case1} also indicates that the optimal range  for $\epsilon$ is around $[10^{-8},10^{-6}]$ (the optimal choice for $\epsilon$ is decreasing as the sample size grows, which is consistent with our theoretical results), but as shown in Figure \ref{fig:CVselect}, the majority of the selected epsilons do not fall in this range. Although as the sample size increases, the selected  $\epsilon$ gradually becomes smaller and approaches the optimal one, there is still a substantial part outside the optimal range, which makes the Tikhonov method's performance worse than the FPCA method. In addition,  the CV method that we propose is aiming to select the best correlation $\rho$, and thus the  tuning parameters selected to optimize the estimation of $\rho$ might not be optimal for the weight functions, which can be observed by comparing the minimum points attained by the Tikhonov method in Figure \ref{fig:0.05case1}. {This might partially explain why the Tikhonov method yields larger estimation errors for the weight functions than the FPCA method, as exhibited in Tables \ref{tab:CV-f} and \ref{tab:CV-g}}.} Additional figures can be found in Appendix \ref{sec:sim-add}.
			\begin{figure}[htbp]
		\centering
		\caption{Histograms for selected tuning parameters in FPCA (left column) and Tikhonov (right column) methods by five-fold CV with noise level $\sigma=0.05$ in Case 1.  }	
		\includegraphics[width=\textwidth]{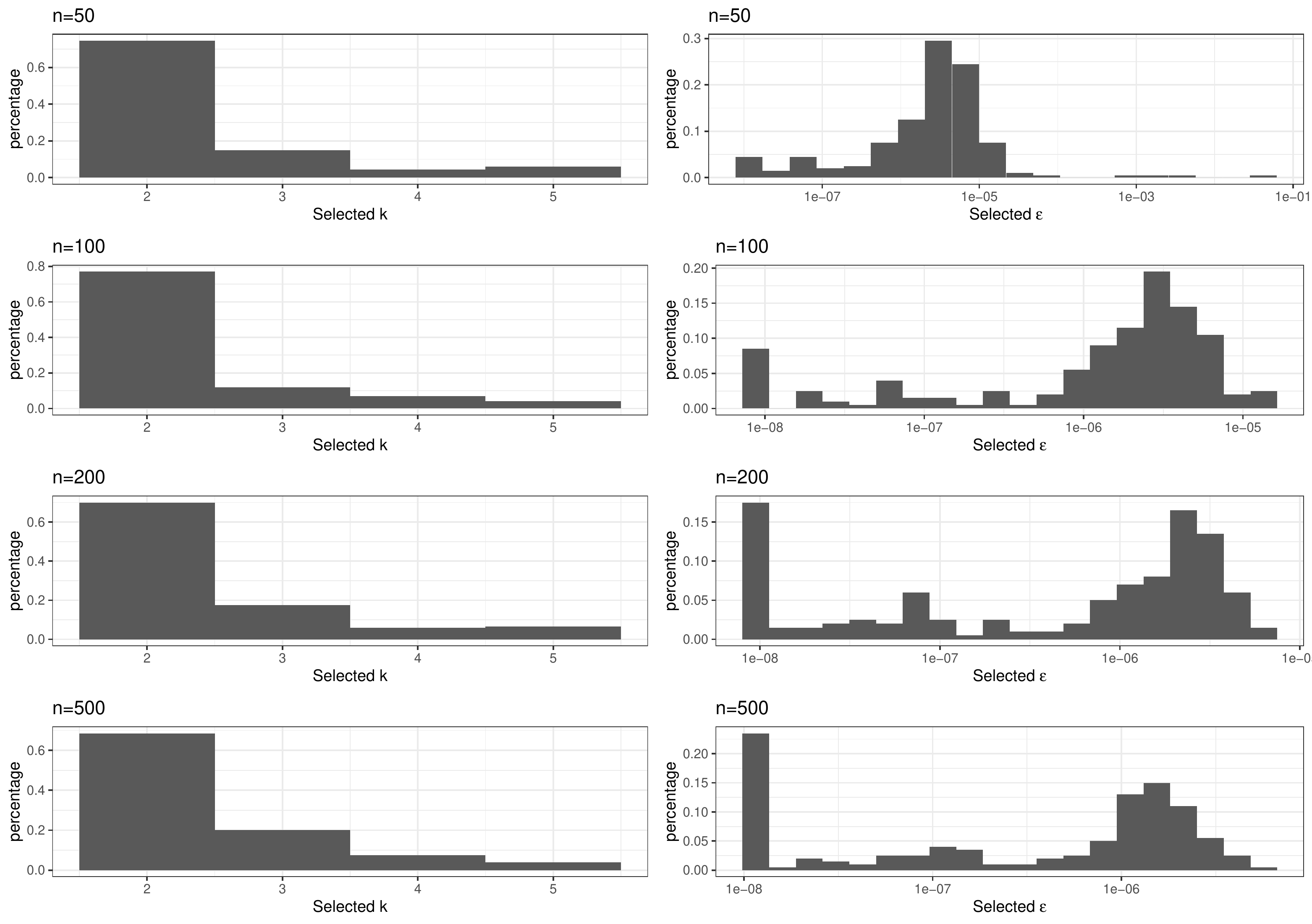}
		\label{fig:CVselect}	
		\end{figure}	

	\section{Application}\label{sec:data}
	We apply the proposed Wasserstein correlation analysis to study functional connectivities between different brain regions by using data from the HCP 1200 subjects release \citep{van2013}. Specifically, we focus on the correlation of longitudinally measured distributions of  the signal strength between two specific areas in the brain.
	
	The dataset we use consists of $n=209$ subjects who are healthy young adults and have been scanned for both a resting-state fMRI (rsfMRI) image and a  task-evoked fMRI (tfMRI) image related to fine motor skills. The rsfMRI data were acquired in four runs of approximately 15 minutes each, two runs in one session and two in the other session. During data acquisition, subjects were instructed to keep their eyes open and fixed  on a projected bright cross-hair on a dark background that was presented in a darkened room. In the tfMRI data acquisition, subjects were presented with visual cues that asked them to either tap their left or right fingers, squeeze their left or right toes, or move their tongue to map motor areas. During the experiment, the brain of each subject was scanned and the neural activities were recorded at 284 equal-spaced time points. More details of the experiment and data acquisition can be found in the reference manual of WU-Minn HCP 1200 Subjects Data Release.
%
%

	We consider the Putamen and Caudate nucleus areas 
	which are known to be related to motor skills in medical literature. At each time $t$,  $X_{i}^{rest}(t)$ and $X_{i}^{motor}(t)$ denote the distributions of the signal strength in the left Caudate nucleus of the $i$-th subject for rsfMRI and tfMRI, respectively, and $Y_{i}^{rest}(t)$ and $Y_{i}^{motor}(t)$ represent the distributions of the signal strength in the  Putamen area. The proposed Wasserstein correlation analysis is applied to investigate the correlation between $X_{i}^{rest}(t)$ and $Y_{i}^{rest}(t)$, as well as the correlation between $X_{i}^{motor}(t)$ and $Y_{i}^{motor}(t)$. The tuning parameters are selected by 5-fold CV \eqref{eq:CV}.
\begin{table}[htbp]
  \caption{Top five largest correlations obtained by FPCA and Tikhonov methods}
    \centering\vspace{0.1in}
    \begin{tabular}{ccccccc}
    \hline
    $\rho$   &       & 1     & 2     & 3     & 4     & 5 \bigstrut\\
    \hline
    \multirow{2}[4]{*}{FPCA} & rfMRI & 0.6966  & 0.3505  & 0.2970  & 0.2593  & 0.2474  \bigstrut\\
\cline{2-7}          & tfMRI & 0.7476  & 0.4448  & 0.4108  & 0.3093  & 0.2809  \bigstrut\\
    \hline
    \multirow{2}[4]{*}{Tikhonov} & rfMRI & 0.7250  & 0.3246  & 0.2510  & 0.2160  & 0.1840  \bigstrut\\
\cline{2-7}          & tfMRI & 0.7457  & 0.3450  & 0.2960  & 0.2135  & 0.1300  \bigstrut\\
    \hline
    \end{tabular}%
  \label{tab:LLcca}%
\end{table}%

	From Table \ref{tab:LLcca} that reports the top five correlations, we observe that FPCA and Tikhonov methods yield similar pattern that the correlation between these two areas in the tfMRI images is larger relative to its rsfMRI counterpart, while difference seems more pronounced by FPCA method.  From Figure \ref{fig:weightfun1}, {we see that compared with rsfMRI,  fluctuation of weight functions along time in tfMRI is more intensive, which suggests that the association between the two brain regions is more dynamic during a motor task.} 
	 In summary, via the proposed method we find that the correlations for distributions of the signal strength between the Putamen area and Caudate nucleus increase during the motor task in contrast to the resting state, moreover, the weight functions in tfMRI express a more active pattern than those in rsfMRI. 
	
				\begin{figure}[htbp]
		\centering
		\caption{Heat-map for the estimated weight functions $\hat U$ (left) and $\hat V$ (right) for the rsfMRI data (top) and  tfMRI data (bottom).  }
		\centering
		\subfigure {
			\includegraphics[width=7.8cm]{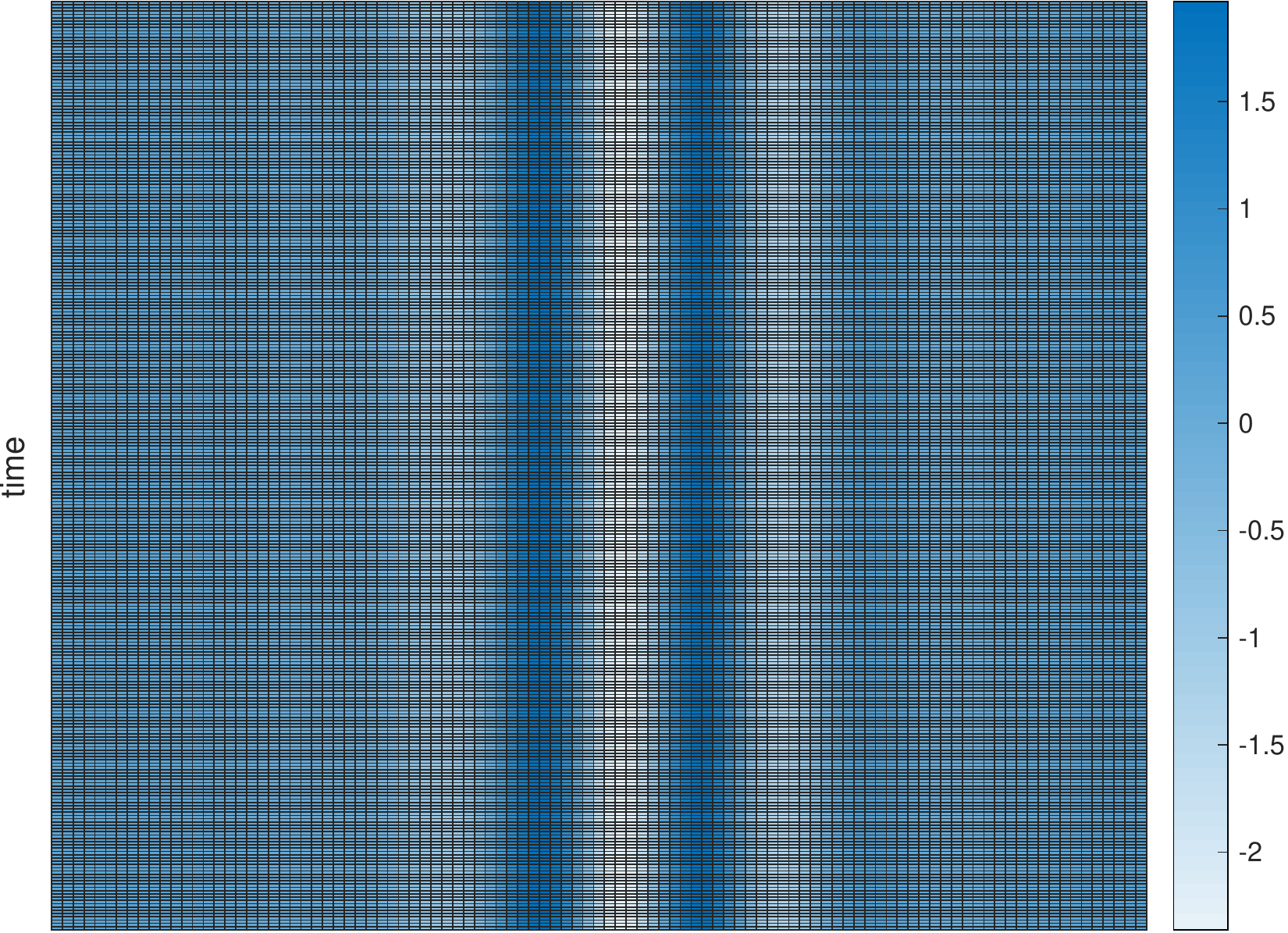}
		}
		\subfigure {
			\includegraphics[width=7.8cm]{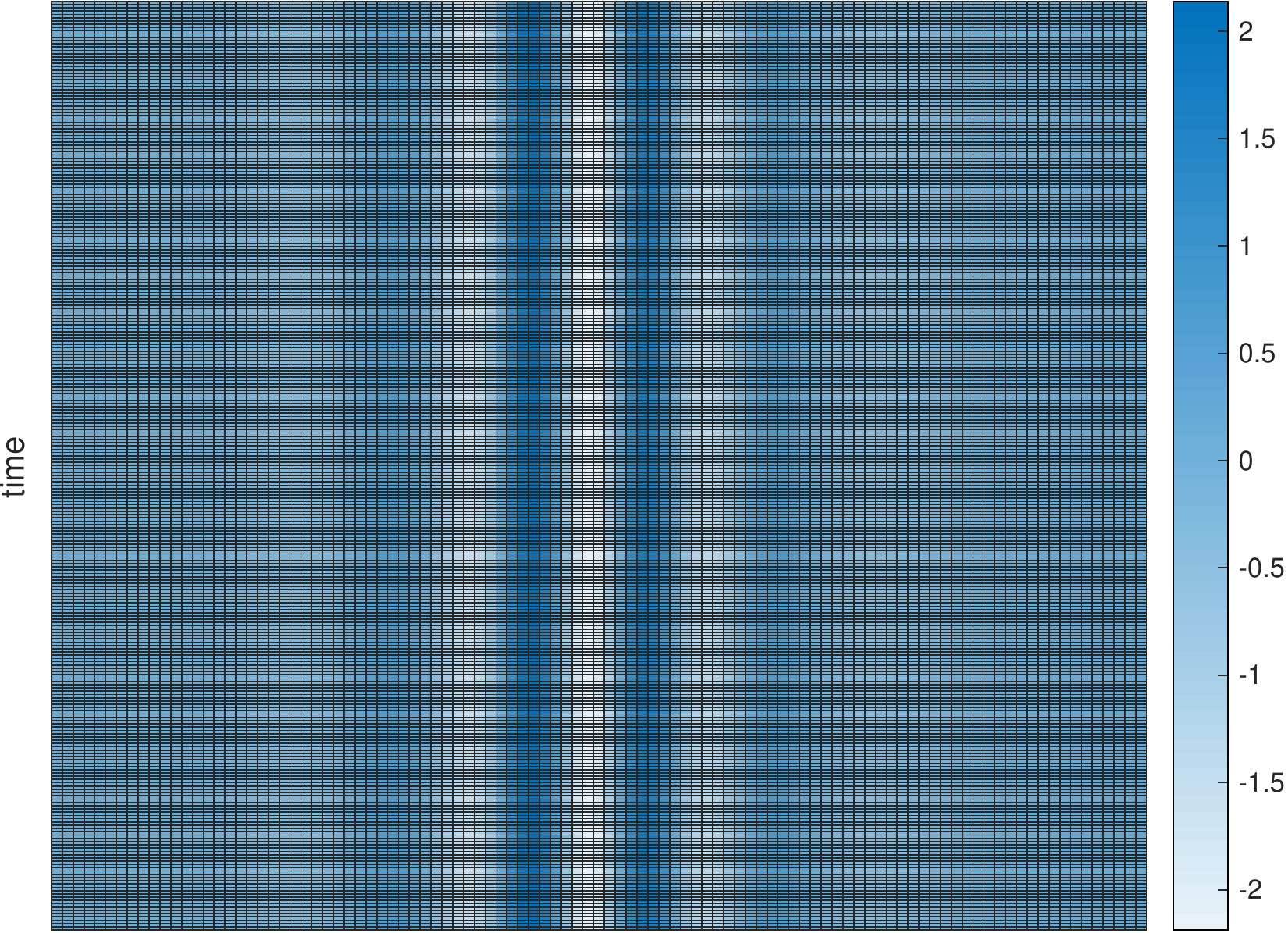}
		}
	
		\subfigure{
			\includegraphics[width=7.8cm]{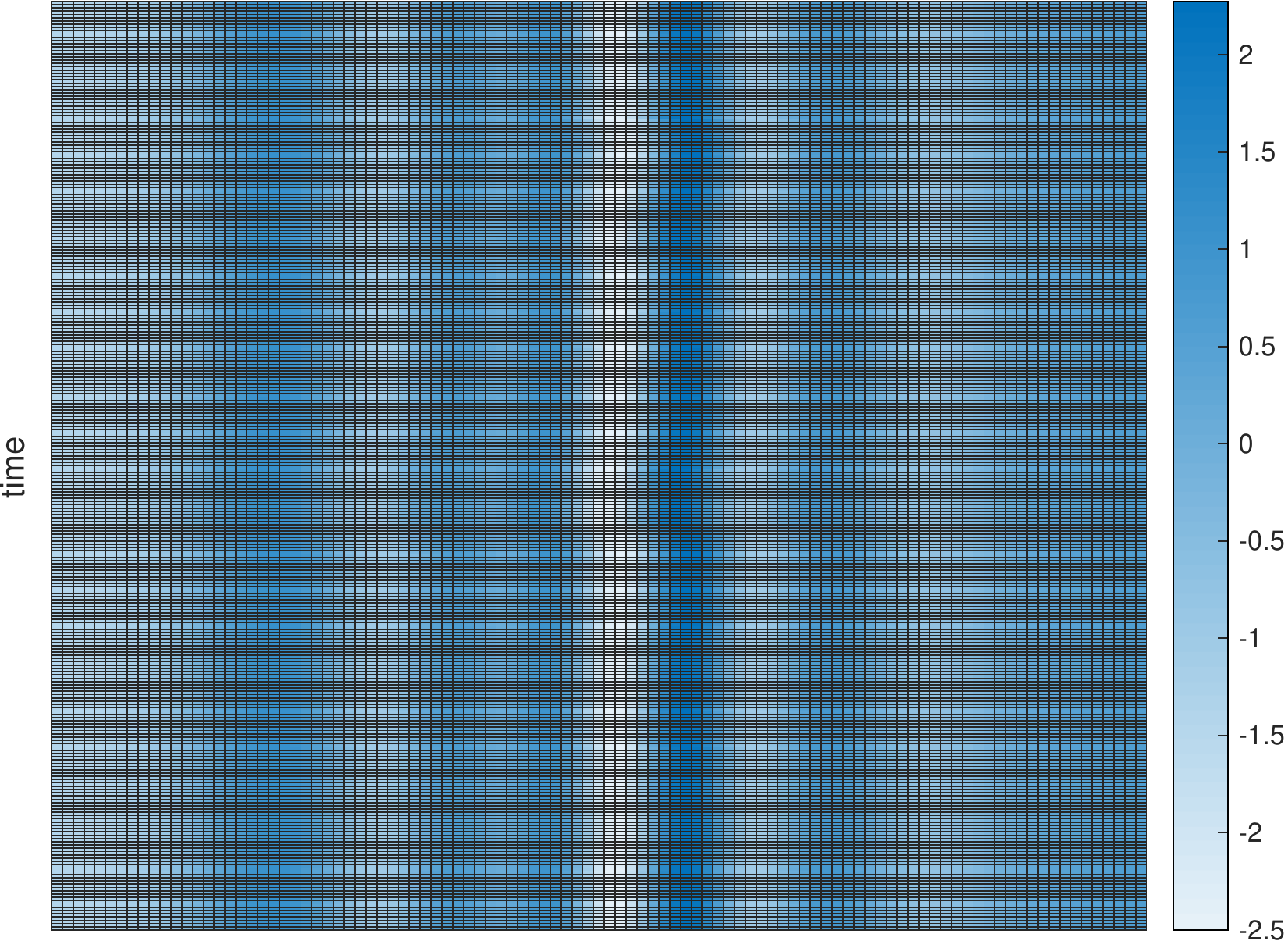}
		}
		\subfigure{
			\includegraphics[width=7.8cm]{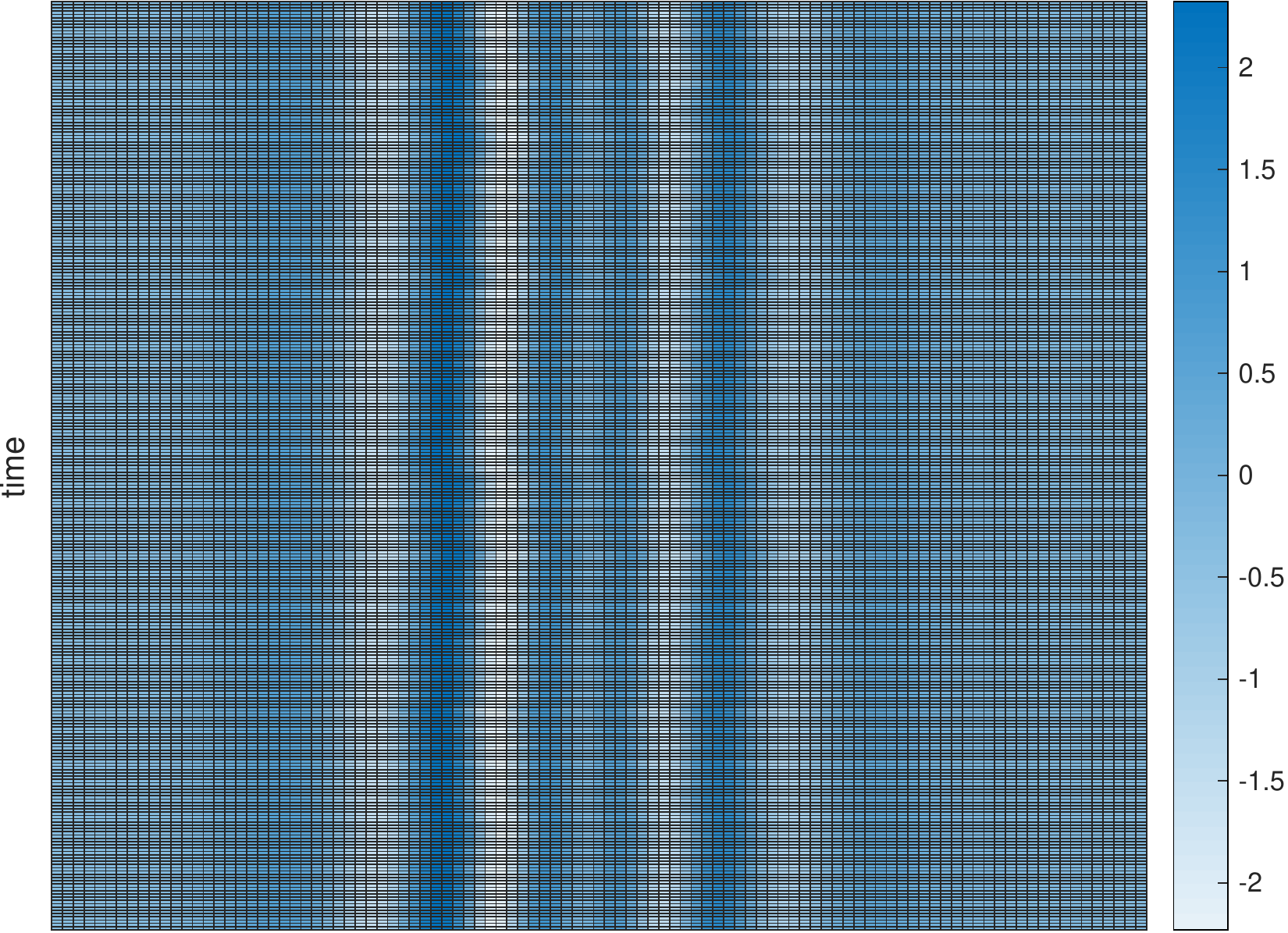}
		}
		\label{fig:weightfun1}
	\end{figure}
	
	\begin{appendices}
		\section{Additional simulation results}\label{sec:sim-add}
Figure \ref{fig:0.1case1} and	 \ref{fig:0.2case1} present the histograms for selected tuning parameters and behaviors of our proposed  estimators with  noise level $\sigma=0.1$ and $\sigma=0.2$.	
			\begin{figure}[htbp]
		\centering
		\includegraphics[width=\textwidth]{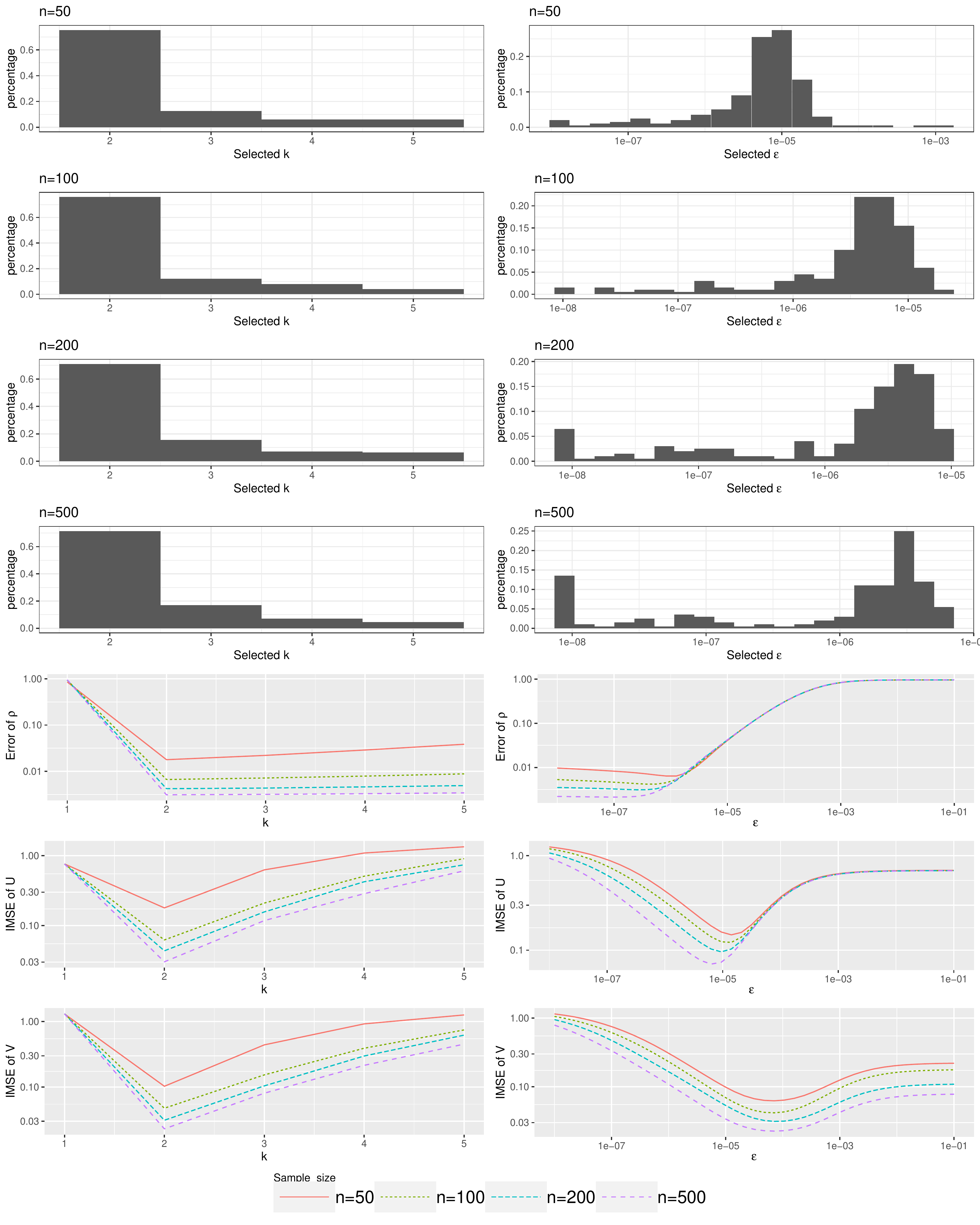}
		\caption{The first four rows present the selected tuning parameters  and the last three rows show the absolute error for $\hat\rho-\rho$ ($\tilde\rho-\rho$, respectively) and IMSE for $\hat{U},\hat{V} $ ($\tilde{U},\tilde{V}  $, respectively) on different tuning parameters for the FPCA (left column) and Tikhonov (right column) methods by the average of 200 Monte Carlo replicates with noise level $\sigma=0.1$ in Case 1.}	
		\label{fig:0.1case1}	
		\end{figure}
		\begin{figure}[htbp]
		\centering
		\includegraphics[width=\textwidth]{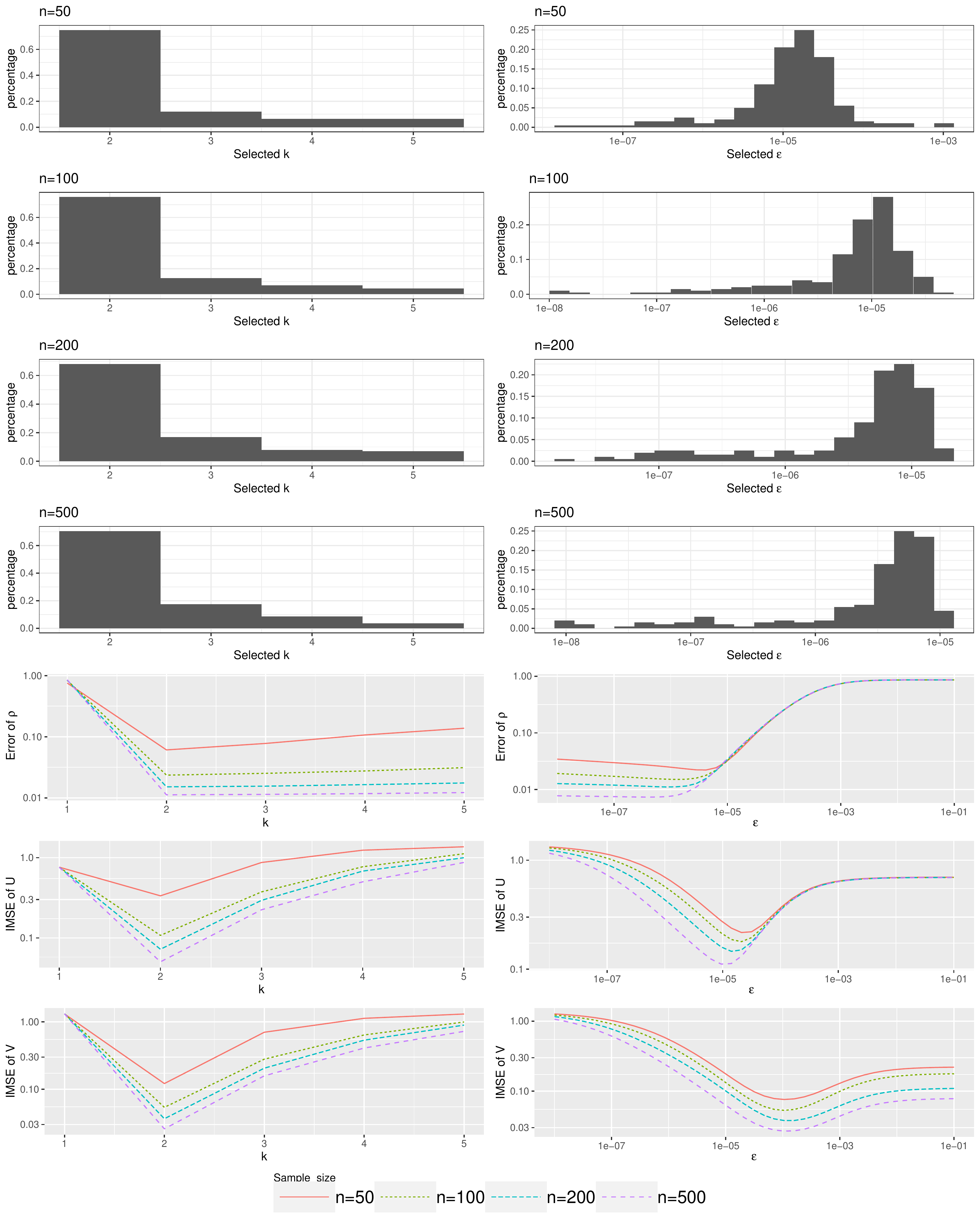}
		\caption{The first four rows present the selected tuning parameters  and the last three rows show the absolute error for $\hat\rho-\rho$ ($\tilde\rho-\rho$, respectively) and IMSE for $\hat{U},\hat{V} $ ($\tilde{U},\tilde{V}  $, respectively) on different tuning parameters for the FPCA (left column) and Tikhonov (right column) methods by the average of 200 Monte Carlo replicates with noise level $\sigma=0.2$ in Case 1.}	
		\label{fig:0.2case1}	
		\end{figure}
	
		\section{Proofs of main results}\label{sec:prof}
		\subsection{Proofs of Theorems and Propositions in Section \ref{sec:WFDA}}\label{sec:prof-WFDA}
		\begin{proof}[of Theorem \ref{thm:tensor}]
			To show that $\mathscr{T}(\mu) $ is a Hilbert space, it is sufficient to prove that $\mathscr{T}(\mu) $ is complete. Suppose that ${V_n}$ is a Cauchy sequence in $\mathscr{T}(\mu)$, i.e., $\lim_{N\rightarrow\infty}\sup_{m,n\geq N} \|V_{n}-V_{m} \|_{\mu}=0 $, where $\|\cdot \|_{\mu} $ denotes the norm induced by the inner product in $\mathscr{T}(\mu)$. From this sequence we choose a subsequence $n_{k} $ such that $\|V_{n_{k}}-V_{n_{k+1}} \|_{\mu}\leqslant2^{-k} $. For $U\in \mathscr{T}(\mu)$, by Cauchy--Schwarz inequality,
			$$\int_{\mathcal{T}} \|U(t) \|_{\mu(t)}\cdot \|V_{n_{k}}(t)-V_{n_{k+1}}(t) \|_{\mu(t)}\mathrm{d}t\leqslant\|U \|_{\mu}\|V_{n_{k}}-V_{n_{k+1}} \|_{\mu}\leqslant 2^{-k} \|U \|_{\mu},$$
			where $\|\cdot  \|_{\mu(t)}$ denotes the norm induced by the inner product in $\tang_{\mu(t)}$. Thus 
			$$\sum_{k=1}^{\infty}\int_{\mathcal{T}} \|U(t) \|_{\mu(t)}\cdot \|V_{n_{k}}(t)-V_{n_{k+1}}(t) \|_{\mu(t)}\mathrm{d}t\leqslant \|U \|_{\mu}<\infty,$$
			which implies 
			\begin{equation}\label{eq:tensor-1}
				\sum_{k=1}^{\infty}\left\|V_{n_{k+1}}(t)-V_{n_{k}}(t)\right\|_{\mu(t)}<\infty, \quad \text{a.e.}.
			\end{equation}
			Since $\tang_{\mu(t)}$ is complete, for those $t\in\tdomain$ such that \eqref{eq:tensor-1} holds,  the limit $V(t)=\lim_{k\rightarrow\infty}V_{n_{k}}(t)$ is defined and falls into $\tang_{\mu(t)}$. For any $\epsilon>0$, choose $N_{\epsilon}$ such that $n,m\geqslant N_{\epsilon} $ and $\|V_{n}-V_{m}\|_{\mu}\leqslant \epsilon $. Fatou’s lemma applied to the function $\|V_{n_{k}}(t)-V_{n_{k+1}}(t) \|_{\mu(t)}$ implies that if $m\geqslant N$, then $\left\|V-V_{m}\right\|_{\mu}^{2} \leq \liminf _{k \rightarrow \infty}\left\|V_{n_{k}}-V_{m}\right\|_{\mu}^{2} \leq \epsilon^{2} $. This shows that $V-V_{m}\in \mathscr{T}(\mu)$, thus $V=(V-V_{m})+V_{m} \in \mathscr{T}(\mu)$. The arbitrariness of $\epsilon$ implies that $\lim _{m \rightarrow \infty}\left\|V-V_{m}\right\|_{\mu}=0$. From the triangle inequality $\| V-V_{n}\|_{\mu}\leqslant\| V-V_{m}\|_{\mu}+\| V_{m}-V_{n}\|_{\mu}\leqslant2\epsilon $, we conclude that $V_{n}$ converges to $V$ in $\mathscr{T}(\mu)$, and further that $\mathscr{T}(\mu)$ is complete.
			
			To see $\mathscr{T}(\mu) $ is separable, we notice that for each $t\in \mathcal{T}$, $\tang_{\mu(t)}$ is a separable Hilbert space thus it has an orthonormal basis $\{\mathbf{\Phi}_{k}(t,s)\}_{k=1}^{\infty} $. Define $\mathbf{O}=\{\{\mathbf{\Phi}_{k}(t,s)\}_{k=1}^{\infty} |\forall t\in\mathcal{T}\}$, which can be regarded as an orthonormal frame along the curve $\mu(t)$. For every element $U\in \mathscr{T}(\mu)$, define $U_{\mathbf{O}}$ be the coordinate representation of $U$ with respect to $\mathbf{O}$. One can see that $U_{\mathbf{O}}$ is an element in the Hilbert space $\mathcal{L}^{2}(\mathcal{T}, l^{2} ) $ of square integrable $l^{2}$-valued measurable functions with norm $\|f\|_{\mathcal{L}^{2}}=\{\int |f(t)|_{l^{2}}^{2} dt \}^{1/2} $ for $ f\in \mathcal{L}^{2}(\mathcal{T},l^{2})$, where $l^{2}$ is the space of square-summable sequences endowed with the inner product $\langle\mathbf{a},\mathbf{b} \rangle_{l^{2}}=\sum_{j=1}^{\infty}a_{j}b_{j}$ and the induced norm $|\cdot|_{l^{2}}$, for each $\mathbf{a}=(a_{1},a_{2},\cdots) $ and $\mathbf{b}=(b_{1},b_{2},\cdots) $. If we define the map $\Upsilon:\mathscr{T}(\mu)\longmapsto \mathcal{L}^{2}(\mathcal{T},l^{2}) $ by $\Upsilon(U)=U_{\mathbf{O}(t)} $, we can immediately see that $\Upsilon$ is a linear map. It is also surjective, because for any $ f\in \mathcal{L}^{2}(\mathcal{T},l^{2})$, the vector field $U$ along $\mu$ given by $U(t)=\sum_{k=1}^{\infty}f_{k}(t)\mathbf{\Phi}_{k}(t,s)$ for $t\in \mathcal{T}$ is an element in $\mathscr{T}(\mu)$, where $f_{k}(t)$ denotes the $k$th component of $f(t)$. It can be verified that $\Upsilon$ preserves the inner product. Therefore, it is a Hilbertian isomorphism. Since $\mathcal{L}^{2}(\mathcal{T},l^{2}) $ is separable, the isomorphism between $\mathcal{L}^{2}(\mathcal{T},l^{2}) $ and $\mathscr{T}(\mu)$ implies separability of  $\mathscr{T}(\mu)$. \QED
		\end{proof}
		\begin{proof}[of Theorem \ref{thm:meancov}]
			By the definition \eqref{eq:logmap} and the equation \eqref{def:muest}, 
			$$
			\mathrm{Log}_{\mu(t)}\hat{\mu}(t)=F_{\hat\mu(t)}^{-1}\circ F_{\mu(t)}-\mathbf{id}=\frac{1}{n}\sum_{i=1}^{n}F_{X_{i}(t)}^{-1}\circ F_{\mu(t)}-\mathbf{id}=\frac{1}{n}\sum_{i=1}^{n}\mathrm{Log}_{\mu(t)}X_{i}(t).
			$$
			By Proposition 3.2.14 in \cite{panaretos2020}, we have $\mathbf{E}\mathrm{Log}_{\mu_{i}(t)}X_{i}(t)=0 $. Given this, the part (a) is verified by applying the central limit theorem in Hilbert space \citep{aldous1976} that asserts  convergence of the process $(\sqrt{n})^{-1}\sum_{i=1}^{\infty} \mathrm{Log}_{\mu(t)}X_{i}(t) $ to a Gaussian measure on tensor Hilbert space $\mathscr{T}(\mu)$ with covariance operator $\mathbf{C}'(\cdot)=\mathbf{E}(\llangle U,\cdot \rrangle_{\mu}U )$ defined via the random element $U=\mathrm{Log}_{\mu(t)}X_{i}(t) $ in $\mathscr{T}(\mu)$. The first statement of the part (b) is a corollary of (a), while the second statement follows from the first one and the compactness of $\mathcal{T}$. 
			
			For assertion  (c),  note that $\mu(t)$ is atomless for each $t\in\tdomain$, which implies that $\hat\mu(t)$ is also atomless \citep{chen2020}. Then we have
			$$\begin{aligned}
				\mathcal{P}_{\mathfrak{B(\hat\mu,\hat\mu)} }^{\mathfrak{B(\mu,\mu )} }\mathbf{\hat C} -\mathbf{C} =&\frac{1}{n}\sum_{i=1}^{n}\left(\mathcal{P}_{\hat\mu }^{\mu } \mathrm{Log}_{\hat\mu } X_{i} \right)\otimes\left(\mathcal{P}_{\hat\mu }^{\mu } \mathrm{Log}_{\hat\mu } X_{i} \right)-\mathbf{C} \\
				=&\frac{1}{n}\sum_{i=1}^{n}\left( \mathrm{Log}_{\mu } X_{i} \right)\otimes\left(\mathrm{Log}_{\mu } X_{i} \right)-\mathbf{C} \\
				&+\frac{1}{n}\sum_{i=1}^{n}\left(\mathcal{P}_{\hat\mu }^{\mu } \mathrm{Log}_{\hat\mu } X_{i}- \mathrm{Log}_{\mu } X_{i} \right)\otimes\left(\mathrm{Log}_{\mu } X_{i} \right)\\
				&+\frac{1}{n}\sum_{i=1}^{n}\left( \mathrm{Log}_{\mu } X_{i} \right)\otimes\left(\mathcal{P}_{\hat\mu }^{\mu } \mathrm{Log}_{\hat\mu } X_{i}-\mathrm{Log}_{\mu } X_{i} \right)\\
				&+\frac{1}{n}\sum_{i=1}^{n}\left(\mathcal{P}_{\hat\mu }^{\mu } \mathrm{Log}_{\hat\mu } X_{i}- \mathrm{Log}_{\mu } X_{i} \right)\otimes\left(\mathcal{P}_{\hat\mu }^{\mu } \mathrm{Log}_{\hat\mu } X_{i}-\mathrm{Log}_{\mu } X_{i} \right)\\
				:=&A_{1}+A_{2}+A_{3}+A_{4}.
			\end{aligned}$$
		For $A_2$, we further have
			\begin{equation}\label{eq:meancov-1}
				\begin{aligned}
					|||A_{2}|||_{\mathfrak{B}(\mu ,\mu )}^{2}\lesssim&\frac{1}{n^{2}}\sum_{i_{1}=1}^{n}\sum_{i_{2}=1}^{n}(\|\mathrm{Log}_{\mu }X_{i_{1}} \|_{\mu }^{2}+\|\mathrm{Log}_{\mu }X_{i_{2}} \|_{\mu }^{2} )\\
					&\times (\|\mathcal{P}_{\hat\mu }^{\mu }\mathrm{Log}_{\hat\mu }X_{i_{1}}-\mathrm{Log}_{\mu }X_{i_{1}} \|_{\mu }^{2} +\|\mathcal{P}_{\hat\mu }^{\mu }\mathrm{Log}_{\hat\mu }X_{i_{2}}-\mathrm{Log}_{\mu }X_{i_{2}} \|_{\mu }^{2})\\
					=&\frac{2}{n}\sum_{i=1}^{n}\|\mathrm{Log}_{\mu }X_{i} \|_{\mu }^{2}\|\mathcal{P}_{\hat\mu }^{\mu }\mathrm{Log}_{\hat\mu }X_{i}-\mathrm{Log}_{\mu }X_{i} \|_{\mu }^{2}\\
					&+\frac{2}{n^{2}}\sum_{i_{1}=1}^{n}\sum_{i_{2}=1}^{n}\|\mathrm{Log}_{\mu }X_{i_{1}} \|_{\mu }^{2}\|\mathcal{P}_{\hat\mu }^{\mu }\mathrm{Log}_{\hat\mu }X_{i_{2}}-\mathrm{Log}_{\mu }X_{i_{2}} \|_{\mu }^{2}.
				\end{aligned}
			\end{equation}
			For the first term on the right hand side of equation  \eqref{eq:meancov-1}, 
			$$
			\begin{aligned}
				&\frac{1}{n}\sum_{i=1}^{n}\|\mathrm{Log}_{\mu }X_{i} \|_{\mu }^{2}\|\mathcal{P}_{\hat\mu }^{\mu }\mathrm{Log}_{\hat\mu }X_{i}-\mathrm{Log}_{\mu }X_{i} \|_{\mu }^{2}\\
				=&\frac{1}{n}\sum_{i=1}^{n}\|\mathrm{Log}_{\mu }X_{i} \|_{\mu }^{2}\times
				\int\left\langle \mathcal{P}_{\hat\mu (t)}^{\mu (t)}\mathrm{Log}_{\hat\mu (t)}X_{i}(t)-\mathrm{Log}_{\mu (t)}X_{i}(t),\right.\\
				&\left.\mathcal{P}_{\hat\mu (t)}^{\mu (t)}\mathrm{Log}_{\hat\mu (t)}X_{i}(t)-\mathrm{Log}_{\mu (t)}X_{i}(t) \right\rangle_{\mu (t) }\mathrm{d}t	\\
				=&\iint\left(F_{\hat\mu (t)}^{-1}\circ F_{\mu (t)}(u)-u \right)^2\mathrm{d}F_{\mu(t)}(u)\mathrm{d}t\times \frac{1}{n}\sum_{i=1}^{n}\|\mathrm{Log}_{\mu }X_{i} \|_{\mu }^{2}\\
				=&\int_{\mathcal{T}} d^2(\hat\mu (t),\mu (t))\mathrm{d}t \times \frac{1}{n}\sum_{i=1}^{n}\|\mathrm{Log}_{\mu }X_{i} \|_{\mu }^{2}\\
				=&O_{p}\left(\frac{1}{n}\right),
			\end{aligned}
			$$
			where the second equality follows from $\mathcal{P}_{\hat\mu (t)}^{\mu (t)}\mathrm{Log}_{\hat\mu (t)}X_{i}(t)-\mathrm{Log}_{\mu (t)}X_{i}(t)=F_{\hat\mu (t)}^{-1}\circ F_{\mu (t)}-\mathbf{id} $, and the last equality is based on the part (b) and $n^{-1} \sum_{i=1}^{n}\|\mathrm{Log}_{\mu }X_{i} \|_{\mu }^{2}=O_{p}(1)$ by the law of large numbers. A similar argument shows that the second term in \eqref{eq:meancov-1} is of order $O_{p}(n^{-1}) $. 
			Thus $|||A_{2}|||_{\mathfrak{B}(\mu ,\mu )}^{2}=O_{p}(n^{-1})$. Analogous calculation shows that $|||A_{3}|||_{\mathfrak{B}(\mu ,\mu )}^{2}=O_{p}(n^{-1})$ and $|||A_{4}|||_{\mathfrak{B}(\mu ,\mu )}^{2}=O_{p}(n^{-2})$. According to \cite{dauxois1982}, $|||{n}^{-1}\sum_{i=1}^{n}\left( \mathrm{Log}_{\mu } X_{i} \right)\otimes\left(\mathrm{Log}_{\mu } X_{i} \right)-\mathbf{C} |||_{\mathfrak{B}(\mu ,\mu )}^{2}=O_{p}(n^{-1})$, and consequently, $\left|\left|\left|\mathcal{P}_{\mathfrak{B(\hat\mu ,\hat\mu )} }^{\mathfrak{B(\mu ,\mu )} }\mathbf{\hat C} -\mathbf{C} \right|\right|\right|^2_{\mathfrak{B(\mu ,\mu )}}=O_{p}(n^{-1})$. The result for $\hat\lambda_{k}$ follows from the perturbation argument in \cite{bosq2000}.
			
			For the part (e), we first note that 
			$$
			\begin{aligned}
				\mathcal{P}_{\mathfrak{B(\hat\mu ,\hat\mu )} }^{\mathfrak{B(\mu ,\mu )} }\mathbf{\hat C} =& \frac{1}{n}\sum_{i=1}^{n}\left(\mathcal{P}_{\hat\mu }^{\mu }\mathrm{Log}_{\hat\mu }X_{i} \right)\otimes\left(\mathcal{P}_{\hat\mu }^{\mu }\mathrm{Log}_{\hat\mu }X_{i} \right)\\
				=& \frac{1}{n}\sum_{i=1}^{n}\left(\mathrm{Log}_{\mu } X_{i}-\overline{\mathrm{Log}_{\mu } X } \right)\otimes\left(\mathrm{Log}_{\mu } X_{i}-\overline{\mathrm{Log}_{\mu } X } \right),
			\end{aligned}
			$$
			where $\overline{\mathrm{Log}_{\mu } X }=n^{-1}\sum_{i=1}^{n} \mathrm{Log}_{\mu} X_{i}$. 
By the proof of Theorem 5.1.18 in \cite{hsing2015}, for all $\{j:\lnorm\Delta\rnorm_{\mathfrak{B}(\mu,\mu) }\leqslant \eta_{j}/2 \}$, we have the following expansion, 
			$$
	\begin{aligned}
		\mathcal{P}_{\hat\mu }^{\mu }\mathbf{\hat\Phi} _{j}-\mathbf{\Phi} _{j}
	= & \sum_{k\neq j}\frac{ \llangle \Delta \mathbf{\Phi} _{j}, \mathbf{\Phi} _{k}\rrangle_{\mu}}{\left(\lambda_{j}-\lambda_{k}\right)}\mathbf{\Phi} _{k}+\sum_{k\neq j}\frac{\llangle\Delta(\mathcal{P}_{\hat\mu }^{\mu }\mathbf{\hat\Phi} _{j}- \mathbf{\Phi} _{j}), \mathbf{\Phi} _{k}\rrangle_{\mu}}{\left(\lambda_{j}-\lambda_{k}\right)}\mathbf{\Phi} _{k} \\
	&+\sum_{k\neq j}\sum_{s=1}^{\infty}\frac{(\lambda_{j}-\hat\lambda_{j})^s}{(\lambda_{j}-\lambda_{k} )^{s+1} }\llangle\Delta \mathcal{P}_{\hat\mu }^{\mu }\mathbf{\hat\Phi} _{j}, \mathbf{\Phi} _{k}\rrangle_{\mu} \mathbf{\Phi} _{k}+\llangle (\mathcal{P}_{\hat\mu }^{\mu }\mathbf{\hat\Phi} _{j}- \mathbf{\Phi} _{j}), \mathbf{\Phi} _{j}\rrangle_{\mu}\mathbf{\Phi} _{j},
	\end{aligned}
$$
where $\Delta=\mathbf{\mathcal{P}_{\mathfrak{B(\hat\mu ,\hat\mu )} }^{\mathfrak{B(\mu ,\mu )} }\hat C} -\mathbf{C} $. For all $k\neq j$, we have
\begin{equation}\label{eq:mean-cov-2}
	\begin{aligned}
		\mathbf{E}\llangle\Delta \mathbf{\Phi}_{j}, \mathbf{\Phi}_{k} \rrangle_{\mu }^2=&\mathbf{E} \left\llangle \mathcal{P}_{\mathfrak{B(\hat\mu ,\hat\mu )} }^{\mathfrak{B(\mu ,\mu )} }\mathbf{\hat C} \mathbf{\Phi}_{j}, \mathbf{\Phi}_{k} \right\rrangle_{\mu }^2\\
				=&\mathbf{E}\left\{\frac{1}{n}\sum_{i=1}^{n}\llangle \mathrm{Log}_{\mu }X_{i}-\overline{\mathrm{Log}_{\mu } X_{i} },\mathrm{\Phi}_{i} \rrangle_{\mu }\llangle \mathrm{Log}_{\mu }X_{i}-\overline{\mathrm{Log}_{\mu } X_{i} },\mathrm{\Phi}_{i} \rrangle_{\mu } \right\}^2\\
				=&\mathbf{E}\left\{\frac{1}{n}\sum_{i=1}^{n}(\xi_{i,k}-\overline{\xi}_{k})(\xi_{i,j}-\overline{\xi}_{j}) \right\}^{2}\\
				=&\frac{\lambda_{j}\lambda_{k}}{n}\left(1-\frac{1}{n} \right)^2,
	\end{aligned}
\end{equation}
			where $\overline{\xi}_{k}=n^{-1}\sum_{i=1}^{n}\xi_{i,k}$ and $\xi_{i,k} $ is the $k$-th component score of $\mathrm{Log}_{\mu }X_{i}$.  By equation \eqref{eq:mean-cov-2} and Lemma 7 in \cite{dou2012}, 
\begin{equation}\label{eq:mean-cov-3}
	\mathbf{E}\left\|\sum_{k\neq j}\frac{ \llangle \Delta \mathbf{\Phi} _{j}, \mathbf{\Phi} _{k}\rrangle_{\mu}}{\left(\lambda_{j}-\lambda_{k}\right)}\mathbf{\Phi} _{k} \right\|_{\mu}^2=\sum_{k\neq j}\frac{\mathbf{E}\llangle \Delta \mathbf{\Phi} _{j}, \mathbf{\Phi} _{k}\rrangle_{\mu}^2 }{\left(\lambda_{j}-\lambda_{k}\right)^2}=\frac{1}{n}\left(1-\frac{1}{n}\right)^2\sum_{k\neq j}\frac{\lambda_{j}\lambda_{k}}{\left(\lambda_{j}-\lambda_{k}\right)^2}\leqslant C \frac{j^{2}}{n},
\end{equation}			
	where $C$ is a constant does not depend on $j$. From Bessel's inequality and given that $|||\Delta |||_{\mathfrak{B(\mu ,\mu )}}<\eta_j/2$,	
	\begin{equation}\label{eq:mean-cov-3}
		\begin{aligned}
			\mathbf{E}\left\|\sum_{k\neq j}\frac{\llangle\Delta(\mathcal{P}_{\hat\mu }^{\mu }\mathbf{\hat\Phi} _{j}- \mathbf{\Phi} _{j}), \mathbf{\Phi} _{k}\rrangle_{\mu}}{\left(\lambda_{j}-\lambda_{k}\right)}\mathbf{\Phi} _{k} \right\|_{\mu}^2\leqslant&\mathbf{E}\frac{\lnorm\Delta \rnorm^2_{\mathfrak{B}(\mu,\mu)}\|\mathcal{P}_{\hat\mu }^{\mu }\mathbf{\hat\Phi} _{j}-\mathbf{\Phi} _{j}\|^2_{\mu}  }{(2\eta_{j})^2}\\
			<&\frac{1}{16}\mathbf{E}\|\mathcal{P}_{\hat\mu }^{\mu }\mathbf{\hat\Phi} _{j}-\mathbf{\Phi} _{j}\|^2_{\mu}. 
		\end{aligned}
	\end{equation}	
	Similarly,
	\begin{equation}\label{eq:mean-cov-4}
		\begin{aligned}
		&\mathbb{E}\left\|\sum_{k\neq j}\sum_{s=1}^{\infty}\frac{(\lambda_{j}-\hat\lambda_{j})^s}{(\lambda_{j}-\lambda_{k} )^{s+1} }\llangle\Delta \mathcal{P}_{\hat\mu }^{\mu }\mathbf{\hat\Phi} _{j}, \mathbf{\Phi} _{k}\rrangle_{\mu}  \mathbf{\Phi} _{k} \right\|_{\mu}^2\\
	 	=&\mathbb{E}\sum_{k\neq j}\left(\sum_{s=1}^{\infty}\frac{(\lambda_{j}-\hat\lambda_{j})^s}{(\lambda_{j}-\lambda_{k} )^{s+1} } \right)^{2}\llangle\Delta \mathcal{P}_{\hat\mu }^{\mu }\mathbf{\hat\Phi} _{j}, \mathbf{\Phi} _{k}\rrangle_{\mu}^2\\
	 	\leqslant&\mathbb{E}\frac{\lnorm\Delta \rnorm_{\mathfrak{B}(\mu,\mu)}^2}{(2\eta_{j}-\lnorm\Delta \rnorm_{\mathfrak{B}(\mu,\mu)})^2}\left\{2\sum_{k\neq j}\frac{\llangle\Delta \mathcal{P}_{\hat\mu }^{\mu }\mathbf{\Phi} _{j}, \mathbf{\Phi} _{k}\rrangle_{\mu}^2 }{\left(\lambda_{j}-\lambda_{k}\right)^{2}}\right.+2\left.\sum_{k\neq j}\frac{\llangle\Delta \mathcal{P}_{\hat\mu }^{\mu }(\hat{\mathbf{\Phi} }_{j}-\mathbf{\Phi} _{j}), \mathbf{\Phi} _{k}\rrangle_{\mu}^2 }{\left(\lambda_{j}-\lambda_{k}\right)^{2}} \right\}\\
	 	\leqslant& \frac{8}{9}\mathbb{E}\left[ \frac{\lnorm\Delta \rnorm_{\mathfrak{B}(\mu,\mu)}^2}{\eta_{j}^2} \sum_{k\neq j}\frac{\llangle\Delta \mathcal{P}_{\hat\mu }^{\mu }\mathbf{\Phi} _{j}, \mathbf{\Phi} _{k}\rrangle_{\mu}^2 }{\left(\lambda_{j}-\lambda_{k}\right)^{2}}+ \frac{\lnorm\Delta \rnorm_{\mathfrak{B}(\mu,\mu)}^4}{\eta_{j}^4}\|\mathcal{P}_{\hat\mu }^{\mu }\mathbf{\hat\Phi} _{j}-\mathbf{\Phi} _{j}\|^2_{\mu}\right]\\
	 	\leqslant& \frac{2}{9}\mathbb{E}\sum_{k\neq j}\frac{\llangle\Delta \mathcal{P}_{\hat\mu }^{\mu }\mathbf{\Phi} _{j}, \mathbf{\Phi} _{k}\rrangle_{\mu}^2 }{\left(\lambda_{j}-\lambda_{k}\right)^{2}}+\frac{1}{18}\mathbb{E}\|\mathcal{P}_{\hat\mu }^{\mu }\mathbf{\hat\Phi} _{j}-\mathbf{\Phi} _{j}\|^2_{\mu} 	 \end{aligned}.
	\end{equation}
	Combing equation \eqref{eq:mean-cov-2} to \eqref{eq:mean-cov-4}, the proof is completed by the fact  $\|\llangle (\mathcal{P}_{\hat\mu }^{\mu }\mathbf{\hat\Phi} _{j}- \mathbf{\Phi} _{j}), \mathbf{\Phi} _{j}\rrangle_{\mu}\mathbf{\Phi} _{j}\|_{\mu}=1/2\|\mathcal{P}_{\hat\mu }^{\mu }\mathbf{\hat\Phi} _{j}-\mathbf{\Phi} _{j}\|^2_{\mu}  $ and Cauchy-Schwarz inequality.
	
			\QED
		\end{proof}
		\subsection{Proofs of Theorems and Propositions in Section \ref{sec:GCA}}\label{sec:prof-GCA}
		\begin{proof}[of Theorem \ref{thm:fpc}]
			To reduce notational burden, we shall suppress the subscripts and superscripts from $\mathcal{P}_{\hat\mu_{X}}^{\mu_{X}} $ in the sequel. We first show that $\lnorm (\mathcal{P} \hat{\mathbf{C}}_{Y,k_{Y}}^{-1})(\mathcal{P}\hat{\mathbf{C}}_{YX})-\mathbf{C}_{Y}^{-1}\mathbf{C}_{YX} \rnorm^2_{\mathfrak{B}(\mu_{X},\mu_{Y} )}=O_{p}(n^{(1-2b_{Y})/(a_{Y}+2b_{Y})}) $, where  $\mathbf{C}_{Y,k_{Y}}^{-1}=\sum_{j=1}^{k_{Y}}\lambda_{Y,j}^{-1}\mathbf{\Phi}_{Y,j} $.  Given $h=\sum_{j=1}^{\infty}h_{j}\mathbf{\Phi}_{1,j}\in\mathscr{T}(\mu_{X}) $ with $\|h\|_{\mu_{X}}=1 $, we have 
			\begin{equation}\label{eq:fpc-1}
				\begin{aligned}
					(\mathcal{P} \hat{\mathbf{C}}_{Y,k_{Y}}^{-1})(\mathcal{P}\hat{\mathbf{C}}_{YX})h-\mathbf{C}_{Y}^{-1}\mathbf{C}_{YX} h
					=&(\mathcal{P} \hat{\mathbf{C}}_{Y,k_{Y}}^{-1})(\mathcal{P}\hat{\mathbf{C}}_{YX})h-{\mathbf{C}}_{Y,k_{Y}}^{-1}(\mathcal{P}\hat{\mathbf{C}}_{YX})h\\
					&+{\mathbf{C}}_{Y,k_{Y}}^{-1}(\mathcal{P}\hat{\mathbf{C}}_{YX})h-{\mathbf{C}}_{Y,k_{Y}}^{-1}{\mathbf{C}}_{YX}h\\
					&+{\mathbf{C}}_{Y,k_{Y}}^{-1}{\mathbf{C}}_{YX}h-\mathbf{C}_{Y}^{-1}\mathbf{C}_{YX} h\\
					:=&J_{1}+J_{2}+J_{3}.
				\end{aligned}
			\end{equation}
			For the first term in \eqref{eq:fpc-1}, we further have
			\begin{equation}\label{eq:fpc-2}
				\begin{aligned}
					J_{1}=&\sum_{j_{2}=1}^{k_{Y}}\hat{\lambda}_{Y,j_{2}}^{-1}\left\{\sum_{j_1=1}^{\infty}h_{j_{1}}\left(\frac{1}{n}\sum_{i=1}^{n}\xi_{ij_{1}}\eta_{ij_{2}} \right) \right\}\mathcal{P}\hat{\mathbf{\Phi}}_{Y,j_{2}} -\sum_{j_{2}=1}^{k_{Y} } {\lambda}_{Y,j_{2}}^{-1}\left\{ \sum_{j_1=1}^{\infty}h_{j_{1}} \left(\frac{1}{n}\sum_{i=1}^{n}\xi_{ij_{1}}\eta_{ij_{2}} \right) \right\} {\mathbf{\Phi}}_{Y,j_{2}}\\
					=&\sum_{j_{2}=1}^{k_{Y} } {\lambda}_{Y,j_{2}}^{-1}\left\{ \sum_{j_1=1}^{\infty}h_{j_{1}} \left(\gamma_{j_{1}j_{2}}- \frac{1}{n}\sum_{i=1}^{n}\xi_{ij_{1}}\eta_{ij_{2}} \right) \right\} {\mathbf{\Phi}}_{Y,j_{2}}\\
					&+\sum_{j_{2}=1}^{k_{Y} } \hat{\lambda}_{Y,j_{2}}^{-1} \left(\sum_{j_{1}=1}^{\infty}\gamma_{j_{1}j_{2}}h_{j_{1}}\right)\mathcal{P}\hat{\mathbf{\Phi}}_{Y,j_{2}} -\sum_{j_{2}=1}^{k_{Y} } {\lambda}_{Y,j_{2}}^{-1}\left(\sum_{j_{1}=1}^{\infty}\gamma_{j_{1}j_{2}}h_{j_{1}}\right) {\mathbf{\Phi}}_{Y,j_{2}}\\
					&+\sum_{j_{2}=1}^{k_{Y} } \hat{\lambda}_{Y,j_{2}}^{-1} \left\{ \sum_{j_1=1}^{\infty}h_{j_{1}} \left(\frac{1}{n}\sum_{i=1}^{n}\xi_{ij_{1}}\eta_{ij_{2}}- \gamma_{j_{1}j_{2}} \right) \right\}\mathcal{P}\hat{\mathbf{\Phi}}_{Y,j_{2}} \\
					:=&J_{11}+J_{12}+J_{13}.
				\end{aligned}
			\end{equation}
			For $J_{11}$, it is of order 
			\begin{equation}\label{eq:fpc-3}
				\begin{aligned}
					\mathbf{E}\|J_{11}\|_{\mu_{Y}}^2=&\sum_{j_{2}=1}^{k_{Y} } {\lambda}_{Y,j_{2}}^{-2}\mathbf{E}\left\{ \sum_{j_1=1}^{\infty}h_{j_{1}} \left(\gamma_{j_{1}j_{2}}- \frac{1}{n}\sum_{i=1}^{n}\xi_{ij_{1}}\eta_{ij_{2}} \right) \right\}^2\\
					\leqslant& \sum_{j_{2}=1}^{k_{Y} } {\lambda}_{Y,j_{2}}^{-2} \sum_{j_{1}=1}^{\infty}\mathbf{E}\left(\gamma_{j_{1}j_{2}}- \frac{1}{n}\sum_{i=1}^{n}\xi_{ij_{1}}\eta_{ij_{2}} \right) ^2\\
					\leqslant&\sum_{j_{2}=1}^{k_{Y} } {\lambda}_{Y,j_{2}}^{-2} \sum_{j_{1}=1}^{\infty}\frac{C\lambda_{X,j_{1}}\lambda_{Y,j_{2}}}{n}\\
					=&\frac{C}{n} \sum_{j_{2}=1}^{k_{Y} } {\lambda}_{Y,j_{2}}^{-1} \sum_{j_{1}=1}^{\infty}\lambda_{X,j_{1}}=O_{p}\left(\frac{k_{Y}^{a_{Y}+1}}{n} \right),
				\end{aligned}
			\end{equation}
			and note that under the assumption \hyperref[asm:b1]{B.1}, $k_{Y}^{a_{Y}+1}/n=n^{-(2b_{Y}-1)/(a_{Y}+2b_{Y})}$.

			For $J_{13}$, we have
			\begin{equation}\label{eq:fpc-6}
				\begin{aligned}
					\mathbf{E}\|J_{13}\|_{\mu_{Y}}^2=&\mathbf{E}\sum_{j_{2}=1}^{k_{Y} } \hat{\lambda}_{Y,j_{2}}^{-2}\sum_{j_1=1}^{\infty}\left(\frac{1}{n}\sum_{i=1}^{n}\xi_{ij_{1}}\eta_{ij_{2}}- \gamma_{j_{1}j_{2}} \right)^2\\
					\leqslant& 4 \sum_{j_{2}=1}^{k_{Y} } {\lambda}_{Y,j_{2}}^{-2}\sum_{j_1=1}^{\infty}\mathbf{E}\left(\frac{1}{n}\sum_{i=1}^{n}\xi_{ij_{1}}\eta_{ij_{2}}- \gamma_{j_{1}j_{2}} \right)^2\\
					\leqslant& 4C\sum_{j_{2}=1}^{k_{Y} } {\lambda}_{Y,j_{2}}^{-2}\sum_{j_1=1}^{\infty}\frac{\lambda_{X,j_{1}}\lambda_{Y,j_{2}}}{n}\\
					=&\frac{4C}{n} \sum_{j_{2}=1}^{k_{Y} } {\lambda}_{Y,j_{2}}^{-1} \sum_{j_{1}=1}^{\infty}\lambda_{X,j_{1}}=O_{p}\left(\frac{k_{Y}^{a_{Y}+1}}{n} \right).
				\end{aligned}
			\end{equation}
		
		For $J_{12}$, we divide it into two components by  
		\begin{align*}
			J_{12}=&\sum_{j_{2}=1}^{k_{Y} } \hat{\lambda}_{Y,j_{2}}^{-1} \left(\sum_{j_{1}=1}^{\infty}\gamma_{j_{1}j_{2}}h_{j_{1}}\right)\mathcal{P}\hat{\mathbf{\Phi}}_{Y,j_{2}} -\sum_{j_{2}=1}^{k_{Y} } {\lambda}_{Y,j_{2}}^{-1}\left(\sum_{j_{1}=1}^{\infty}\gamma_{j_{1}j_{2}}h_{j_{1}}\right) {\mathbf{\Phi}}_{Y,j_{2}}\\
			=&\sum_{j_{2}=1}^{k_{Y} } \hat{\lambda}_{Y,j_{2}}^{-1} \left(\sum_{j_{1}=1}^{\infty}\gamma_{j_{1}j_{2}}h_{j_{1}}\right)\mathcal{P}\hat{\mathbf{\Phi}}_{Y,j_{2}}-\sum_{j_{2}=1}^{k_{Y} }{\lambda}_{Y,j_{2}}^{-1} \left(\sum_{j_{1}=1}^{\infty}\gamma_{j_{1}j_{2}}h_{j_{1}}\right)\mathcal{P}\hat{\mathbf{\Phi}}_{Y,j_{2}}\\
			&+\sum_{j_{2}=1}^{k_{Y} }{\lambda}_{Y,j_{2}}^{-1} \left(\sum_{j_{1}=1}^{\infty}\gamma_{j_{1}j_{2}}h_{j_{1}}\right)\left(\mathcal{P}\hat{\mathbf{\Phi}}_{Y,j_{2}} -{\mathbf{\Phi}}_{Y,j_{2}}\right)\\
			:=&J_{121}+J_{122}. 
		\end{align*}
		For the first component,
		\begin{equation*}
			\begin{aligned}
				\|J_{121}\|_{\mu_{Y}}^2=&\sum_{j_{2}=1}^{k_{Y} }\left(\hat{\lambda}_{Y,j_{2}}^{-1} - {\lambda}_{Y,j_{2}}^{-1} \right)^2 \left(\sum_{j_1=1}^{\infty}h_{j_{1}} \gamma_{j_{1}j_{2}} \right) ^2\\
				\leqslant&\sum_{j_{2}=1}^{k_{Y} }\left(\hat{\lambda}_{Y,j_{2}}^{-1} - {\lambda}_{Y,j_{2}}^{-1} \right)^2 \sum_{j_{1}=1}^{\infty} \gamma_{j_{1}j_{2}}^2\\
				\leqslant&\sup_{j_{2}\leqslant k_{Y}} \left(\hat{\lambda}_{Y,j_{2}}^{-1} - {\lambda}_{Y,j_{2}}^{-1} \right)^2 \sum_{j_{2}=1}^{k_{Y} }j^{-2a_{Y}-2b_{Y}}\\
				=&O_{p}\left( \frac{k_{Y}^{4a_{Y}}}{n} \right)O(k_{Y}^{1-2a_{Y}-2b_{Y}}) =o_{p}\left( \frac{k^{a_{Y}+1}}{n} \right),
			\end{aligned}
		\end{equation*}
		where the last inequality is due to the second assertion in Lemma \ref{lem:eigvalue} and the assumption  \hyperref[asm:b1]{B.1}. For the second component, we have
		\begin{equation*}
			\begin{aligned}
				\mathbf{E}\|J_{122}\|_{\mu_{Y}}^2=&\mathbf{E}\left\|\sum_{j_{2}=1}^{k_{Y} }{\lambda}_{Y,j_{2}}^{-1} \left(\sum_{j_{1}=1}^{\infty}\gamma_{j_{1}j_{2}}h_{j_{1}}\right)\left(\mathcal{P}\hat{\mathbf{\Phi}}_{Y,j_{2}} -{\mathbf{\Phi}}_{Y,j_{2}}\right) \right\|_{\mu_{Y}}^2\\
				\leqslant& k_{Y}\sum_{j_{2}=1}^{k_{Y} } {\lambda}_{Y,j_{2}}^{-2} \left(\sum_{j_{1}=1}^{\infty}\gamma_{j_{1}j_{2}}h_{j_{1}}\right)^2 \mathbf{E}\left\|\mathcal{P}\hat{\mathbf{\Phi}}_{Y,j_{2}} -{\mathbf{\Phi}}_{Y,j_{2}} \right\|_{\mu_{Y}}^2\\
				\leqslant&C\frac{k_{Y}}{n}\sum_{j_{2}=1}^{k_{Y} } {\lambda}_{Y,j_{2}}^{-2} j_{2}^2\sum_{j_{1}=1}^{\infty}\gamma_{j_{1}j_{2}}^2\\
				=&O\left(\frac{k_{Y}^{4-2b_{Y}}}{n}\right) = o\left( \frac{k_{Y}^{a_{Y}+1}}{n} \right),
			\end{aligned}
		\end{equation*}
		where the last inequality follows from the part (d) of Theorem \ref{thm:meancov}, Lemma \ref{lem:eigap} and $b_{Y}>a_{Y}/2+1$. Together with the first component, this shows that $\|J_{12}\|^2_{\mu_{Y}}=o_p(k^{a_Y+1}/n)$. 			
			Combing this with \eqref{eq:fpc-2} to \eqref{eq:fpc-6}, we deduce that $\|J_{1}\|^2_{\mu_{Y}}=O_{p}\left({k^{(a_{Y}+1})}/{n} \right)=O_p\left(n^{-(2b_{Y}-1)/(a_{Y}+2b_{Y})}\right) $. 
			
			Note that $\|J_{2}\|_{\mu_{Y}}^2=\|J_{11}\|_{\mu_{Y}}^2=O_p\left(n^{-(2b_{Y}-1)/(a_{Y}+2b_{Y})}\right)$ and for $J_{3}$, 
			\begin{equation}\label{eq:fpc-7}
				\begin{aligned}
					\|J_{3}\|_{\mu_{Y}}^2=&\sum_{j_{2}=k_{Y}+1}^{\infty} {\lambda}_{Y,j_{2}}^{-2} \left( \sum_{j_{1}=1}^{\infty} h_{j_{1}} \gamma_{j_{1}j_{2}} \right)^2\\
					\leqslant& \sum_{j_{2}=k_{Y}+1}^{\infty} {\lambda}_{Y,j_{2}}^{-2} \sum_{j_{1}=1}^{\infty} \gamma_{j_{1}j_{2}}^2\\
					\leqslant& \sum_{j_{2}=k_{Y}+1}^{\infty} j_{2}^{2a_{Y}}j_{2}^{-2a_{Y}-2b_{Y}}=O\left(k_{Y}^{1-2b_{Y}} \right)=O\left(n^{-\frac{2b_{Y}-1}{a_{Y}+2b_{Y}}} \right) . 
				\end{aligned}	
			\end{equation}
		Consequently, $\lnorm(\mathcal{P} \hat{\mathbf{C}}_{Y,k_{Y}}^{-1})(\mathcal{P}\hat{\mathbf{C}}_{YX})-\mathbf{C}_{Y}^{-1}\mathbf{C}_{YX} \rnorm_{\mathfrak{B}(\mu_{X},\mu_{Y})}^2=O_{p}(n^{-(2b_{Y}-1)/(a_{Y}+2b_{Y})}) $ according to \eqref{eq:fpc-1}. Similarly, we can show $\lnorm(\mathcal{P} \hat{\mathbf{C}}_{X,k_{X}}^{-1})(\mathcal{P}\hat{\mathbf{C}}_{XY})-\mathbf{C}_{X}^{-1}\mathbf{C}_{XY} \rnorm_{\mathfrak{B}(\mu_{Y},\mu_{X})}^2=O_{p}(n^{-(2b_{X}-1)/(a_{X}+2b_{X})}) $. By the part (f) of Proposition \ref{prop:trans} and the fact that $\mathcal{P} \hat{\mathbf{C}}_{X,k_{X}}^{-1} \mathcal{P}\hat{\mathbf{C}}_{XY}=O_{p}(1)$, we further deduce that
			$$
			\begin{aligned}
				&\lnorm \mathbf{C}_{X}^{-1}\mathbf{C}_{XY}\mathbf{C}_{Y}^{-1}\mathbf{C}_{YX}-\mathcal{P}\hat{\mathbf{C}}_{X,k}^{-1}\hat{\mathbf{C}}_{XY}\hat{\mathbf{C}}_{Y,k}^{-1}\hat{\mathbf{C}}_{YX}\rnorm^{2}_{\mathfrak{B}(\mu_{X},\mu_{X})}\\=
				&\lnorm\mathbf{C}_{X}^{-1} \mathbf{C}_{XY}\mathbf{C}_{Y}^{-1} \mathbf{C}_{YX} -\mathcal{P} \hat{\mathbf{C}}_{X,k_{X}}^{-1} \mathcal{P}\hat{\mathbf{C}}_{XY}\mathcal{P} \hat{\mathbf{C}}_{Y,k_{Y}}^{-1} \mathcal{P}\hat{\mathbf{C}}_{YX}\rnorm_{\mathfrak{B}(\mu_{X},\mu_{X})}^{2} \\
				\leq & 2\lnorm\mathbf{C}_{X}^{-1} \mathbf{C}_{XY}\mathbf{C}_{Y}^{-1} \mathbf{C}_{YX} -\mathcal{P} \hat{\mathbf{C}}_{X,k_{X}}^{-1} \mathcal{P}\hat{\mathbf{C}}_{XY} \mathbf{C}_{Y}^{-1} \mathbf{C}_{YX} \rnorm_{\mathfrak{B}(\mu_{X},\mu_{X})}^{2} \\
				&+2\lnorm\mathcal{P} \hat{\mathbf{C}}_{X,k_{X}}^{-1} \mathcal{P}\hat{\mathbf{C}}_{XY} \mathbf{C}_{Y}^{-1} \mathbf{C}_{YX} -\mathcal{P} \hat{\mathbf{C}}_{X,k_{X}}^{-1} \mathcal{P}\hat{\mathbf{C}}_{XY}\mathcal{P} \hat{\mathbf{C}}_{Y,k_{Y}}^{-1} \mathcal{P}\hat{\mathbf{C}}_{YX}\rnorm_{\mathfrak{B}(\mu_{X},\mu_{X})}^{2} \\
				\leq & 2\lnorm\mathbf{C}_{X}^{-1} \mathbf{C}_{XY}-\mathcal{P} \hat{\mathbf{C}}_{X,k_{X}}^{-1} \mathcal{P}\hat{\mathbf{C}}_{XY}\rnorm^{2}_{\mathfrak{B}(\mu_{Y},\mu_{X})}\lnorm\mathbf{C}_{Y}^{-1} \mathbf{C}_{YX} \rnorm^{2}_{\mathfrak{B}(\mu_{X},\mu_{Y})} \\
				&+2\lnorm\mathcal{P} \hat{\mathbf{C}}_{X,k_{X}}^{-1} \mathcal{P}\hat{\mathbf{C}}_{XY}\rnorm^{2}_{\mathfrak{B}(\mu_{Y},\mu_{X})}\lnorm\mathbf{C}_{Y}^{-1} \mathbf{C}_{YX} -\mathcal{P} \hat{\mathbf{C}}_{Y,k_{Y}}^{-1} \mathcal{P}\hat{\mathbf{C}}_{YX}\rnorm_{\mathfrak{B}(\mu_{X},\mu_{Y})}^{2} \\
				=& O_{p}\left(\max \left\{n^{-\left(2 b_{X} -1\right) /\left( a_{X} +2 b_{X} \right)}, n^{-\left(2 b_{Y} -1\right) /\left( a_{Y} +2 b_{Y} \right)}\right\}\right).
			\end{aligned}
			$$
			Now we adopt the perturbation argument in \cite{bosq2000} to establish $$\|\mathcal{P} \hat{U}-U\|_{\mu_{X}}^2 =O_{p}\left(\max \left\{n^{-\left(2 b_{X} -1\right) /\left( a_{X} +2 b_{X} \right)}, n^{-\left(2 b_{Y} -1\right) /\left( a_{Y} +2 b_{Y} \right)}\right\}\right),$$ and complete the proof by 
			\begin{align*}
				&\|\mathcal{P}\hat{\mathbf{C}}_{Y,k_{Y}}^{-1}\hat{\mathbf{C}}_{YX}\mathcal{P}\hat{U}-\mathbf{C}_{Y}^{-1}\mathbf{C}_{YX}U \|^2_{\mu_{Y}}\\
				\leqslant&2\lnorm(\mathcal{P} \hat{\mathbf{C}}_{Y,k_{Y}}^{-1})(\mathcal{P}\hat{\mathbf{C}}_{YX})-\mathbf{C}_{Y}^{-1}\mathbf{C}_{YX} \rnorm^{2}_{\mathfrak{B}(\mu_{X},\mu_{Y})} \|\mathcal{P}\hat{U} \|_{\mu_{X}}^2+2\lnorm\mathbf{C}_{Y}^{-1}\mathbf{C}_{YX} \rnorm^2_{\mathfrak{B}(\mu_{X},\mu_{Y})}\|\mathcal{P}\hat{U}-U \|_{\mu_{X}}^2\\
				=&O_{p}\left(\max \left\{n^{-\left(2 b_{X} -1\right) /\left( a_{X} +2 b_{X} \right)}, n^{-\left(2 b_{Y} -1\right) /\left( a_{Y} +2 b_{Y} \right)}\right\}\right).
			\end{align*}\QED	
		\end{proof}
		\begin{proof}[of Theorem \ref{thm:ridge}]
			By the fact $\mathcal{P}\hat{\mathbf{id}}_{Y}=\mathbf{id}_{Y}$, it is sufficient to show 
			\begin{equation}\label{eq:ridge-1}
				\sup_{h\in \mathscr{T}(\mu_{X})\atop \|h\|_{\mu_{X}}=1 } \left\|(\mathcal{P}\hat{\mathbf{C}}_{Y}+\epsilon_{Y}{\mathbf{id}}_{Y})^{-1}(\mathcal{P}\hat{\mathbf{C}}_{YX})h-\mathbf{C}_{Y}^{-1}\mathbf{C}_{YX}h \right\|^{2}_{\mu_{Y} }=O_{p}(n^{-(2b_{Y}-1)/(a_{Y}+2b_{Y})} ), 
			\end{equation}
			as then the proof can be completed in analogy to the proof of Theorem \ref{thm:fpc}.
			We start with the following decomposition 
			\begin{equation}\label{eq:ridge-2}
				\begin{aligned}
					&(\mathcal{P}\hat{\mathbf{C}}_{Y}+\epsilon_{Y}{\mathbf{id}}_{Y})^{-1}(\mathcal{P}\hat{\mathbf{C}}_{YX})h-\mathbf{C}_{Y}^{-1}\mathbf{C}_{YX}h \\=&(\mathcal{P}\hat{\mathbf{C}}_{Y}+\epsilon_{Y}{\mathbf{id}}_{Y})^{-1}(\mathcal{P}\hat{\mathbf{C}}_{YX})h-({\mathbf{C}}_{Y}+\epsilon_{Y}{\mathbf{id}}_{Y})^{-1}(\mathcal{P}\hat{\mathbf{C}}_{YX})h\\
					&+({\mathbf{C}}_{Y}+\epsilon_{Y}{\mathbf{id}}_{Y})^{-1}(\mathcal{P}\hat{\mathbf{C}}_{YX})h-({\mathbf{C}}_{Y}+\epsilon_{Y}{\mathbf{id}}_{Y})^{-1}({\mathbf{C}}_{YX})h\\
					&+({\mathbf{C}}_{Y}+\epsilon_{Y}{\mathbf{id}}_{Y})^{-1}({\mathbf{C}}_{YX})h-\mathbf{C}_{Y}^{-1}\mathbf{C}_{YX}h \\
					:=& K_{1}+K_{2}+K_{3}.
				\end{aligned}
			\end{equation}
			
			Since 
			$$
			\begin{aligned}
				&\mathbf{E}\left\|({\mathbf{C}}_{Y}+\epsilon_{Y}{\mathbf{id}}_{Y})^{-1}({\mathbf{C}}_{Y} -\mathcal{P}\hat{\mathbf{C}}_{Y}) g \right\|_{\mu_{Y}}^2\\
				=&\mathbf{E}\left\|\sum_{j =1}^{\infty} \frac{1}{\lambda_{Y,j }+\epsilon_{Y}}g_{j }\lambda_{j }\mathbf{\Phi}_{Y,j }-\sum_{j_{1}=1}^{\infty} \frac{1}{\lambda_{Y,j_{1} }+\epsilon_{Y}}\sum_{j_{2}=1}^{\infty}\frac{1}{n}\sum_{i=1}^{n}\eta_{ij_{1}}\eta_{ij_{2}}g_{j_{2}}\mathbf{\Phi}_{Y,j_{1}} \right\|_{\mu_{Y}}^2\\
				=&\sum_{j_{1}=1}^{\infty}\left(\frac{1}{\lambda_{Y,j_{1} }+\epsilon_{Y}}\right)^{2}\mathbf{E}\left\{ \left(\sum_{j_{2}=1}^{\infty}\frac{1}{n}\sum_{i=1}^{n}\eta_{ij_{1}}\eta_{ij_{2}}g_{j_{2}} \right)-g_{j_{1}}\lambda_{Y,j_{1}} \right\}^2\\
				=&\sum_{j_{1}=1}^{\infty}\left(\frac{1}{\lambda_{Y,j_{1} }+\epsilon_{Y}}\right)^{2}\left\{\frac{2\lambda_{Y,j_{1}}^2g_{j_{1}}^2}{n}+\sum_{j_{2}=1}^{\infty}\frac{\lambda_{Y,j_{1}}\lambda_{Y,j_{2}}}{n}g_{j_{2}}^2 \right\}\\
				\lesssim&\frac{1}{n} \sum_{j=1}^{\infty}\frac{\lambda_{Y,j} }{(\lambda_{Y,j }+\epsilon_{Y})^2}=O_{p}\left(\epsilon_{Y}^{-(1+{1}/{a_{Y}})}/n \right),
			\end{aligned}
			$$
			for $g=\sum_{j=1}^{\infty}g_{j}\mathbf{\Phi}_{Y,j}$ with $\sum_{j=1}^{\infty}g_{j}^2 =1$ and due to the fact that $(\mathcal{P}\hat{\mathbf{C}}_{Y}+\epsilon_{Y}{\mathbf{id}}_{Y})^{-1}\hat{\mathbf{C}}_{YX} $ is bounded in probability, we have $\|K_{1}\|_{\mu_{Y}}^2= O_{p}\left(\epsilon_{Y}^{-(1+{1}/{a_{Y}})} /n\right)$.
			
			For the second term, 
			\begin{equation}\label{eq:ridge-K2}
				\begin{aligned}
					\mathbf{E}\|K_{2}\|_{\mu_{Y}}^2=&\mathbf{E} \left\|({\mathbf{C}}_{Y}+\epsilon_{Y}{\mathbf{id}}_{Y})^{-1}(\mathcal{P}\hat{\mathbf{C}}_{YX})h-({\mathbf{C}}_{Y}+\epsilon_{Y}{\mathbf{id}}_{Y})^{-1}({\mathbf{C}}_{YX})h \right\|_{\mu_{Y}}^{2}\\
					=&\mathbf{E}\left\|\sum_{j_2=1}^{\infty}\frac{1}{\lambda_{Y,j_{2}}+\epsilon_{Y}}\sum_{j_{1}=1}^{\infty}\left(\frac{1}{n}\sum_{i=1}^{n}\xi_{ij_{1}}\eta_{ij_{2}}-\gamma_{j_{1}j_{2}} \right)h_{j_{1}}\mathbf{\Phi}_{Y,j_{2}} \right\|_{\mu_{Y}}^{2}\\
					=&\sum_{j_2=1}^{\infty}\left(\frac{1}{\lambda_{Y,j_{2} }+\epsilon_{Y}}\right)^{2}\mathbf{E}\left\{\sum_{j_{1}=1}^{\infty}\left(\frac{1}{n}\sum_{i=1}^{n}\xi_{ij_{1}}\eta_{ij_{2}}-\gamma_{j_{1}j_{2}} \right)h_{j_{1}} \right\}^{2}\\
					\leqslant& 	\sum_{j_2=1}^{\infty}\left(\frac{1}{\lambda_{Y,j_{2} }+\epsilon_{Y}}\right)^{2}\mathbf{E}	\sum_{j_{1}=1}^{\infty}\left(\frac{1}{n}\sum_{i=1}^{n}\xi_{ij_{1}}\eta_{ij_{2}}-\gamma_{j_{1}j_{2}} \right)^2	\sum_{j_{1}=1}^{\infty}h_{j_{1}}^2\\
					\leqslant& C\sum_{j_2=1}^{\infty}\left(\frac{1}{\lambda_{Y,j_{2} }+\epsilon_{Y}}\right)^{2}\sum_{j_{1}=1}^{\infty}\frac{\lambda_{X,j_{1}}\lambda_{Y,j_{2}}}{n}\\
					\leqslant&\frac{2C}{n}\sum_{j_{1}=1}^{\infty}\sum_{j_{2}=1}^{\infty}\frac{\lambda_{X,j_{1}}\lambda_{Y,j_{2}}}{(\lambda_{Y,j_{2} }+\epsilon_{Y})^2}=O_{p}\left(\epsilon_{Y}^{-(1+{1}/{a_{Y}})}/n \right),
				\end{aligned}
			\end{equation}
			The second inequality in the above follows from the condition  \hyperref[asm:a2]{$\textup{A.2}$} and Cauchy--Schwarz inequality that together imply
					$$\mathbf{E}\left(\frac{1}{n}\sum_{i=1}^{n}\xi_{ij_{1}}\eta_{ij_{2}}-\gamma_{j_{1}j_{2}} \right)^2=\frac{\mathbf{E}(\xi_{ij_{1}}\eta_{ij_{2}})^2}{n}\leqslant\frac{1}{n}\sqrt{\mathbb{E}\xi_{ij_{1}}^{4}\eta_{ij_{2}}^{4}}\leqslant\frac{C}{n}\lambda_{X,j_{1}}\lambda_{Y,j_{2}}.	 $$

			Finally, 
			$$
			\begin{aligned}
				\|K_{3}\|_{\mu_{Y}}^{2}=&\|({\mathbf{C}}_{Y}+\epsilon_{Y}{\mathbf{id}}_{Y})^{-1}({\mathbf{C}}_{YX})h-\mathbf{C}_{Y}^{-1}\mathbf{C}_{YX}h \|_{\mu_{Y}}^2\\
				=&\left\| \sum_{j_{1}=1}^{\infty}\sum_{j_{2}=1}^{\infty}\left(\frac{1}{\lambda_{Y,j_{2}}+\epsilon_{Y}}-\frac{1}{\lambda_{Y,j_{2}}} \right)\gamma_{j_{1}j_{2}}h_{j_{1}}\mathbf{\Phi}_{Y,j_{2}} \right\|_{\mu_{Y}}^2\\
				=&\sum_{j_{2}=1}^{\infty} \left(\frac{1}{\lambda_{Y,j_{2}}+\epsilon_{Y}}-\frac{1}{\lambda_{Y,j_{2}}} \right)^2\left(\sum_{j_{1}=1} ^{\infty}\gamma_{j_{1}j_{2}}h_{j_{1}} \right)^{2}\\
				\leqslant&\sum_{j_{2}=1}^{\infty}\frac{\epsilon_{Y}^2}{(\lambda_{Y,j_{2}}+\epsilon_{Y} )^2\lambda_{Y,j_{2}}^2 }\sum_{j_{1}=1} ^{\infty}\gamma_{j_{1}j_{2}}^2\\
				\leqslant& C\epsilon_{Y}^2 \sum_{j_{2}=1}^{\infty}\frac{j_{2}^{-2b_Y}}{(\lambda_{Y,j_{2}}+\epsilon_{Y} )^2}=O\left(\epsilon_{Y}^{({2b_{Y}-1)}/{a_{Y}}} \right). 
			\end{aligned}
			$$
			When $\epsilon_{Y}\asymp n^{-a_{Y}/(a_{Y}+2b_{Y})}$, we have $\epsilon_{Y}^{-(1+{1}/{a_{Y}})}/n =\epsilon_{Y}^{({2b_{Y}-1)}/{a_{Y}}}$, and then $K_{1},K_{2}$ and $K_{3}$ are of the same order $n^{-(2b_{Y}-1)/(a_{Y}+2b_{Y})} $. This establishes  \eqref{eq:ridge-1} and further
			$$\lnorm (\mathcal{P}\hat{\mathbf{C}}_{Y}+\epsilon_{Y}{\mathbf{id}}_{Y})^{-1}(\mathcal{P}\hat{\mathbf{C}}_{YX})-\mathbf{C}_{Y}^{-1}\mathbf{C}_{YX} \rnorm_{\mathfrak{B}(\mu_{X},\mu_{Y})} ^{2}=O_{p}(n^{-(2b_{Y}-1)/(a_{Y}+2b_{Y})} ).$$
			\QED
		\end{proof}
	
	\begin{remark}\label{rem:minimax} 
		In the equation (7) in \cite{lian2014} and the proof of Theorem 2 in \cite{zhou2020}, both about canonical correlation analysis for Euclidean functional data,  the following unjustified inequality
		$$\mathbf{E} \left\|({\mathbf{C}}_{Y}+\epsilon_{Y}{\mathbf{id}}_{Y})^{-1}(\hat{\mathbf{C}}_{YX})h-({\mathbf{C}}_{Y}+\epsilon_{Y}{\mathbf{id}}_{Y})^{-1}({\mathbf{C}}_{YX})h \right\|^{2}\leqslant\frac{1}{n}\mathbf{E}\left\|({\mathbf{C}}_{Y}+\epsilon_{Y}{\mathbf{id}}_{Y})^{-1}({\mathbf{C}}_{YX})h \right\|^{2},$$
		translated from  \cite{lian2014} and \cite{zhou2020} into our language and notation,  taking into account that parallel transport is not needed for the Euclidean case, 
		is used to derive $$\mathbf{E}\|K_{2}\|^2\leqslant\frac{1}{n}\sum_{j_{1}=1}^{\infty}\sum_{j_{2}=1}^{\infty}\frac{\gamma_{j_{1}j_{2}}^{2} }{(\lambda_{Y,j_{2} }+\epsilon_{Y})^2}. $$
		However, this seems incorrect. For a counterexample, take $(\xi_{j_{1}},\eta_{j_{2}})$ to follow a joint Gaussian distribution with zero mean and covariance matrix $$\begin{pmatrix}
			\lambda_{X,j_{1}},\gamma_{j_{1}j_{2}}\\\gamma
			_{j_{1}j_{2}},\lambda_{Y,j_{2}}
		\end{pmatrix}. $$ Let $h=\mathbf{\Phi}_{X,1}$. Then 
		$\mathbf{E}\|K_{2}\|^2=\sum_{j_{1}=1}^{\infty}\sum_{j_{2}=1}^{\infty}({\lambda_{X,j_1}\lambda_{Y,j_{2}}+2\gamma_{j_1j_{2}}^2})/{(\lambda_{Y,j_{2} }+\epsilon_{Y})^2} $, {and under the condition \hyperref[asm:b2']{$\textup{B.2}^{'}$}, the term  $n^{-1}\sum_{j_1=1}^\infty\sum_{j_{2}=1}^{\infty}{\lambda_{X,j_1}\lambda_{Y,j_{2}}}/{(\lambda_{Y,j_{2} }+\epsilon_{Y})^2} $ is asymptotically strictly larger than $\frac{1}{n}\sum_{j_{1}=1}^{\infty}\sum_{j_{2}=1}^{\infty}\frac{\gamma_{j_{1}j_{2}}^{2} }{(\lambda_{Y,j_{2} }+\epsilon_{Y})^2}$}. 
		Consequently, there is a gap in the proof of Theorem 1 of \cite{lian2014}, which however may be filled by using our arguments in the above. In contrast, the proof of Theorem 2  in both \cite{lian2014} and \cite{zhou2020} about the convergence rate with respect to an RHKS (reproducing kernel Hilbert space) norm, may not be fixed in the same way. To see this, let $\mathbb{G}_{X}$ be the RKHS generated by the kernel $\mathbf{C}_{X}$. Then a bound on $\mathbf{E}\|K_{2}\|^{2}_{\mathbb{G}_{X}}$ is given by
		$$\begin{aligned}
			\mathbf{E}\|K_{2}\|^{2}_{\mathbb{G}_{X}}\leqslant&\sum_{j_{1}=1}^{\infty}\sum_{j_2=1}^{\infty}\frac{1}{\lambda_{X,j_{1}}}\left(\frac{1}{\lambda_{Y,j_{2} }+\epsilon_{Y}}\right)^{2}\mathbf{E}	\left(\frac{1}{n}\sum_{i=1}^{n}\xi_{ij_{1}}\eta_{ij_{2}}-\gamma_{j_{1}j_{2}} \right)^2	\\
			\leqslant&\frac{1}{n}\sum_{j_{1}=1}^{\infty}\sum_{j_2=1}^{\infty}\frac{\lambda_{X,j_{1}}\lambda_{Y,{j_{2}}} }{\lambda_{X,j_{1}}}\left(\frac{1}{\lambda_{Y,j_{2} }+\epsilon_{Y}}\right)^{2},
		\end{aligned}$$
		which diverges with respect to $j_{1}$, contrasting the finite bound made in those works. 
	\end{remark}

		\section{Ancillary Lemmas}
		\begin{lemma}\label{lem:eigvalue}
			Suppose that $(\hat{\lambda}_{X,j},\mathbf{\hat{\Phi}}_{X,j})$ and $(\hat{\lambda}_{Y,j},\mathbf{\hat{\Phi}}_{Y,j}) $ are the eigen-systems of $\mathbf{\hat{C}}_{X}$ and $\mathbf{\hat{C}}_{Y} $, respectively. Under the assumptions \hyperref[asm:a1]{\textup{A.1}}--\hyperref[asm:a2]{\textup{A.2}}, we have 
			$$\sup_{j\leqslant n^{1/2a_{X}}}j^{-2a_{X}}\left|\hat{\lambda}_{X,j}^{-{1}}-\lambda_{X,j}^{-1} \right|=O_{p}(1/\sqrt{n}) \text{ and }\sup_{j\leqslant n^{1/2a_{Y}}}j^{-2a_{Y}}\left|\hat{\lambda}_{Y,j}^{-{1}}-\lambda_{Y,j}^{-1} \right|=O_{p}(1/\sqrt{n}).  $$
		\end{lemma}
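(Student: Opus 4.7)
The plan is to combine an elementary perturbation identity for reciprocal eigenvalues with the uniform bound on $|\hat\lambda_{X,j}-\lambda_{X,j}|$ furnished by Theorem \ref{thm:meancov}(c), together with the polynomial lower bound $\lambda_{X,j}\gtrsim j^{-a_X}$ that is in force under Assumption \hyperref[asm:b1]{B.1}. The starting point is the identity
$$
\hat\lambda_{X,j}^{-1}-\lambda_{X,j}^{-1}=\frac{\lambda_{X,j}-\hat\lambda_{X,j}}{\hat\lambda_{X,j}\,\lambda_{X,j}},
$$
which reduces matters to (i) controlling the numerator uniformly in $j$ and (ii) bounding the denominator from below uniformly over $j\leq n^{1/(2a_X)}$.

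First I would obtain a uniform estimate $\sup_{j\geq1}|\hat\lambda_{X,j}-\lambda_{X,j}|=O_p(n^{-1/2})$. By Proposition \ref{prop:trans}(a) the parallel transport $\mathcal{P}_{\hat\mu_X}^{\mu_X}$ is unitary at each $t$, hence $\mathcal{P}_{\mathfrak{B}(\hat\mu_X,\hat\mu_X)}^{\mathfrak{B}(\mu_X,\mu_X)}\hat{\mathbf C}_X$ and $\hat{\mathbf C}_X$ share the same spectrum. Theorem \ref{thm:meancov}(c) then gives $\lnorm\mathcal{P}_{\mathfrak{B}(\hat\mu_X,\hat\mu_X)}^{\mathfrak{B}(\mu_X,\mu_X)}\hat{\mathbf C}_X-\mathbf C_X\rnorm_{\mathfrak{B}(\mu_X,\mu_X)}=O_p(n^{-1/2})$ (the squared operator norm being $O_p(n^{-1})$), and Weyl's inequality for compact self-adjoint operators delivers the desired uniform control. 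Next, on the range $j\leq n^{1/(2a_X)}$ Assumption \hyperref[asm:b1]{B.1} yields $\lambda_{X,j}\geq c\,j^{-a_X}\geq c\,n^{-1/2}$, so that $|\hat\lambda_{X,j}-\lambda_{X,j}|=o_p(\lambda_{X,j})$ uniformly in $j$ across this range. Consequently, on an event $\mathcal{E}_n$ whose probability tends to one, we have $\hat\lambda_{X,j}\geq\lambda_{X,j}/2\geq (c/2)\,j^{-a_X}$ uniformly over $j\leq n^{1/(2a_X)}$.

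Plugging these two pieces back into the identity, on the event $\mathcal{E}_n$,
$$
|\hat\lambda_{X,j}^{-1}-\lambda_{X,j}^{-1}|\leq\frac{2\,|\hat\lambda_{X,j}-\lambda_{X,j}|}{\lambda_{X,j}^{2}}\lesssim n^{-1/2}\,j^{2a_X},
$$
so that multiplying by $j^{-2a_X}$ and taking the supremum over $j\leq n^{1/(2a_X)}$ gives the claimed rate $O_p(n^{-1/2})$. The analogous argument with $a_X$ replaced by $a_Y$ and $X$ replaced by $Y$ handles the second assertion.

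The main obstacle is the uniform-in-$j$ spectral bound: one has to be careful that Weyl's inequality is being applied simultaneously across all indices rather than only at a single fixed $j$, and that the smallest eigenvalue one needs to invert, namely $\lambda_{X,j}$ at $j\asymp n^{1/(2a_X)}$, remains of order $n^{-1/2}$ — precisely matching (and not dominated by) the perturbation size $O_p(n^{-1/2})$. This is why the truncation level in the supremum is taken exactly at $n^{1/(2a_X)}$; pushing beyond it would destroy the multiplicative control needed to keep $\hat\lambda_{X,j}$ away from zero. The remaining steps are routine arithmetic once these two inputs are in place.
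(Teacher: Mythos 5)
Your overall strategy is the same as the paper's: reduce the reciprocal difference to the eigenvalue perturbation $|\hat\lambda_{X,j}-\lambda_{X,j}|$, control that perturbation uniformly in $j$ through the transported operator-norm bound of Theorem \ref{thm:meancov}(c) (via Weyl's inequality), and combine it with the polynomial lower bound $\lambda_{X,j}\gtrsim j^{-a_X}$ (which, as you note, comes from Assumption \hyperref[asm:b1]{B.1}, not from A.1--A.2 as the lemma's statement suggests) together with the restriction $j\le n^{1/(2a_X)}$. The difference is that the paper expands $x^{-1}-x_0^{-1}=-x_0^{-2}(x-x_0)+\hat x^{-3}(x-x_0)^2$, so its leading term carries only the factor $\lambda_{X,j}^{-2}$ and requires no lower bound on $\hat\lambda_{X,j}$, whereas your exact identity $(\lambda_{X,j}-\hat\lambda_{X,j})/(\hat\lambda_{X,j}\lambda_{X,j})$ makes the entire bound hinge on keeping $\hat\lambda_{X,j}$ bounded below.

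That is where the gap lies. From $\sup_j|\hat\lambda_{X,j}-\lambda_{X,j}|=O_p(n^{-1/2})$ and $\lambda_{X,j}\ge c\,n^{-1/2}$ on the range $j\le n^{1/(2a_X)}$ you cannot conclude $|\hat\lambda_{X,j}-\lambda_{X,j}|=o_p(\lambda_{X,j})$ uniformly, nor that the event $\mathcal{E}_n=\{\hat\lambda_{X,j}\ge\lambda_{X,j}/2 \text{ for all } j\le n^{1/(2a_X)}\}$ has probability tending to one: an $O_p(n^{-1/2})$ quantity is only bounded by $M_\epsilon n^{-1/2}$ with $M_\epsilon$ arbitrary and in general larger than $c/2$, so at the boundary $j\asymp n^{1/(2a_X)}$, where $\lambda_{X,j}$ is itself of order $n^{-1/2}$, the perturbation is of the same order as the eigenvalue and the claimed domination fails --- indeed your closing paragraph ("precisely matching, and not dominated by") contradicts the $o_p$ statement your argument relies on. Without $\mathcal{E}_n$, $\hat\lambda_{X,j}$ at the boundary index is not prevented from being arbitrarily close to zero on an event of non-vanishing probability, and then $j^{-2a_X}|\hat\lambda_{X,j}^{-1}-\lambda_{X,j}^{-1}|$ is not controlled at rate $n^{-1/2}$. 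Two repairs are available: (i) follow the paper and split off the linear term $-\lambda_{X,j}^{-2}(\hat\lambda_{X,j}-\lambda_{X,j})$, which already yields the claimed $O_p(j^{2a_X}n^{-1/2})$ with no inversion of $\hat\lambda_{X,j}$, relegating the delicate lower bound to the quadratic remainder $\hat x^{-3}(\hat\lambda_{X,j}-\lambda_{X,j})^2=O_p(j^{3a_X}/n)$, which is then dominated on $j\le n^{1/(2a_X)}$; or (ii) invoke the eigenvalue bound as literally stated in Theorem \ref{thm:meancov}(c), $\sup_k|\hat\lambda_k-\lambda_k|=O_p(n^{-1})$, which would genuinely dominate $\lambda_{X,j}\gtrsim n^{-1/2}$ and legitimize $\mathcal{E}_n$; but your Weyl-based derivation delivers only the $O_p(n^{-1/2})$ rate, and with that rate the step as written does not go through.
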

		\begin{proof}[of Lemma \ref{lem:eigvalue}]
			By Taylor expansion, for any real numbers $x$ and $x_0$,
			$$
			x^{-1 }-x_{0}^{-1 }=- x_{0}^{- 2}\left(x-x_{0}\right)+ \hat{x}^{-3}\left(x-x_{0}\right)^{2},
			$$
			where $\hat{x}$ is some value between $x$ and $x_{0}$. Using this fact, we obtain
			$$
			\begin{aligned}
				\left|\hat{\lambda}_{X,j}^{-1 }-\lambda_{X,j}^{-1 }\right|=& \lambda_{X,j}^{- 2} O_{p}\left(\left|\hat{\lambda}_{X,j}-\lambda_{X,j}\right|\right) +\left(\lambda_{X,j}-O_{p}\left(\lnorm \Delta_{X} \rnorm \right)\right)^{-3} O_{p}\left(\left|\hat{\lambda}_{X,j}-\lambda_{X,j}\right|^{2}\right)\\
				=&O_{p}(j^{2a_{X}}/\sqrt{n} )+ O_{p}(j^{3a_{X}}/n ) ,
			\end{aligned}
			$$
			where $\Delta_{X}=\mathcal{P}_{\mathfrak{B(\hat\mu_{X},\hat\mu_{X})} }^{\mathfrak{B(\mu_{X},\mu_{X} )} }\mathbf{\hat C}_{X} -\mathbf{C}_{X} $ and the last equality follows from the part (d) of Theorem \ref{thm:meancov}. Under the condition $j\leqslant n^{1/2a_{X}} $, we have $j^{3a_{X}}/n =o(j^{2a_{X}}/\sqrt{n} ) $ and the first assertion follows. The second assertion is proved by a similar argument.\QED
		\end{proof}
		\begin{lemma}\label{lem:eigap}
			For $\eta_{X,k}=\inf_{j\neq k}|\lambda_{X,k}- \lambda_{X,j}| $ and $\eta_{Y,k}=\inf_{j\neq k}|\lambda_{Y,k}- \lambda_{Y,j}| $, under the  assumption \hyperref[asm:b1]{\textup{B.1}}, one has
			$$\mathbf{P}\left(\frac{1}{2}\eta_{X,k_{X}}>\lnorm \mathcal{P}\hat{\mathbf{C}}_{X}-\mathbf{C}_{X}\rnorm \right) \rightarrow 1\qquad\text{and}\qquad\mathbf{P}\left(\frac{1}{2}\eta_{Y,k_{Y}}>\lnorm \mathcal{P}\hat{\mathbf{C}}_{Y}-\mathbf{C}_{Y}\rnorm \right)\rightarrow 1.$$
		\end{lemma}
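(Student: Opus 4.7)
The plan is to combine the quantitative eigengap lower bound supplied by Assumption \hyperref[asm:b1]{\textup{B.1}} with the operator-norm convergence rate from Theorem \ref{thm:meancov}(c) and the truncation rate from Assumption \hyperref[asm:b2]{\textup{B.2}}, and show that the former dominates the latter precisely when $b_X>a_X/2+1$ (and similarly for $Y$). Since both claims are proved identically, I would only write out the argument for the $X$-component.

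First I would derive a lower bound on $\eta_{X,k_X}$. Assumption \hyperref[asm:b1]{\textup{B.1}} gives $\lambda_{X,j}-\lambda_{X,j+1}\geqslant c\,j^{-a_X-1}$ for every $j\geqslant 1$, and because the $\lambda_{X,j}$ are strictly decreasing, telescoping yields $|\lambda_{X,k_X}-\lambda_{X,j}|\geqslant c\,k_X^{-a_X-1}$ for all $j\neq k_X$ (the worst case being $j=k_X+1$). Hence
\begin{equation*}
\eta_{X,k_X}\;\geqslant\; c\,k_X^{-a_X-1}.
\end{equation*}

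Next I would invoke Theorem \ref{thm:meancov}(c) to get $\lnorm\mathcal{P}\hat{\mathbf{C}}_X-\mathbf{C}_X\rnorm=O_p(n^{-1/2})$, and combine it with the tuning-parameter rate $k_X\asymp n^{1/(a_X+2b_X)}$ of Assumption \hyperref[asm:b2]{\textup{B.2}} to obtain $k_X^{-a_X-1}\asymp n^{-(a_X+1)/(a_X+2b_X)}$. The key algebraic observation is that $b_X>a_X/2+1$ is exactly equivalent to $(a_X+1)/(a_X+2b_X)<1/2$, so
\begin{equation*}
\frac{\lnorm\mathcal{P}\hat{\mathbf{C}}_X-\mathbf{C}_X\rnorm}{\eta_{X,k_X}}\;=\;O_p\!\left(n^{-1/2+(a_X+1)/(a_X+2b_X)}\right)\;=\;o_p(1),
\end{equation*}
which immediately yields $\mathbf{P}\{\tfrac{1}{2}\eta_{X,k_X}>\lnorm\mathcal{P}\hat{\mathbf{C}}_X-\mathbf{C}_X\rnorm\}\to 1$. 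The analogous computation with $(a_Y,b_Y,k_Y)$ in place of $(a_X,b_X,k_X)$ gives the $Y$-statement.

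There is no substantial obstacle here: the only subtle point is bookkeeping which assumptions are in force. The lemma is stated under Assumption \hyperref[asm:b1]{\textup{B.1}}, but the conclusion really needs the decay rate $k_X\asymp n^{1/(a_X+2b_X)}$ together with $b_X>a_X/2+1$ from Assumption \hyperref[asm:b2]{\textup{B.2}}, so in writing the proof I would be explicit that both B.1 and B.2 are implicitly being used to pin down the orders of $\eta_{X,k_X}$ and $k_X$. Everything else is a one-line comparison of polynomial rates.
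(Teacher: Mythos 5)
Your argument is correct and is essentially the paper's own proof: both rest on the eigengap lower bound $\eta_{X,k_X}\gtrsim k_X^{-a_X-1}$ from Assumption \hyperref[asm:b1]{\textup{B.1}}, the rate $\lnorm\mathcal{P}\hat{\mathbf{C}}_X-\mathbf{C}_X\rnorm=O_p(n^{-1/2})$ from Theorem \ref{thm:meancov}, and the observation that $n^{1/2}k_X^{-a_X-1}\to\infty$ when $k_X\asymp n^{1/(a_X+2b_X)}$ with $b_X>a_X/2+1$. Your bookkeeping remark is also accurate: although the lemma is stated only under \textup{B.1}, the tuning rate for $k_X$ (and hence Assumption \hyperref[asm:b2]{\textup{B.2}}) is indeed used implicitly, exactly as in the paper's proof.
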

		\begin{proof}[of Lemma \ref{lem:eigap}]
			According to the part (d) of Theorem \ref{thm:meancov}, we have $\lnorm \mathcal{P}\hat{\mathbf{C}}_{X}-\mathbf{C}_{X}\rnorm=O_{p}(n^{-1/2})$. Under the assumption \hyperref[asm:b1]{B.1}, $\eta_{X,k_{X}}\leqslant k_{X}^{-a_{X}-1}$ and $n^{1/2}k_{X}^{-a_{X}-1}\rightarrow\infty$, which implies that  $\mathbf{P}\left(\frac{1}{2}\eta_{X,k_{X}}>\lnorm \mathcal{P}\hat{\mathbf{C}}_{X}-\mathbf{C}_{X}\rnorm \right)\rightarrow 1$. The second statement follows analogously.
			\QED
		\end{proof}

	\end{appendices}

	\bigskip
	\bibliographystyle{Chicago}
	\bibliography{CCA-JASA-5_18}

\end{document}